\newtheorem{theorem}{Theorem}[section]
\newtheorem{lemma}{Lemma}[section]
\newtheorem{proposition}{Proposition}[section]
\newtheorem{remark}{Remark}[section]
\newcommand{\qed}{\hfill $\Box$ \bigbreak}
\newenvironment{proof}{\noindent {\bf Proof.}}{\qed}
\newcommand{\remove}[1]{}
\begin{document}

\baselineskip  0.2in 
\parskip     0.1in 
\parindent   0.0in 

\title{{\bf Byzantine Gathering in Networks}\footnote{Partially supported by the European Regional Development Fund (ERDF) and the Picardy region under Project TOREDY.}}
\date{}
\newcommand{\inst}[1]{$^{#1}$}

\author{
S\'{e}bastien Bouchard\inst{1},
Yoann Dieudonn\'e\inst{1},
Bertrand Ducourthial\inst{2}\\
\inst{1} Laboratoire MIS \& Universit\'{e} de Picardie Jules Verne, Amiens, France.\\
\inst{2} Heudiasyc, CNRS \& Universit\'{e} de Technologie de Compi\`egne, Compi\`egne, France.\\
}

\maketitle
\vspace{-1cm}
\begin{abstract}

This paper investigates an open problem introduced in \cite{DieudonnePP14}. Two or more mobile agents start from different nodes of a network and have to accomplish the task of gathering which consists in getting all together at the same node at the same time. An adversary chooses the initial nodes of the agents and assigns a different positive integer (called label) to each of them. Initially, each agent knows its label but does not know the labels of the other agents or their positions relative to its own. Agents move in synchronous rounds and can communicate with each other only when located at the same node. Up to $f$ of the agents are Byzantine. A Byzantine agent can choose an arbitrary port when it moves, can convey arbitrary information to other agents and can change its label in every round, in particular by forging the label of another agent or by creating a completely new one. 

{\it What is the minimum number $\mathcal{M}$ of good agents that guarantees deterministic gathering of all of them, with termination?}


We provide exact answers to this open problem by considering the case when the agents initially know the size of the network and the case when they do not. In the former case, we prove $\mathcal{M}=f+1$ while in the latter, we prove $\mathcal{M}=f+2$. More precisely, for networks of known size, we design a deterministic algorithm gathering all good agents in any network provided that the number of good agents is at least $f+1$.  For networks of unknown size, we also design a deterministic algorithm ensuring the gathering of all good agents in any network but provided that the number of good agents is at least $f+2$. Both of our algorithms are optimal in terms of required number of good agents, as each of them perfectly matches the respective lower bound on $\mathcal{M}$ shown in \cite{DieudonnePP14}, which is of $f+1$ when the size of the network is known and of $f+2$ when it is unknown.


Perhaps surprisingly, our results highlight an interesting feature when put in perspective with known results concerning a relaxed variant of this problem in which the Byzantine agents cannot change their initial labels. Indeed under this variant $\mathcal{M}=1$ for networks of known size and $\mathcal{M}=f+2$ for networks of unknown size. Following this perspective, it turns out that when the size of the network is known, the ability for the Byzantine agents to change their labels significantly impacts the value of $\mathcal{M}$. However, the relevance for $\mathcal{M}$ of such an ability completely disappears in the most general case where the size of the network is unknown, as $\mathcal{M}=f+2$ regardless of whether Byzantine agents can change their labels or not.



\vspace{2ex}

\noindent {\bf Keywords:} rendezvous, deterministic algorithm, mobile agent, Byzantine fault.
\end{abstract}

\vfill

\vfill

\thispagestyle{empty}
\setcounter{page}{0}
\pagebreak

\section{Introduction}

\subsection{Context}Gathering is one of the most fundamental tasks in the field of distributed and mobile systems in the sense that, the ability to gather is in fact a building block to achieve more complex cooperative works. Loosely speaking, the task of gathering consists in ensuring that a group of mobile entities, initially located in different places, ends up meeting at the same place at the same time. These mobile entities, hereinafter called agents, can vary considerably in nature ranging from human beings and robots to animals and software agents. The environment in which the agents are supposed to evolve can vary considerably as well: it may be a terrain, a network modeled as a graph, a three-dimensional space, etc. We can also consider that the sequences of instructions followed by the agents in order to ensure their gathering are either deterministic or randomized. 

In this paper, we consider the problem of gathering in a deterministic way in a network modeled as a graph. Thus, the agents initially start from different nodes of the graph and have to meet at the same node by applying deterministic rules. We assume that among the agents, some are Byzantine. A Byzantine agent is an agent subject to unpredictable and arbitrary faults. For instance such an agent may choose to never stop or to never move. It may also convey arbitrary information to the other agents, etc. The case of Byzantine fault is very interesting because it is the worst fault that can occur to agents. As a consequence, gathering in such a context is challenging.

\subsection{Model and problem}
\label{sub:subm}

The distributed system considered in this paper consists of a group of mobile agents that are initially placed by an adversary at arbitrary but distinct nodes of a network modeled as a finite, connected, undirected graph $G=(V,E)$. We assume that $|V|=n$. In the sequel $n$ is also called the size of the network. Two assumptions are made about the labelling of the two main components of the graph that are nodes and edges. The first assumption is that nodes are anonymous i.e., they do not have any kind of labels or identifiers allowing them to be distinguished from one another. The second assumption is that edges incident to a node $v$ are locally ordered with a fixed port numbering ranging from $0$ to $deg(v)-1$ where $deg(v)$ is the degree of $v$. Therefore, each edge has exactly two port numbers, one for each of both nodes it links. The port numbering is not supposed to be consistent: a given edge $(u,v)\in E$ may be the $i$-th edge of $u$ but the $j$-th edge of $v$, where $i\ne j$. These two assumptions are not fortuitous. The primary motivation of the first one is that if each node could be identified by a label, gathering would become quite easy to solve as it would be tantamount to explore the graph (via e.g. a breadth-first search) and then meet in the node having the smallest label. While the first assumption is made so as to avoid making the problem trivial, the second assumption is made in order to avoid making the problem impossible to solve. Indeed, in the absence of a way allowing an agent to distinguish locally the edges incident to a node, gathering could be proven as impossible to solve deterministically in view of the fact that some agents could be precluded from traversing some edges and visit some parts of the graph.

An adversary chooses the starting nodes of the agents. The starting nodes are chosen so that there are not two agents sharing initially the same node. At the beginning, an agent has a little knowledge about its surroundings: it does not know either the graph topology, or the number of other agents, or the positions of the others relative to its own. Still regarding agents' knowledge, we will study two scenarios: one in which the agents initially know the parameter $n$ and one in which the agents do not initially know this parameter or even any upper bound on it.

Time is discretized into an infinite sequence of rounds. In each round, every agent, which has been previously woken up (this notion is detailed in the next paragraph), is allowed to stay in place at its current node or to traverse an edge according to a deterministic algorithm. The algorithm is the same for all agents: only the input, whose nature is specified further in the subsection, varies among agents.

Before being woken up, an agent is said to be dormant. A dormant agent may be woken up only in two different ways: either by the adversary that wakes some of the agents at possibly different rounds, or as soon as another agent enters the starting node of the dormant agent. We assume that the adversary wakes up at least one agent. 

When an agent is woken up in a round $r$, it is told the degree of its starting node. As mentioned above, in each round $r'\geq r$, the executed algorithm can ask the agent to stay idle or to traverse an edge. In the latter case, this takes the following form: the algorithm ask the agent, located at node $u$, to traverse the edge having port number $i$, where $0 \leq i < deg(u)-1$. Let us denote by $(u,v)\in E$ this traversed edge. In round $r'+1$, the agents enters node $v$: it then learns the degree $deg(v)$ as well as the local port number $j$ of $(u,v)$ at node $v$ (recall that in general $i\ne j$). An agent cannot leave any kind of tokens or markers at the nodes it visits or the edges it traverses.

In the beginning, the adversary also assigns a different positive integer (called label) to each agent. Each agent knows its label but does not know the labels of the other agents. When several agents are at the same node in the same round, they see the labels of the other agents and can exchange all the information they currently have. This exchange is done in a ``shouting'' mode in one round: all the exchanged information becomes common knowledge for agents that are currently at the node. On the other hand when two agents are not at the same node in the same round they cannot see or talk to each other: in particular, two agents traversing simultaneously the same edge but in opposite directions, and thus crossing each other on the same edge, do not notice this fact. In every round, the input of the algorithm executed by an agent $a$ is made up of the label of agent $a$ and the up-to-date memory of what agent $a$ has seen and learnt since its waking up. Note that in the absence of a way of distinguishing the agents, the gathering problem would have no deterministic solution in some graphs. This is especially the case in a ring in which at each node the edge going clockwise has port number $0$ and the edge going anti-clockwise has port $1$: if all agents are woken up in the same round and start from different nodes, they will always have the same input and will always follow the same deterministic rules leading to a situation where the agents will always be at distinct nodes no matter what they do.
 
Within the team, it is assumed that up to $f$ of the agents are Byzantine. The parameter $f$ is known to all agents. A Byzantine agent has a high capacity of nuisance: it can choose an arbitrary port when it moves, can convey arbitrary information to other agents and can change its label in every round, in particular by forging the label of another agent or by creating a completely new one. All the agents that are not Byzantine are called good.  We consider the task of $f$-Byzantine gathering which is stated as follows. The adversary wakes up at least one good agent and all good agents must eventually be in the same node in the same round, simultaneously declare termination and stop, provided that there are at most $f$ Byzantine agents. Regarding this task, it is worth mentioning that we cannot require the Byzantine agents to cooperate as they may always refuse to be with some agents. Thus, gathering all good agents with termination is the strongest requirement we can make in such a context.

What is the minimum number $\mathcal{M}$ of good agents that guarantees $f$-Byzantine gathering? 

At first glance, the question might appear as not being really interesting since, after all, the good agents might always be able to gather in some node, regardless of the number of Byzantine agents evolving in the graph. However, this is not the case as pointed out by the study that introduced this question in \cite{DieudonnePP14}. More specifically, when this size is initially known to the agents, the authors of this study described a deterministic algorithm gathering all good agents in any network provided that there are at $2f+1$ of them, and gave a lower bound of $f+1$ on $\mathcal{M}$ by showing that if the number of good agents is not larger than $f$, then there are some graphs in which the good agents are not able to gather deterministically with termination. When the size of the network is unknown, they did a similar thing but with different bounds: they gave an algorithm working for a team including at least $4f+2$ good agents, and showed a lower bound of $f+2$ on $\mathcal{M}$. However, the question of what the tight bounds are was left as an open problem.

\subsection{Our results}

In this paper, we solve this open problem by proving that the lower bounds of $f+1$ and $f+2$ on $\mathcal{M}$, shown in \cite{DieudonnePP14}, are actually also upper bounds respectively when the size of the network is known and when it is unknown. More precisely, we design deterministic algorithms allowing to gather all good agents provided that the number of good agents is at least $f+1$ when the size of the network is initially known to agents, and at least $f+2$ when this size is initially unknown.

Perhaps surprisingly, our results highlight an interesting feature when put in perspective with results concerning a relaxed variant of this problem (also introduced in~\cite{DieudonnePP14}) in which the Byzantine agents cannot change their initial labels. Indeed under this variant $\mathcal{M}=1$ for networks of known size and $\mathcal{M}=f+2$ for networks of unknown size\footnote{The proof that both of these values are enough, under their respective assumptions regarding the knowledge of the network size, relies on algorithms using a mechanism of blacklists that are, informally speaking, lists of labels corresponding to agents having exhibited an ``inconsistent" behavior. Of course, in the context of our paper, we cannot use such blacklists as the Byzantine agents can change their labels and in particular steal the identities of good agents.}. Following this perspective, it turns out that when the size of the network is known, the ability for the Byzantine agents to change their labels significantly impacts the value of $\mathcal{M}$. However, the relevance for $\mathcal{M}$ of such an ability completely disappears in the most general case where the size of the network is unknown, as $\mathcal{M}=f+2$ regardless of whether Byzantine agents can change their labels or not.


\subsection{Related works}

Historically, the first mention of the gathering problem appeared in \cite{Schelling} under the appellation of rendezvous problem. Rendezvous is the term which is usually used when the studied task of gathering is restricted to a team of exactly two agents. From this publication until now, the problem has been extensively studied so that there is henceforth a huge literature about this subject. This is mainly due to the fact that there is a lot of alternatives for the combinations we can make when approaching the problem, e.g., by playing on the environment in which the agents are supposed to evolve, the way of applying the sequences of instructions (i.e., deterministic or randomized) or the ability to leave some traces in the visited locations, etc. Naturally, in this paper we are more interested in the research works that are related to deterministic gathering in networks modeled as graphs. This is why we will mostly dwell on this scenario in the rest of this subsection. However, for the curious reader wishing to consider the matter in greater depth, we invite him to consult \cite{CieliebakFPS12,AgmonP06,IzumiSKIDWY12} that address the problem in the plane via various scenarios, especially in a system affected by the occurrence of faults or inaccuracies for the last two references. Regarding randomized rendezvous, a good starting point is to go through \cite{Alpern02,Alpern03,KranakisKR06}.

Concerning the context of this paper, the closest work to ours is obviously \cite{DieudonnePP14}. Nonetheless, in similar settings but without Byzantine agents, there are some papers that should be cited here. This is in particular the case of \cite{DessmarkFKP06} in which the author presented a deterministic protocol for solving the rendezvous problem, which guarantees a meeting of the two involved agents after a number of rounds that is polynomial in the size $n$ of the graph, the length $l$ of the shorter of the two labels and the time interval $\tau$ between their wake-up times. As an open problem, the authors ask whether it is possible to obtain a polynomial solution to this problem which would be independent of $\tau$. A positive answer to this question was given, independently of each other, in \cite{KowalskiM08} and \cite{Ta-ShmaZ14}. While these algorithms ensure rendezvous in polynomial time (i.e., a polynomial number of rounds), they also ensure it at polynomial cost since the cost of a rendezvous protocol is the number of edge traversals that are made by the agents until meeting and since each agent can make at most one edge traversal per round. However, it should be noted that despite the fact a polynomial time implies a polynomial cost, the reciprocal is not always true as the agents can have very long waiting periods sometimes interrupted by a movement. Thus these parameters of cost and time are not always linked to each other. This was highlighted in \cite{MillerP14a} where the authors studied the tradeoffs between cost and time for the deterministic rendezvous problem. More recently, some efforts have been dedicated to analyse the impact on time complexity of rendezvous when in every round the agents are brought with some pieces of information by making a query to some device or some oracle, see, e.g., \cite{DKU14,MillerP14b}. Along with the works aiming at optimizing the parameters of time and/or cost of rendezvous, some other works have examined the amount of memory that is required to achieve deterministic rendezvous e.g., in \cite{FraigniaudP08,FraigniaudP13} for tree networks and in \cite{CzyzowiczKP12} for general networks.

All the aforementioned studies that are related to gathering in graphs take place in a synchronous scenario i.e., a scenario in which the agents traverse the edges in synchronous rounds. Some efforts have been also dedicated to the scenario in which the agents move asynchronously: the speed of agents may then vary and is controlled by the adversary. For more details about rendezvous under such a context, the reader is referred to \cite{MarcoGKKPV06,CzyzowiczPL12,DieudonnePV13,GuilbaultP13} for rendezvous in finite graphs and \cite{BampasCGIL10,CollinsCGL10} for rendezvous in infinite grids.

Aside from the gathering problem, our work is also in conjunction with the field of fault tolerance via the assumption of Byzantine faults to which some agents are subjected. First introduced in \cite{PeaseSL80}, a Byzantine fault is an arbitrary fault occurring in an unpredictable way during the execution of a protocol. Due to its arbitrary nature, such a fault is considered as the worst fault that can occur. Byzantine faults have been extensively studied for ``classical" networks i.e., in which the entities are fixed nodes of the graph (cf., e.g., the book \cite{Lynch96} or the survey \cite{BarborakM93}). To a lesser extend, the occurrence of Byzantine faults has  been also studied in the context of mobile entities evolving in the plane, cf. \cite{AgmonP06,DefagoGMP06}. Prior to our work, gathering in arbitrary graphs in presence of Byzantine agents was considered only in \cite{DieudonnePP14}. As mentioned in the previous section, it is proven in \cite{DieudonnePP14} that the minimum number $\mathcal{M}$ of good agents that guarantees $f$-Byzantine gathering is precisely $1$ for networks of known size and $f+2$ for networks of unknown size, provided that the Byzantine agents cannot lie about their labels. The proof that both of these values are enough, under their respective assumptions regarding the knowledge of the network size, relies on algorithms using a mechanism of blacklists that are, informally speaking, lists of labels corresponding to agents having exhibited an ``inconsistent" behavior. Of course, in the context of our paper, we cannot use such blacklists as the Byzantine agents can change their labels and in particular steal the identities of good agents. 

\section{Preliminaries}
\label{sec:pre}
Throughout the paper, 
the number of nodes of a graph is called its size.
In this section we {present two procedures}, that will be used as building blocks in our algorithms. 
The aim of both of them is graph exploration, i.e., visiting all nodes of the graph by a single agent. 
The first procedure, based on universal exploration sequences (UXS), is a corollary of the result of Reingold \cite{Reingold08}. Given any positive integer $N$, this procedure allows the agent to traverse all nodes of any graph of size at most $N$,
starting from any node of this graph, using $P(N)$ edge traversals, where $P$ is some polynomial. After entering a node of degree $d$ by some port $p$,
the agent can compute the port $q$ by which it has to exit; more precisely $q=(p+x_i)\mod d$, where $x_i$ is the corresponding term of the UXS of length $P(N)$.

The second procedure \cite{ChalopinDK10} needs no assumption on the size of the network but 
it is performed by an agent using a fixed token placed at a node of the graph. It works in time polynomial in the size of the graph.
(It is well known that a terminating exploration even of all anonymous rings of unknown size by a single agent without a token is impossible.)
In our applications the roles of the token and of the exploring agent will be played by agents or by groups of agents.
At the end of this second procedure, the agent has visited all nodes and determined a BFS tree of the underlying graph.

We call the first procedure $EXPLO(N)$ and the second procedure $EST$, for {\em exploration with a stationary token}. We denote by $T(EXPLO(n))$ the execution time of procedure $EXPLO$ with parameter $n$ (note that $T(EXPLO(n))=P(n)+1$).
We denote by $T(EST(N))$ the maximum time of execution of the procedure $EST$
in a graph of size at most $N$.

\section{Known graph size}
\label{sec:sec1}

This section aims at proving the following theorem
\vspace{-0.3cm}
\begin{theorem}
\label{theo:theo1}
Deterministic $f$-Byzantine gathering of $k$ good agents is possible in any graph of known size if, and only if $k\geq f+1$.
\end{theorem}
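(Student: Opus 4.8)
The ``only if'' direction is the lower bound of $f+1$ already established in \cite{DieudonnePP14}, so the entire work is in the ``if'' direction: designing a deterministic algorithm that gathers all $k\geq f+1$ good agents with termination in any graph of known size $n$. The plan is to exploit the fact that with $k\geq f+1$ good agents, at any gathering point where all good agents are present, the good agents form a strict majority over any single spoofed label, but more importantly that a \emph{team} of good agents acting in concert can outnumber the $f$ Byzantine agents once enough of them have met. I would first have every awoken agent run $EXPLO(n)$ to collect the graph map and, along the way, meet other agents; since the graph size is known, an agent can recognize when it has completed a full traversal. The key difficulty is that Byzantine agents can forge labels — including stealing the labels of good agents — so the usual technique of building a blacklist of misbehaving identities is unavailable, and one cannot trust any label-based certificate.

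The core idea I would pursue is a \emph{waiting-and-growing-team} strategy organized around the smallest genuine label. Each agent, after exploring, returns to its home node and waits; when agents meet, they merge into a group that henceforth moves together, always adopting as ``leader label'' the smallest label seen so far among members that have been \emph{continuously} together (so a Byzantine agent dropping in with a tiny label and leaving again cannot permanently hijack the group, because the honest members can cross-check that the purported small-label agent stayed present the whole time). A group of size $s$ will only commit to a candidate gathering node and declare termination once $s$ is large enough that the good members are guaranteed a majority no matter how the $f$ Byzantine agents are distributed — concretely, once a group certifiably contains more than $f$ agents whose mutual histories are consistent with all having been good since wake-up, those agents know they hold at least one good agent's true view, and since $k\geq f+1$, eventually all good agents funnel into one such supergroup. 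The main obstacle is making ``certifiably good'' precise in a way that is robust to label forgery: I would base it on \emph{behavioral consistency over time} — an agent that has traveled with the group for the last $T(EXPLO(n))$ rounds and whose reported observations never contradicted the group's shared map is treated as trustworthy for the purposes of the majority count — and then argue that any Byzantine agent maintaining such consistency is effectively harmless (it is following the protocol), while any Byzantine agent that deviates is detected and ejected.

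To assemble the final argument I would proceed in these steps. (1) Show that every good agent, running $EXPLO(n)$ from wake-up, within a bounded number of rounds either is alone forever (impossible, since another agent entering its home wakes it and they meet) or joins some group, and that groups only ever grow. (2) Define the group's leader label and movement rule so that a group with a consistent core of $>f$ members performs a further $EXPLO(n)$-driven sweep that is guaranteed to physically collect every other good agent it encounters — here I use that such a core contains a genuine good agent whose map is correct, so the sweep is well-defined. (3) Prove a ``uniqueness'' lemma: at most one group can ever reach the ``committed'' state with a consistent core exceeding $f$, because two such groups would between them contain $2(f{+}1)>2f+\text{(missing good agents)}$ agents, contradicting that there are only $f$ Byzantine plus $k$ good agents unless they have already merged. (4) Conclude that the unique committed group eventually contains all $k$ good agents, at which point they agree (using the shared map and the known size $n$) on a canonical meeting node, move there together, and simultaneously declare termination. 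The delicate point throughout is (3) and the detection mechanism in (2): I expect the bulk of the proof to be the careful bookkeeping showing that ``consistent core of size $>f$'' is both attainable by the good agents and impossible to fake in two places at once, and that Byzantine agents cannot stall the process indefinitely by repeatedly joining, disrupting, and fleeing.
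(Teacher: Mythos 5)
Your ``only if'' direction (citing the lower bound of \cite{DieudonnePP14}) matches the paper, but your ``if'' direction takes a different route from the paper's --- and it has genuine gaps. The paper does not try to certify agents as good at all: it enumerates all hypothetical initial configurations $\rho_i$ (labelled graph plus at least $f+1$ labelled positions) and, in phase $i$, has every agent act as if $\rho_i$ were the true configuration, with rigid phase lengths and waiting periods so that all good agents remain synchronized to within $T(EXPLO(n))$ and are provably in the \emph{same} phase whenever a ``tower'' (a group of at least $f+1$ agents in state {\tt tower} with equal index) exists; safety comes from requiring the tower to survive an entire $EXPLO(n)$ traversal, with declaration only at index $P(n)$, and from timing lemmas showing any good agent not in the tower is parked in a waiting state and is necessarily collected during that traversal. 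Your proposal replaces this with a ``behavioral consistency'' certificate, and that is where it breaks. A Byzantine agent that follows the protocol flawlessly is \emph{not} harmless in this problem: since labels can be forged, $f$ Byzantine agents can impersonate plausible good agents perfectly consistently, so your ``consistent core of more than $f$ members'' can consist of one good agent plus $f$ Byzantine ones. It therefore guarantees only one good member, not a good majority, and it can trigger commitment/termination while other good agents --- whose existence and number the group cannot know, since $k$ and the total number of agents are unknown --- are still elsewhere. This is exactly the trap the paper flags (``seeing all the labels of $\rho_i$ with the possible help of Byzantine agents'') and defuses via the configuration hypothesis, which supplies both a candidate meeting node and a candidate label multiset to check, together with the phase clock.

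Your uniqueness step (3) also does not go through as stated. The counting argument ($2(f+1)$ agents needed for two committed cores) only rules out two \emph{simultaneous, disjoint} committed groups; Byzantine agents can consistently serve in one group, then leave, relabel, and later help assemble a second ``committed'' group, so commitment under a purely behavioral criterion is not a one-shot global event. Related, in step (2) nothing forces the Byzantine members to stay during the collecting sweep: if they desert mid-traversal the core drops to possibly a single good agent, and you have neither a rule for what that agent does nor a guarantee that no one has already declared (the paper handles this by making the tower ``vanish'', sending its members to a failure state, and proving no declaration can have occurred before index $P(n)$). Finally, your step (4) needs all other good agents to be stationary and catchable during the sweep and to be able to distinguish the legitimate committed group from a group of $f$ impostors; the paper obtains both from the wait-for-a-tower waiting windows and the phase-synchronization lemmas, and your proposal has no analogue of that timing analysis. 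To repair your approach you would essentially have to reintroduce the paper's two missing ingredients: a source of agreed-upon (hypothesized) global information to decide when the group is complete, and a global synchronization scheme making premature or split declarations impossible.
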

\vspace{-0.3cm}

As mentioned in Subsection~\ref{sub:subm}, we know from \cite{DieudonnePP14} that:
\vspace{-0.3cm}
\begin{theorem}[\cite{DieudonnePP14}]
\label{theo:theo2}
Deterministic $f$-Byzantine gathering of $k$ good agents is not possible in some graph of known size if $k\leq f$.
\end{theorem}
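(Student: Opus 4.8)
The plan is to argue by contradiction through an indistinguishability argument built around a highly symmetric instance in which the Byzantine agents forge the labels of the good ones. Suppose some deterministic algorithm $\mathcal{A}$ solved $f$-Byzantine gathering of $k$ good agents in every graph of known size, for some $k\le f$. I would exhibit a graph admitting a fixed-point-free automorphism of order two that preserves the port numbering: concretely, an even cycle $C_n$ with the antipodal rotation $\varphi$, where $n\ge 2k$ is chosen large enough (legitimate, since the statement only requires \emph{some} graph, and its size is then the value known to the agents). On $C_n$ I would place $2k$ agents in $\varphi$-symmetric pairs: agents $A_1,\dots,A_k$ at distinct nodes $a_1,\dots,a_k$ and agents $B_1,\dots,B_k$ at the antipodal nodes $\varphi(a_1),\dots,\varphi(a_k)$. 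Each $B_i$ forges the label $L_i$ of $A_i$ and, at every round, executes the $\varphi$-image of whatever $A_i$ does. The adversary wakes all $2k$ agents in round $0$.

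The technical heart is to show that the resulting execution $\Gamma$ is consistent with \emph{two} different designations of who is good and who is Byzantine. I would prove by induction on the round number that $\Gamma$ stays $\varphi$-invariant, i.e. the multiset of (label, position, local memory) triples is mapped to itself by $\varphi$. The key point is that, because $\varphi$ preserves port numbers and the nodes are anonymous, the view of $B_i$ at $\varphi(u)$ is \emph{identical} to the view of $A_i$ at $u$; hence $\mathcal{A}$ prescribes the same port for both, and taking that port from the mirrored node $\varphi(u)$ leads exactly to the mirror of $A_i$'s destination. Thus the forced mirroring of $B_i$ coincides with what $\mathcal{A}$ would output on $B_i$'s view, so every one of the $2k$ agents follows $\mathcal{A}$ on its own view, and co-location/``shouting'' events preserve the symmetry as well. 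This is precisely where label-forging is indispensable: it makes the configuration symmetric \emph{including the labels}, so the good agents cannot use their identifiers to break the tie -- in contrast with the relaxed variant, where distinct Byzantine labels would destroy the symmetry and permit $\mathcal{M}=1$.

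With $\Gamma$ established I would extract the contradiction from two legitimate instances realizing the same trajectories. In the first, $E$, the good agents are $\{A_1,\dots,A_k\}$ and the Byzantine ones are $\{B_1,\dots,B_k\}$ (there are $k\le f$ of them). Correctness of $\mathcal{A}$ on $E$ forces all $A_i$ to be at a single node $w$ in some round $t$, all declaring termination in round $t$. By $\varphi$-invariance, at round $t$ every $B_i$ sits at $\varphi(w)\ne w$, and each $B_i$, running $\mathcal{A}$ on a view identical to that of its symmetric partner, would also declare termination in round $t$. Now consider the second instance $E'$, on the \emph{same} execution $\Gamma$, with good agents $\{A_1,\dots,A_{k-1},B_k\}$ and Byzantine agents $\{A_k,B_1,\dots,B_{k-1}\}$. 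This is valid: there are $k$ good agents with pairwise distinct labels $L_1,\dots,L_k$ starting at distinct nodes, there are $k\le f$ Byzantine agents, at least one good agent is woken, and -- crucially -- since no agent can detect which agents are Byzantine, every agent's view and behavior is unchanged from $\Gamma$. In $E'$ all $k$ good agents declare termination in round $t$, yet $A_1,\dots,A_{k-1}$ are at $w$ while $B_k$ is at $\varphi(w)\ne w$; they are not gathered. This contradicts the correctness of $\mathcal{A}$ on $E'$.

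The main obstacle I expect is the inductive invariance claim of the second paragraph: verifying rigorously that the mirrored behavior of the $B_i$ is at once a legal Byzantine behavior and indistinguishable from a genuine run of $\mathcal{A}$, and that the $\varphi$-symmetry is preserved through every round and every co-location event, including the information exchanges and any induced wake-ups. Once this invariant is in place, the remaining ingredients -- the choice of graph, the counting of Byzantine agents, and the final re-designation of good and Byzantine roles -- are routine.
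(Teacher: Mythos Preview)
The paper does not give its own proof of Theorem~\ref{theo:theo2}; it is quoted verbatim from~\cite{DieudonnePP14} as a known lower bound, and the present paper's contribution is only the matching upper bound (Theorem~\ref{theo:theo3}). So there is no in-paper argument to compare your attempt against. That said, your approach---a port-preserving fixed-point-free automorphism of an even cycle, label-forging to make the labelled configuration genuinely symmetric, an inductive $\varphi$-invariance of the execution, and a re-designation of the good/Byzantine roles that preserves the run but splits the declared-gathered set across $w$ and $\varphi(w)$---is exactly the standard indistinguishability mechanism one expects for this kind of bound, and your handling of the invariance and of the legitimacy of both instances $E$ and $E'$ is correct.

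One boundary case deserves a remark. Your final contradiction needs the set $\{A_1,\dots,A_{k-1}\}$ to be nonempty, i.e.\ $k\ge 2$: when $k=1$ the good-agent set in $E'$ is $\{B_1\}$ alone, which is trivially gathered at $\varphi(w)$, and no contradiction arises. This is not an artifact of your construction---for a single good agent the rule ``declare termination upon wake-up'' already succeeds against any number of Byzantine agents---so a literal ``for every $k\le f$'' reading cannot be obtained this way. Read the statement as it is used in the paper, namely as the bound $\mathcal{M}\ge f+1$; then running your argument with $k=f$ (hence $f\ge 2$) gives exactly what is needed, and you should either restrict explicitly to $k\ge 2$ or add a sentence handling the degenerate case.
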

\vspace{-0.3cm}
Thus, to prove Theorem~\ref{theo:theo1}, it is enough to show the following theorem.
\vspace{-0.3cm}
\begin{theorem}
\label{theo:theo3}
Deterministic $f$-Byzantine gathering of $k$ good agents is possible in any graph of known size if $k\geq f+1$.
\end{theorem}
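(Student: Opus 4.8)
The plan is to design a deterministic algorithm that gathers all $k \geq f+1$ good agents when the size $n$ of the network is known. The fundamental obstacle is that Byzantine agents can forge the labels of good agents, so no agent can trust label information conveyed by others, nor can it maintain a reliable ``blacklist'' as in the relaxed variant. The key idea I would pursue is to exploit the fact that good agents, unlike Byzantine ones, behave \emph{consistently}: an agent that genuinely has label $\ell$ and was woken at round $r$ will, when executing the common algorithm, produce a perfectly predictable trajectory as a function of $(\ell, r)$ and the graph. So the strategy is to have each agent, for each candidate label it has ever seen, simulate the deterministic walk that a good agent with that label would perform, and use meetings during these simulated walks as the synchronization mechanism. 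Since there is at least one good agent, some ``true'' label-and-wakeup pair exists whose walk all good agents can latch onto.

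Concretely, I would structure the algorithm around a procedure in which an agent with label $\ell$ repeatedly executes $EXPLO(n)$-based walks, interleaved so that an agent chasing a smaller label catches up with it (the classic Dessmark-Fraigniaud-Kowalski-Pelc-style technique of having agent $\ell$ wait $\Theta(\text{something depending on }\ell)$ rounds between exploration phases so that differences in wake-up time and label length are absorbed). The first step is to establish a ``meeting primitive'': if a good agent $A$ with the smallest label among all good agents performs the base walk, then every other good agent, while testing candidate labels in increasing order, will eventually be co-located with $A$ within a bounded number of rounds. The second step is to handle the fact that up to $f$ Byzantine agents may masquerade as the smallest-label agent or otherwise pollute these meetings: here I would use a \emph{counting/quorum argument} — an agent only commits to terminating when it has assembled a group whose size, measured by some Byzantine-robust criterion (e.g., the number of distinct well-behaved trajectories observed, or a token-style majority among $f+1$ witnesses), certifies that all $k \geq f+1$ good agents are present. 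Because good agents outnumber the Byzantine ability to fake a \emph{consistent} trajectory over a long enough horizon, $k \geq f+1$ is exactly the threshold that makes this certification sound.

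The main obstacle I anticipate is designing the termination condition so that it is simultaneously \emph{safe} (no good agent declares termination while another is still elsewhere) and \emph{live} (eventually triggered). Byzantine agents can always add themselves to a meeting, delay, split off, and rejoin, so the condition cannot simply count heads. The resolution I would aim for: have each good agent, once it believes it has found the correct base trajectory, follow it deterministically; then all good agents eventually travel together forever along that trajectory, and a Byzantine agent can at worst temporarily inflate the group but cannot prevent the $k$ good agents from being permanently co-located. Termination is then declared after the observed group has remained \emph{stable and of size exactly what the algorithm predicts for the good cohort} for a long enough window — long enough that any Byzantine-induced transient has provably dissipated, using the time bounds $T(EXPLO(n))$ from Section~\ref{sec:pre} to quantify ``long enough.'' Finally I would bound the total running time by a polynomial in $n$ and the largest label involved, and verify that the argument nowhere requires trusting conveyed labels — only locally observed co-location and the agent's own deterministic simulations — which is what distinguishes this from the blacklist-based approach and keeps it valid against label-changing Byzantine agents.
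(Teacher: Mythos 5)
Your proposal stalls exactly at the point you yourself flag as the main obstacle, and the fix you sketch does not work. The certification step --- declare termination once the observed group has been ``stable and of size exactly what the algorithm predicts for the good cohort'' for a long window --- is unsound in this model for two reasons. First, no agent knows the number $k$ of good agents; it only knows $f$, so there is no ``predicted size of the good cohort'' to test against. Second, and more fundamentally, Byzantine agents are not transient: they can forge labels and mimic a perfectly consistent good trajectory \emph{forever}, so no finite stability window ``dissipates'' them. With $k=f+1$, a single good agent accompanied by $f$ Byzantine agents that simulate plausible good behaviour indefinitely is locally indistinguishable from a group of $f+1$ good agents; your criterion would let that lone good agent terminate while the other $f$ good agents are still elsewhere. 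The phrase ``good agents outnumber the Byzantine ability to fake a consistent trajectory over a long enough horizon'' is precisely what fails: they do not, and $f+1$ is not a majority threshold here. Similarly, counting ``distinct well-behaved trajectories'' is not Byzantine-robust, since trajectories of absent good agents can be forged wholesale.

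What is missing is the paper's key idea, which sidesteps certification altogether. The agents fix an enumeration of \emph{all} candidate initial configurations (an $n$-node port-labelled graph together with at least $f+1$ labelled starting positions) and test them phase by phase; within a phase, agents whose label occurs in the hypothesis walk to the hypothesised smallest-label node and, through timed colour stages, try to assemble a group of at least $f+1$ agents (a ``tower''). The point of the threshold $f+1$ is only that every tower contains at least one good agent, which anchors its behaviour. Safety is then obtained not by deciding who is good, but by forcing any tower to perform a complete $EXPLO(n)$ traversal before anyone may declare termination, while every good agent that failed the phase waits, at its place, long enough ($\Theta(T(EXPLO(n))+n)$ with carefully matched constants) to be physically picked up by that traversal; a separate lemma shows at most one tower can exist in any round, and the timing analysis (all good agents' wake-ups differ by at most $T(EXPLO(n))$ after the initial wake-up sweep, and every phase has a fixed duration $Q(n)$) guarantees all good agents end up on the same node and declare in the same round, even when the tested configuration is wrong. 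Liveness comes for free once the enumeration reaches the true configuration. Your rendezvous-style ``chase the smallest label'' primitive could serve as a meeting mechanism in the fault-free setting, but without the hypothesis-enumeration plus fetch-by-traversal structure (or an equivalent mechanism making wrong terminations impossible rather than improbable-looking), the proposal does not yield a correct algorithm for $k\geq f+1$.
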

\vspace{-0.3cm}
Hence, the rest of this section is devoted to proving Theorem~\ref{theo:theo3}. To do so, we show a deterministic algorithm that gathers all good agents in an arbitrary network of known size, provided there are at least $f+1$ of them.

Before presenting the algorithm, we first give the high level idea which is behind it. Let us assume an ideal situation in which each agent would have as input, besides its label and the network size $n$, a parameter $\rho=(G^*,L^*)$ corresponding to the initial configuration of the agents in the graph such that: 
\begin{itemize}
\item $G^*$ represents the $n$-node graph with all port numbers, in which each node are assigned an identifier belonging to $\{1,\cdots,n\}$. The node identifiers are pairwise distinct. Note that the representation $G^*$ contains more information than there is in the actual graph $G$ as it also includes node identifiers which do not exist in $G$.

\item $L^*=\{(v_1,l_1),(v_2,l_2),\cdots,(v_k,l_k)\}$ where $(v_i,l_i)\in L^*$ iff there is a good agent having label $l_i$ which is initially placed in $G$ at the node corresponding to $v_i$ in $G^*$. Remark that $k\geq f+1$.
\end{itemize}

Let us also assume that all the agents in the graph are woken up at the same time by the adversary. In such ideal situation, gathering all good agents can be easily achieved by ensuring that each agent moves towards the node $v$ where the agent having the smallest label is located. Each agent can indeed do that by using the knowledge of $\rho=(G^*,L^*)$ and its own label. Of course, all the good agents do not necessarily reach node $v$ at the same time. However, once at node $v$, each agent can estimate the remaining time which is required to wait in order to be sure that all good agents are at node $v$: again this estimation can be computed using $\rho=(G^*,L^*)$ and the fact that all agents are woken up in the same round.
Unfortunately, the agents are not in such ideal situation. First, every agent is not necessarily woken up by the adversary, and for those that are woken by the adversary, this is not necessarily in the same round. Second, the agents do not have configuration $\rho$ as input of the algorithm. In our algorithm we cope with the first constraint by requiring the first action to be a traversal of the entire graph (using procedure $EXPLO(n)$) which allows to wake up all encountered agents that are still dormant. In this way, the agents are ``almost synchronized'' as the delay between the starting times of any two agents is at most $T(EXPLO(n))$: the waiting time periods can be adjusted regarding this maximum delay. The second constraint i.e., the non-knowledge of $\rho$, is more complicated to deal with. To handle the lack of information about $\rho$, agents make successive assumptions about it that are ``tested'' one by one. More precisely, let $\mathcal{P}$ be the recursively enumerable set of all the configurations $\rho_i=(G_i^*,L_i^*)$ such that $G_i^*$ is a connected $n$-node graph and $|L_i^*|\geq f+1$. Let $\Theta=(\rho_1,\rho_2,\rho_3,\cdots)$ be a fixed enumeration of $\mathcal{P}$ (all good agents agree on this enumeration). Each agent proceeds in phases numbered $1,2,3,\cdots$. In each phase $i$, an agent supposes that $\rho=\rho_i$ and, similarly as in the ideal situation, tries to go to the node which is supposed to correspond to node $v$, where $v$ is the node where the agent having the smallest label is initially located (according to  $\rho_i$). For some reasons detailed in the algorithm (refer to the description of state {\tt setup}), when $\rho_i\ne\rho$ some agents may be unable to make such a motion. As a consequence, these agents will consider that, rightly, $\rho_i\ne\rho$. On the other hand, whether $\rho_i\ne\rho$ or not, some other good agents may reach a node for which they had no reason to think it is not $v$ (and thus $\rho_i\ne\rho$). The danger here is that when reaching the supposed node $v$ these successful agents could see all the $|L_i^*|$ labels of $\rho_i$ (with the possible ``help'' of some Byzantine agents). At this point, it may be tempting to consider that gathering is over but this could be wrong especially in the case where $\rho_i\ne\rho$ and some good agents did not reach a supposed node $v$ in phase $i$. To circumvent this problem, the idea is to get the good agents thinking that $\rho_i=\rho$ to fetch the (possible) others for which $\rho_i\ne\rho$ via a traversal of the entire graph using procedure $EXPLO(n)$ (refer to the description of state {\tt tower}). To allow this, an agent for which $\rho_i\ne\rho$ will wait a prescribed amount of rounds in order to leave enough time for possible good agents to fetch it (refer to the description of state {\tt wait-for-a-tower}). For our purposes, it is important to prevent the agents from being fetched any old how by any group, especially those containing only Byzantine agents. Hence our algorithm is designed in such a way that within each phase at most one group, called a \emph{tower} and made up of at least $f+1$ agents, will be unambiguously recognized as such and be allowed to fetch the other agents via an entire traversal of the graph (this guarantee principally results from the rules that are prescribed in the description of state {\tt tower builder}). When a tower has finished the execution of procedure $EXPLO(n)$ in some phase $i$, our algorithm guarantees that all good agents are together and declare gathering is over at the same time (whether the assumed configuration $\rho_i$ corresponds to the real initial configuration or not). On the other hand, in every phase $i$, if a tower is not created or ``vanishes'' (because there at not at least $f+1$ agents inside of it anymore) before the completion of its traversal, no good agent will declare that gathering is over in phase $i$.
In the worst case, the good agents will have to wait until assuming a good hypothesis about the real initial configuration, in order to witness the creation of a tower which will proceed to an entire traversal of the network (and thus declare gathering is over).

We now give a detailed description of the algorithm.

{\bf Algorithm Byz-Known-Size} with parameter $n$ (know size of the graph)

The algorithm is made up of two parts. The first part aims at ensuring that all agents are woken up before proceeding to the second part which is actually the heart of the algorithm.

\paragraph{Part~1.} As soon as an agent is woken up by the adversary or another agent, it starts proceeding to a traversal of the entire graph and wakes up all encountered agents that are still dormant. This is done using procedure $EXPLO(n)$ where $n$ is the size of the network which is initially known to all agents. Once the execution of $EXPLO(n)$ is accomplished, the agent backtracks to its starting node by traversing all edges traversed in $EXPLO(n)$ in the reverse order and the reverse direction.

\paragraph{Part~2.} In this part, the agent works in phases numbered $1,2,3,\cdots$. During the execution of each phase, the agent can be in one of the following five states: {\tt setup}, {\tt tower builder}, {\tt tower}, {\tt wait-for-a-tower}, {\tt failure}. Below we describe the actions of an agent $A$ in each of the states as well as the transitions between these states within phase $i$. We assume that in every round agent $A$ tells the others (sharing the same node as agent $A$) in which state it is. In some states, the agent will be required to tell more than just its current state: we will mention it in the description of these states. Moreover, in the description of every state X, when we say ``agent $A$ transits to state Y", we exactly mean agent $A$ remains in state X until the end of the current round and is in state Y in the following round. Thus, in each round of this part, agent $A$ is always exactly in one state.

At the beginning of phase $i$, agent $A$ enters state {\tt setup}.

  \noindent
 {\bf State} {\tt setup}.
 
Let $\rho_i$ be the $i$-th configuration of enumeration $\Theta$ (refer to above). If the label $l$ of agent $A$ is not in $\rho_i$, then it transits to state {\tt wait-for-a-tower}. Otherwise, let $X$ be the set of the shortest paths in $\rho_i$ leading from the node containing the agent having label $l$, to the node containing the smallest label of the supposed configuration. Each path belonging to $X$ is represented as the corresponding sequence of port numbers. Let $\pi$ be the lexicographically smallest path in $X$ (the lexicographic order can be defined using the total order on the port numbers). Agent $A$ follows path $\pi$ in the real network. If 
, following path $\pi$, agent $A$ has to leave by a port number that does not exist in the node where it currently resides, then it transits to state {\tt wait-for-a-tower}. In the same way, it also transits to state {\tt wait-for-a-tower} if, following path $\pi$, agent $A$ enters at some point a node by a port number which is not the same as that of path $\pi$. Once path $\pi$ is entirely followed by agent $A$, it transits to state {\tt tower builder}.
  \vspace*{0.2cm}
 
  \noindent
 {\bf State} {\tt tower builder}.

When in state {\tt tower builder}, agent $A$ can be in one of the following three substates: {\tt yellow}, {\tt orange}, {\tt red}. In all of these substates the agent does not make any move: it stays at the same node denoted by $v$. At the beginning, agent $A$ enters substate {\tt yellow}. By misuse of language, in the rest of this paper we will sometimes say that an agent ``is {\tt yellow}" instead of ``is in substate {\tt yellow}". We will also use the same kind of shortcut for the two other colors. In addition to its state, we also assume that in every round agent $A$ tells the others in which substate it is.

{\bf Substate} {\tt yellow}

Let $k$ be the number of labels in configuration $\rho_i$.
Agent $A$ waits $T(EXPLO(n))+n$ rounds. If during this waiting period, there are at some point at least $k$ {\tt orange} agents at node $v$ then agent $A$ transits to substate {\tt red}. Otherwise, if at the end of this waiting period there are not at least $k$ agents residing at node $v$ such that each of them is either {\tt yellow} or {\tt orange}, then agent $A$ transits to state {\tt wait-for-a-tower}, else it transits to substate {\tt orange}.
  
{\bf Substate} {\tt orange}

Agent $A$ waits at most $T(EXPLO(n))+n$ rounds to see the occurrence of one of the following two events. The first event is that there are not at least $k$ agents residing at node $v$ such that each of them is either {\tt yellow} or {\tt orange}. The second event is that there are at least $k$ {\tt orange} agents residing at node $v$. Note that the two events cannot occur in the same round. If during this waiting period, the first (resp. second) event occurs, then agent $A$ transits to state {\tt wait-for-a-tower} (resp. substate {\tt red}). If at the end of the waiting period, none of these events has occurred, then agent $A$ transits to substate {\tt wait-for-a-tower}.

{\bf Substate} {\tt red}

Agent $A$ waits $T(EXPLO(n))+n$ rounds. If at each round of this waiting period there are at least $k$ {\tt red} agents at node $v$, then at the end of the waiting period, agent $A$ transits to state {\tt tower}. Otherwise, there is a round during the waiting period in which there are not at least $k$ {\tt red} agents at node $v$: agent $A$ then transits to state {\tt wait-for-a-tower} as soon as it notices this fact.

  \noindent
 {\bf State} {\tt tower}.

Agent $A$ can enter state {\tt tower} either from state {\tt tower builder} or state {\tt wait-for-a-tower}. While in this state, agent $A$ will execute all or part of procedure $EXPLO(n)$. In both cases we assume that, in every round, agent $A$ tells the others the edge traversal number of $EXPLO(n)$ it has just made (in addition to its state). We call this number the index of the agent. Below, we distinguish and detail the two cases.

When agent $A$ enters state {\tt tower} from state {\tt tower builder}, it starts executing procedure $EXPLO(n)$. In the first round, its index is $0$. Just after making the $j$-th edge traversal of $EXPLO(n)$, its index is $j$.  
Agent $A$ carries out the execution of $EXPLO(n)$ until its term, except if at some round of the execution the following condition is not satisfied, in which case agent $A$ transits to state {\tt failure}. Here is the condition: the node where agent $A$ is currently located contains a group $\mathcal{S}$ of at least $f+1$ agents in state {\tt tower} having the same index as agent $A$. $\mathcal{S}$ includes agent $A$ but every agent that is in the same node as agent $A$ is not necessarily in $\mathcal{S}$. If at some point this condition is satisfied and the index of agent $A$ is equal to $P(n)$, which is the total number of edge traversals in $EXPLO(n)$ (refer to Section~\ref{sec:pre}), then agent $A$ declares that gathering is over. 

When agent $A$ enters state {\tt tower} from state {\tt wait-for-a-tower}, it has just made the $s$-th edge traversal of $EXPLO(n)$ for some $s$ (cf. state {\tt wait-for-a-tower}) and thus, its index is $s$. Agent $A$ executes the next edge traversals
 i.e., the $s+1$-th, $s+2$-th, $\cdots$, and then its index is successively $s+1$, $s+2$, etc. Agent $A$ carries out this execution until the end of procedure $EXPLO(n)$, except if the same condition as above is not fulfilled at some round of the execution of the procedure, in which case agent $A$ also transits to state {\tt failure}. As in the first case, if at some point the node where agent $A$ is currently located contains a group $\mathcal{S}$ of at least $f+1$ agents in state {\tt tower} having an index equal to $P(n)$, then agent $A$ declares that gathering is over.

\noindent
 {\bf State} {\tt wait-for-a-tower}.

Agent $A$ waits at most $5T(EXPLO(n))+4n$ rounds to see the occurrence of the following event: the node where it is currently located contains a group of at least $f+1$ agents in state {\tt tower} having the same index $t$. 
If during this waiting period, agent $A$ sees such an event, we distinguish two cases. If $t<P(n)$, then it makes the $t+1$-th edge traversal of procedure $EXPLO(n)$ and transits to state {\tt tower}. If $t=P(n)$, then it declares that gathering is over.

Otherwise, at the end of the waiting period, agent $A$ has not seen such an event, and thus it transits to state {\tt failure}.

\noindent
 {\bf State} {\tt failure}.
Agent $A$ backtracks to the node where it was located at the beginning of phase $i$. To do this, agent $A$ traverses in the reverse order and the reverse direction all edges it has traversed in phase $i$ before entering state {\tt failure}. Once at its starting node, agent $A$ waits $10T(EXPLO(n))+9n-p$ rounds where $p$ is the number of elapsed rounds between the beginning of phase $i$ and the end of the backtrack it has just made. At the end of the waiting period, 
phase $i$ is over. In the next round, agent $A$ will start phase $i+1$.

\subsection{Proof of correctness}

The proof of correctness is made up of seven lemmas and two propositions. The validity of Algorithm Byz-Known-Size (and by extension Theorem~\ref{theo:theo3}) follows from Lemmas~\ref{lem:6} and~\ref{lem:7}. However to prove both these lemmas, we first need to establish the following two propositions and to prove the following five lemmas.

Part~1 of Algorithm Byz-Known-Size consists in executing procedure $EXPLO(n)$ and then traversing all edges traversed in $EXPLO(n)$ in the reverse order and the reverse direction. In view of the fact that procedure $EXPLO(n)$ allows an agent to visit all nodes of the graph, we know that the delay between the starting rounds of any two good agents is at most $T(EXPLO(n))$. Hence we get the following proposition.

\begin{proposition}
\label{prop:prop1}
Let $A$ and $B$ be two good agents. The delay between the starting rounds of agents $A$ and $B$ is at most $T(EXPLO(n))$.
\end{proposition}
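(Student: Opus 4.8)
\textbf{Proof plan for Proposition~\ref{prop:prop1}.}

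The plan is to analyze how good agents wake up and show that once the first good agent is awake, every other good agent becomes awake within $T(EXPLO(n))$ rounds. First I would recall the two mechanisms for waking: the adversary, and the entry of another (already awake) agent into a dormant agent's starting node. By assumption the adversary wakes at least one agent, and in Part~1 every good agent, immediately upon waking, launches procedure $EXPLO(n)$, which (by the property stated in Section~\ref{sec:pre}, inherited from Reingold's result) visits \emph{all} nodes of the graph within $T(EXPLO(n)) = P(n)+1$ rounds, starting from wherever the agent happens to be.

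The key step is the following. Let $A$ be the good agent among $\{A,B\}$ that starts earlier (say in round $r_A$), and suppose for contradiction that $B$ has not started by round $r_A + T(EXPLO(n))$. During the interval $[r_A, r_A + T(EXPLO(n)) - 1]$, agent $A$ executes its traversal via $EXPLO(n)$ and therefore visits every node of $G$, in particular the starting node of $B$, at some round in this interval. If $B$ is still dormant at that moment, the entry of $A$ into $B$'s starting node wakes $B$ up no later than that round; if $B$ was already awake, then a fortiori it started no later than then. Either way $B$ has started by round $r_A + T(EXPLO(n)) - 1$, contradicting the assumption. Hence the delay between the starting rounds of $A$ and $B$ is at most $T(EXPLO(n))$.

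I would be slightly careful about one subtlety: a dormant good agent could also be woken by a \emph{Byzantine} agent, or by another good agent finishing Part~1 and moving in Part~2, but none of these can only \emph{delay} the wake-up — they can only make it happen sooner — so the bound derived from $A$'s own traversal still holds. Since $A$ and $B$ were an arbitrary pair of good agents and the argument is symmetric in which one starts first, the proposition follows. I do not expect a genuine obstacle here; the only thing to state cleanly is that $EXPLO(n)$ visits all nodes within $T(EXPLO(n))$ rounds regardless of the starting node, which is exactly the guarantee recalled in the Preliminaries.
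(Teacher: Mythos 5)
Your proof is correct and follows essentially the same route as the paper: the paper also derives the proposition directly from the fact that in Part~1 a woken agent immediately runs $EXPLO(n)$, which visits every node (hence every dormant agent's starting node) within $T(EXPLO(n))$ rounds. Your added remarks (contradiction framing, other wake-up mechanisms only helping) are fine but not needed beyond the paper's one-line argument.
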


According to Part~1 of Algorithm Byz-Known-Size and the rules of state {\tt failure}, we have the following proposition.

\begin{proposition}
\label{prop:prop2}
At the beginning of every phase it executes, a good agent is at the node where it was woken up.
\end{proposition}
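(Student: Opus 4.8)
The plan is to unwind the definition of a phase and the rules of state \texttt{failure}, then chase an agent through its lifetime. First I would fix a good agent $A$ and argue by induction on the phase number $i$. For the base case $i=1$: by Part~1 of the algorithm, as soon as $A$ is woken up it executes $EXPLO(n)$ and then backtracks by traversing all the edges of $EXPLO(n)$ in reverse order and reverse direction, so immediately after Part~1 it is back at the node where it was woken up; it then enters phase~$1$ at that very node, giving the claim for $i=1$.

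For the inductive step, assume $A$ begins phase~$i$ at the node $w$ where it was woken up, and consider how phase~$i$ can end. There are exactly two ways: either $A$ declares that gathering is over during phase~$i$ (in which case it never begins phase~$i+1$, so there is nothing to prove), or $A$ reaches the end of phase~$i$ without declaring gathering over. In the latter case I would observe that the only route from any state of phase~$i$ to the completion of the phase (short of declaring gathering over) passes through state \texttt{failure}: inspecting the state descriptions, \texttt{setup} either leads to \texttt{wait-for-a-tower} or to \texttt{tower builder}; every substate of \texttt{tower builder} either leads to \texttt{tower} or to \texttt{wait-for-a-tower}; \texttt{tower} either declares gathering over or transits to \texttt{failure}; \texttt{wait-for-a-tower} either declares gathering over, transits to \texttt{tower}, or transits to \texttt{failure}; and only state \texttt{failure} actually terminates the phase. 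Hence if phase~$i$ ends without $A$ declaring gathering over, then $A$ entered state \texttt{failure} at some round of phase~$i$.

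Now I invoke the rules of state \texttt{failure}: upon entering it, $A$ backtracks to the node where it was located at the beginning of phase~$i$ by traversing, in reverse order and reverse direction, all edges it traversed during phase~$i$ before entering \texttt{failure}; then it waits (for a prescribed number of rounds) at that node, and in the next round begins phase~$i+1$. By the induction hypothesis the node at the beginning of phase~$i$ is $w$, the node where $A$ was woken up, so $A$ begins phase~$i+1$ at $w$ as well, completing the induction. The only point requiring a little care — and the one I would state explicitly — is that the backtrack in state \texttt{failure} genuinely returns $A$ to its phase-$i$ starting node: this holds because reversing a walk edge-by-edge (each edge traversed in the opposite direction) retraces it exactly, the port numbering being fixed, and because $A$ makes no moves while in the substates of \texttt{tower builder} (it stays at node $v$) nor while waiting, so the recorded edge sequence up to entering \texttt{failure} is precisely the sequence $A$ must undo. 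I do not expect a serious obstacle here; the argument is a straightforward structural induction, and the main thing is simply to check that \texttt{failure} is the unique exit of a phase and that its backtrack is faithful.
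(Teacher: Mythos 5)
Your proposal is correct and follows the same reasoning the paper relies on: the paper simply asserts the proposition as an immediate consequence of Part~1 (the $EXPLO(n)$ traversal followed by its exact reversal) and the backtracking rule of state {\tt failure}, which is precisely the induction you spell out. Your more explicit check that {\tt failure} is the only exit from a phase (short of declaring gathering over) and that its backtrack faithfully returns the agent to its phase-starting node is just a careful elaboration of that same argument.
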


\begin{lemma}
\label{lem:lem1}
Let $Q(n)=10T(EXPLO(n))+9n$. For every good agent $A$ and every positive integer $i$, if at some point agent $A$ starts executing the $i$-th phase of Algorithm Byz-Known-Size, then either it will spend exactly $Q(n)$ rounds executing the $i$-th phase or it will declare that gathering is over after having spent at most $Q(n)$ rounds in the $i$-th phase.
\end{lemma}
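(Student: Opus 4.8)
The plan is to bound the duration of phase $i$ by analyzing separately each of the five states and verifying that whatever sequence of state transitions agent $A$ undergoes within phase $i$, either it declares gathering is over (after at most $Q(n)$ rounds), or it eventually reaches state \texttt{failure}, executes the backtrack, and pads with the prescribed waiting time so that the total is exactly $Q(n)$. The key observation is that the constant $Q(n)=10T(EXPLO(n))+9n$ is an upper bound on the number of rounds $A$ can possibly spend \emph{before} entering state \texttt{failure} (or declaring termination), and that the explicit waiting rule of state \texttt{failure} is precisely designed to top up any shorter run to exactly $Q(n)$.

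First I would trace the possible ``paths'' through the states within a single phase. A phase begins in \texttt{setup}. From \texttt{setup} the agent either moves to \texttt{wait-for-a-tower} or, after following path $\pi$, to \texttt{tower builder}; following $\pi$ costs at most $n-1$ rounds since $\pi$ is a shortest path in an $n$-node graph (hence has length at most $n-1$). From \texttt{tower builder} (substates \texttt{yellow} $\to$ \texttt{orange} $\to$ \texttt{red}), each substate lasts at most $T(EXPLO(n))+n$ rounds, so \texttt{tower builder} contributes at most $3(T(EXPLO(n))+n)$ rounds before transiting to \texttt{tower} or \texttt{wait-for-a-tower}. State \texttt{wait-for-a-tower} lasts at most $5T(EXPLO(n))+4n$ rounds before transiting to \texttt{tower} or \texttt{failure} (or declaring termination). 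State \texttt{tower} lasts at most $T(EXPLO(n))$ rounds (it performs at most $P(n)+1=T(EXPLO(n))$ edge traversals of $EXPLO(n)$) before declaring termination or transiting to \texttt{failure}. The next step is to check that any legal chain of transitions that does \emph{not} declare termination — the worst case being $\texttt{setup}\to\texttt{tower builder}\to\texttt{wait-for-a-tower}\to\texttt{tower}\to\texttt{failure}$, or $\texttt{setup}\to\texttt{wait-for-a-tower}\to\texttt{tower}\to\texttt{failure}$ — accumulates at most $Q(n)$ rounds up to and including the moment it enters \texttt{failure}. Summing the bounds above for the longest chain gives roughly $(n-1)+3(T(EXPLO(n))+n)+5T(EXPLO(n))+4n+T(EXPLO(n)) \le 9T(EXPLO(n))+8n < Q(n)$, with comfortable slack; I would confirm that whatever order the states are visited in, no chain exceeds this count (each state is entered at most once per phase by inspection of the transition rules, except that \texttt{tower} may be re-entered from \texttt{wait-for-a-tower} only once, which I would note explicitly).

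Once it is established that $A$ enters state \texttt{failure} having spent some number $p' \le 9T(EXPLO(n))+8n$ rounds in the phase so far, I would analyze state \texttt{failure} itself: the backtrack retraces in reverse every edge $A$ traversed in phase $i$, which takes at most as many rounds as the number of edge traversals $A$ made, hence $p$ (the total elapsed rounds from the start of phase $i$ to the end of the backtrack) satisfies $p \le Q(n)$ — here I would double-check that the edges traversed during waiting periods are zero, so the backtrack length is dominated by the same bound. Then the agent waits exactly $Q(n)-p$ rounds, so phase $i$ lasts exactly $Q(n)$ rounds, matching the first alternative of the lemma. The remaining case is that $A$ declares gathering is over, which can only happen in state \texttt{tower} or \texttt{wait-for-a-tower}, both reached within at most $9T(EXPLO(n))+8n \le Q(n)$ rounds of the phase start, giving the second alternative. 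The main obstacle I anticipate is the bookkeeping: verifying the claim that $p \le Q(n)$ requires carefully arguing that the backtrack in state \texttt{failure} is no longer than the ``forward'' portion of the phase (an agent never traverses more edges going back than it did going forward, since it retraces exactly), and verifying that each state is visited at most once (or, for \texttt{tower}, a bounded number of times) so that the per-state bounds simply add. None of this is deep, but it must be done exhaustively over the transition diagram to be airtight.
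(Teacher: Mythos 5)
Your proposal is correct and follows essentially the same route as the paper's proof: bound the time before entering state {\tt failure} by $9T(EXPLO(n))+8n$ rounds (using that a declaration can only occur in states {\tt tower} or {\tt wait-for-a-tower}, before any entry into {\tt failure}), bound the backtrack by the number of forward edge traversals (at most $n-1$ in {\tt setup} plus one execution of $EXPLO(n)$, i.e.\ at most $T(EXPLO(n))+n$ rounds), and observe that the prescribed wait of $Q(n)-p$ rounds in state {\tt failure} tops the phase up to exactly $Q(n)$. The only difference is presentational: the paper phrases it as a contradiction argument starting from the assumption that no declaration occurs by round $r+Q(n)-1$, while you argue directly by case analysis, which amounts to the same quantitative facts.
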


\begin{proof}
Let $A$ be an agent that starts executing the $i$-th phase of Algorithm Byz-Known-Size in round $r$ and assume that agent $A$ does not declare that gathering is over by round $r+Q(n)-1$. To prove the lemma, it is then enough to prove that agent $A$ starts executing the $i+1$-th phase of Algorithm Byz-Known-Size in round $r+Q(n)$.

Let us first assume by contradiction that agent $A$ declares gathering is over when executing the $i$-th phase of Algorithm Byz-Known-Size. In view of the fact that agent $A$ does not declare that gathering is over by round $r+Q(n)-1$, we know that this declaration occurs in round $r+Q(n)$ at the earliest. However, according to Algorithm Byz-Known-Size, and especially the maximum duration of each state within any given phase (in particular a good agent cannot spend more than $n$ rounds in state {\tt setup} of any given phase in view of Proposition~\ref{prop:prop2}), agent $A$ cannot spend more than $9T(EXPLO(n))+8n$ rounds in phase $i$ without entering state {\tt failure}. Hence agent $A$ enters state {\tt failure} of phase $i$ before round $r+Q(n)$. But according to Algorithm Byz-Known-Size, once agent $A$ is in state {\tt failure} of phase $i$, it cannot reach the states {\tt tower} and {\tt wait-for-a-tower} of phase $i$, which are the only two states wherein an agent can declare that gathering is over. We then get a contradiction with the fact that agent $A$ declares gathering is over when executing the $i$-th phase of Algorithm Byz-Known-Size.

As a consequence, we know that agent $A$ does not declare gathering is over when executing the $i$-th phase of Algorithm Byz-Known-Size. According to the algorithm, it then ends up entering state {\tt failure} of phase $i$ after a finite number $x$ of rounds which is upper bounded by $9T(EXPLO(n))+8n$ as mentioned above. When entering state {\tt failure}, agent $A$ starts backtracking to the node $v$ where it was located at the beginning of phase $i$. Since before switching to state {\tt failure} of phase $i$, agent $A$ follows a path made up of at most $n-1$ edges and can proceed to at most one (possibly truncated) execution of procedure $EXPLO(n)$, backtracking to $v$, in state {\tt failure} of phase $i$, takes $y\leq T(EXPLO(n))+n$ rounds. When the backtrack is done, agent $A$ has spent $p=x+y\leq 10T(EXPLO(n))+9n$ rounds in phase $i$ and starts waiting $10T(EXPLO(n))+9n-p$ rounds. The end of the waiting period is reached in some round $r'$ when agent $A$ has spent exactly $10T(EXPLO(n))+9n$ rounds in phase $i$: according to Algorithm Byz-Known-Size, agent $A$ starts executing phase $i+1$ in round $r'+1$, which proves the lemma.
 
\end{proof}

Before continuing, we need to introduce some slight additional notions in order to facilitate the presentation of other lemmas. For any positive integer $i$, we say that a good agent $A$ {\emph tests} configuration $\rho_i$ when it executes the $i$-th phase of Algorithm Byz-Known-Size. We also say that there is a tower $\mathcal{T}_j$ at node $v$ in round $t$ if, and only if, there are at least $f+1$ agents, at node $v$ in round $t$, which are in state {\tt tower} and have index $j$. The members that make up $\mathcal{T}_j$ at node $v$ in round $t$ are then all the agents, which are at node $v$ in round $t$, being in state {\tt tower} and having index $j$. 

\begin{remark}
\label{rem:rem1}
Since there are at least $f+1$ agents in every tower, there is at least one good agent in every tower as there are at most $f$ Byzantine agents in the network.
\end{remark}

\begin{lemma}
\label{lem:lem2}
Let $T_j$ be a tower located at node $v$ in round $t$. There is at least one good agent in $T_j$ which has been in state {\tt tower} since round $t-j$.
\end{lemma}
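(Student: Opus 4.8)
The plan is to prove the statement by induction on the index $j$ of the tower, using Remark~\ref{rem:rem1} to locate a good agent inside each tower and then pushing its ``tower-history'' backwards through the rules of states {\tt tower} and {\tt wait-for-a-tower}.

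For the base case $j=0$, I would note that the index of an agent strictly increases in every round it spends in state {\tt tower}, so an agent currently of index $0$ in state {\tt tower} entered that state in the current round; moreover it must have entered from state {\tt tower builder}, since an agent entering {\tt tower} from {\tt wait-for-a-tower} does so only after performing the $(t'+1)$-th edge traversal of $EXPLO(n)$ for some $t'\ge 0$, hence with index at least $1$. Thus every member of $T_0$ has been in state {\tt tower} since round $t$, and by Remark~\ref{rem:rem1} at least one of them is good, which settles the base case.

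For the inductive step, assume the claim for towers of index $j-1$ and let $T_j$ be a tower at node $v$ in round $t$ with $j\ge 1$. Pick a good agent $A\in T_j$ (Remark~\ref{rem:rem1}). Since $A$ is in state {\tt tower} with index $j\ge 1$ in round $t$, and the only transitions into state {\tt tower} come from {\tt tower builder} (which gives index $0$) or from {\tt wait-for-a-tower}, while the index strictly increases each round in state {\tt tower}, agent $A$ was in round $t-1$ at some node $u$ and either (i) in state {\tt tower} with index $j-1$, or (ii) in state {\tt wait-for-a-tower} having just detected at $u$ a group of at least $f+1$ agents in state {\tt tower} with index $j-1$. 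In case~(i), the fact that $A$ does not switch to state {\tt failure} between rounds $t-1$ and $t$ means precisely that the tower-condition held at $u$ in round $t-1$, i.e.\ there was a tower $T_{j-1}$ at node $u$ in round $t-1$ containing $A$; in case~(ii) such a tower $T_{j-1}$ at $u$ in round $t-1$ is exactly the group detected by $A$. In both cases, in round $t$ agent $A$ reaches node $v$ by performing the $j$-th edge traversal of $EXPLO(n)$ out of $u$, in lockstep with $T_{j-1}$. Applying the inductive hypothesis to $T_{j-1}$ at $u$ in round $t-1$ yields a good agent $B$ that has been in state {\tt tower} since round $(t-1)-(j-1)=t-j$. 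Being a member of $T_{j-1}$, agent $B$ is at $u$ in round $t-1$ in state {\tt tower} with index $j-1$, so its tower-condition holds there; hence $B$ also performs the $j$-th edge traversal of $EXPLO(n)$ and is in state {\tt tower} with index $j$ in round $t$. Since $B$ continues the same execution of $EXPLO(n)$ as the other agents of $T_{j-1}$ and $A$ follows that same tower out of $u$, the determinism of $EXPLO(n)$ (the exit port at a node is a deterministic function of the entry port and the current index) forces $A$ and $B$ to traverse the same edge out of $u$, so $B$ too is at node $v$ in round $t$. Therefore $B\in T_j$ and $B$ has been in state {\tt tower} since round $t-j$, which completes the induction.

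The step I expect to be the main obstacle is this last one: formally certifying that the good agent returned by the inductive hypothesis for $T_{j-1}$ genuinely lands in $T_j$ (at node $v$), rather than merely ending up, in state {\tt tower} with index $j$, at some other node. This requires pinning down how an agent joining a tower from state {\tt wait-for-a-tower} synchronizes its copy of the universal exploration sequence with the tower's, and that two co-located agents in state {\tt tower} with the same index necessarily share the same continuation of $EXPLO(n)$; depending on how explicit the algorithm's description is on this point, one may wish to invoke (or first establish) an auxiliary invariant stating that in any given round there is at most one tower of a given index and that its good members move together.
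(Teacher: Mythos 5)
Your proof is correct, and it reaches the paper's conclusion by a genuinely different route. The paper argues by contradiction with an extremal choice: it takes a good agent $A$ of $T_j$ whose \emph{last} entry into state {\tt tower} is earliest among good members (this entry is necessarily from state {\tt wait-for-a-tower}, since an entry from {\tt tower builder} would already witness the lemma), extracts a good agent $B$ from the group that $A$ joined at that moment, and then propagates the pair forward round by round up to round $t$, concluding that $B$ belongs to $T_j$ with a strictly earlier last entry, contradicting minimality. You instead induct on the index $j$, stepping one round backwards: index $0$ forces an entry from {\tt tower builder} in the current round, and for $j\geq 1$ the chosen good agent either was already in a tower of index $j-1$ one round earlier (because it did not transit to {\tt failure}) or joined from {\tt wait-for-a-tower} after seeing such a tower, so the inductive hypothesis supplies the long-lived good agent. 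Your version dispenses with the minimality bookkeeping and the forward induction over the remaining rounds; the two arguments are otherwise of comparable weight. As for the ``main obstacle'' you flag --- certifying that the good agent produced by the inductive hypothesis lands at the same node $v$, i.e.\ that co-located agents in state {\tt tower} with equal index perform the same next edge traversal of $EXPLO(n)$ --- this is precisely the lockstep property the paper also needs, and it is simply asserted there from the algorithm (``agents $A$ and $B$ make together the \dots edge traversal and are \dots at the same node''), the justification being that a joining agent is instructed to perform the tower's next traversal and that tower members continue a common execution. So you are not missing anything the paper supplies; just do not discharge it by invoking ``at most one tower per round'', since that is Lemma~\ref{lem:lem4}, proved later using the present lemma, and relying on it here would be circular.
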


\begin{proof}
Assume by contradiction that there is no good agent in tower $T_j$ which has been in state {\tt tower} since round $t-j$. By definition of a tower, every good agent, which is in tower $T_j$, is in state {\tt tower} and has index $j$ in round $t$. Let $A$ be a good agent in $T_j$. Note that agent $A$ exists in view of Remark~\ref{rem:rem1}. The last time agent $A$ decided to transit to state {\tt tower} before round $t$ was necessarily from state {\tt wait-for-a-tower}. Indeed, if that was not the case, that would imply that the last round $r<t$ when agent $A$ decided to transit to state {\tt tower} was from state {\tt tower builder}. However, according to Algorithm Byz-Known-Size, agent $A$ would have entered state {\tt tower} with index $0$ in round $r+1$ and would have stayed in state {\tt tower} until round $t$ in which it has index $j$. Hence, $r+1=t-j$ and agent $A$ would have been in state {\tt tower} since round $t-j$, which would be a contradiction with our starting assumption.

Denote by $s\leq t$ the last round when agent $A$ enters state {\tt tower} from state {\tt wait-for-a-tower}. Without loss of generality, let us assume, for each good agent in $T_j$, the last time $s'\leq t$ it enters state {\tt tower} is such that $s\leq s'\leq t$. According to Algorithm Byz-Known-Size, agent $A$ is in state {\tt tower} and has index $j-(t-s)$ in round $s$. Moreover, in round $s-1$ agent $A$ decides to transit from state {\tt wait-for-a-tower} to state {\tt tower} because it is at a node in which there are at least $f+1$ agents in state {\tt tower} having index $j-(t-s)-1$: among them there is necessarily a good agent $B$ as there are at most $f$ Byzantine agents.  Still according to Algorithm Byz-Known-Size, agents $A$ and $B$ make together the $j-(t-s)$-th edge traversal of $EXPLO(n)$ and are in state {\tt tower} with index $j-(t-s)$ at the same node $v$ in round $s$. In round $s$, there are at least $f+1$ agents in state {\tt tower} having index $j-(t-s)$, otherwise that would imply either there is no tower located at node $v$ in round $t$ if $t=s$, or agent $A$ is in state {\tt failure} in round $s+1$ if $s<t$: in the former case we would get a contradiction with the existence of tower $T_j$, and in the latter case we would get a contradiction with the definition of round $s$. Thus, if $s<t$ then we know from Algorithm Byz-Known-Size that agents $A$ and $B$ make together the $j-(t-s)+1$-th edge traversal of $EXPLO(n)$ and are in state {\tt tower} with index $j-(t-s)+1$ at the same node $u$ in round $s+1$. Following a similar reasoning to that used above, we can prove that there are at least $f+1$ agents in state {\tt tower} having index $j-(t-s)+1$ in round $s+1$ at node $u$. So, if $s+1<t$, then we know from Algorithm Byz-Known-Size that agents $A$ and $B$ make together the $j-(t-s)+2$-th edge traversal of $EXPLO(n)$ and are in state {\tt tower} with index $j-(t-s)+2$ at the same node $w$ in round $s+2$. By induction, we can then prove that agents $A$ and $B$ are together in round $t$ and that agent $B$ belongs to $T_j$. However, in round $t$ we know that agent $B$ has been in state {\tt tower} without interruption since at least round $s-1$, which contradicts the fact that for each good agent in $T_j$ the last time $s'\leq t$ it enters state {\tt tower} is such that $s\leq s'\leq t$. Hence the last time agent $A$ transited to state {\tt tower} before round $t$ was neither from state {\tt wait-for-a-tower} nor from state {\tt tower builder}. Since an agent can transit to state {\tt tower} only from states {\tt wait-for-a-tower} or {\tt tower builder}, we get a contradiction with the existence of $A$, and thus the lemma holds.
\end{proof}

\begin{lemma}
\label{lem:lem3}
Let $A$ be a good agent that is either {\tt orange}, or {\tt red} or in state {\tt tower} in round $r$. Let $\rho_i$ be the configuration tested by agent $A$ in round $r$. All good agents, which do not declare gathering is over before round $r$, test $\rho_i$ in round $r$.
\end{lemma}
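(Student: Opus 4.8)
The plan is to turn the statement into an arithmetic claim about which block of $Q(n)$ consecutive rounds an agent's phases occupy, and then to prove the one delicate inequality by strong induction on $r$.

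First I would fix the timing. By Proposition~\ref{prop:prop2}, Part~1 has a duration $D_1$ depending only on $n$ and identical for all good agents, and by Lemma~\ref{lem:lem1} each phase of a good agent lasts exactly $Q(n)=10T(EXPLO(n))+9n$ rounds unless gathering is declared over during it. Hence, if $w_X$ denotes the wake-up round of a good agent $X$, the first round of its phase $j$ is $\beta_X(j)=w_X+D_1+(j-1)Q(n)$; the intervals $[\beta_X(j),\beta_X(j)+Q(n)-1]$ ($j\ge 1$) are pairwise disjoint, and if $X$ has not declared gathering over before round $r$ then at round $r$ it lies in exactly one phase $i_X$ with $\beta_X(i_X)\le r\le\beta_X(i_X)+Q(n)-1$. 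Since $\beta_A(j)-\beta_B(j)=w_A-w_B$, Proposition~\ref{prop:prop1} gives $|\beta_A(j)-\beta_B(j)|\le T(EXPLO(n))$ for all $j$ and all good $A,B$. So it is enough to prove that, for every good $B$ not declaring gathering over before round $r$, we have $\beta_B(i)\le r\le\beta_B(i)+Q(n)-1$: disjointness of $B$'s windows then forces $i_B=i$, i.e. $B$ tests $\rho_i$ at round $r$.

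The right inequality is immediate: being {\tt orange}, {\tt red} or in state {\tt tower} at round $r$, agent $A$ is not in state {\tt failure} in phase $i$, so the bound established in the proof of Lemma~\ref{lem:lem1} gives $r-\beta_A(i)\le 9T(EXPLO(n))+8n$, and with $\beta_A(i)\le\beta_B(i)+T(EXPLO(n))$ and $n\ge 1$ this yields $r\le\beta_B(i)+Q(n)-1$. For the left inequality it suffices to produce a good agent $C$ and a round $\tau\le r$ at which $C$ is in phase $i$ with $\tau-\beta_C(i)\ge T(EXPLO(n))+n$, since then $\beta_B(i)\le\beta_C(i)+T(EXPLO(n))\le\tau-n<\tau\le r$. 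I would find such $C,\tau$ by strong induction on $r$ (the base case is vacuous: no good agent reaches {\tt orange}/{\tt red}/{\tt tower} before a threshold depending on $n$), splitting on how $A$ most recently entered its current state (necessarily within phase $i$). If $A$ is {\tt orange}, it completed the whole $T(EXPLO(n))+n$-round wait of {\tt yellow}, so $(C,\tau)=(A,r)$ works; likewise if $A$ is {\tt red} after having been {\tt orange}, or in state {\tt tower} after completing the whole wait of {\tt red}. If $A$ is {\tt red} coming straight from {\tt yellow}, pick a round $\sigma<r$ of that {\tt yellow} period at which $A$ saw at least $f+1$ {\tt orange} agents at its node; one of them, $A'$, is good; $A$ is still running at round $\sigma$ and stays in phase $i$ on $[\sigma,r]$, so applying the inductive hypothesis with $A'$ as the distinguished agent puts $A$ --- hence $A'$ --- in phase $i$ at round $\sigma$, and since $A'$ is {\tt orange}, $(C,\tau)=(A',\sigma)$ works. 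Finally, if $A$ is in state {\tt tower} coming from {\tt wait-for-a-tower}, pick the round $\sigma<r$ at which $A$ saw a group of $f+1$ agents in state {\tt tower} with a common index $t$ (a tower $\mathcal{T}_t$ at its node); by Lemma~\ref{lem:lem2} that tower contains a good agent $G$ that has been in state {\tt tower} since round $\sigma-t$, which therefore entered {\tt tower} from {\tt tower builder} with index $0$ at round $\sigma-t$ after completing a full {\tt red} wait; as above the inductive hypothesis applied to $G$ puts $A$ --- hence $G$ --- in phase $i$ at round $\sigma$, so $G$ is in phase $i$ already at round $\sigma-t$, and $(C,\tau)=(G,\sigma-t)$ works.

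The main obstacle is exactly this left-hand bound in the last two ``handshake'' cases: a good agent can drop into {\tt red} from {\tt yellow}, or into {\tt tower} from {\tt wait-for-a-tower}, after only a couple of rounds of the current phase, so its own history is too short to bound $\beta_A(i)$ --- which is precisely why the lemma rules out {\tt setup}, {\tt yellow}, {\tt wait-for-a-tower} and {\tt failure}. The resolution is that such a transition is always triggered by the presence of another agent ({\tt orange}, or an old member of a tower located via Lemma~\ref{lem:lem2}) that is itself already deep inside phase $i$; the induction on $r$ is what lets us certify that the triggering agent is in the same phase as $A$ and then transfer this to every good agent. The only arithmetic to check is that $10T(EXPLO(n))+8n\le Q(n)-1$ and that each of the listed states forces at least $T(EXPLO(n))+n$ rounds of the current phase to have elapsed, both of which are routine.
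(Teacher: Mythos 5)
Your proposal is correct and is essentially the paper's own argument: an induction on the round (the paper phrases it as a minimal counterexample round $x$), the timing machinery of Proposition~\ref{prop:prop1} and Lemma~\ref{lem:lem1} to turn phase positions into wake-up--delay arithmetic, and the transfer of phase-depth through a good witness guaranteed by $k\geq f+1$ (for the {\tt yellow}$\to${\tt red} handshake) and by Lemma~\ref{lem:lem2} (for the {\tt wait-for-a-tower}$\to${\tt tower} handshake). Your explicit phase-window bookkeeping $\beta_X(j)$ and the single crude bound $9T(EXPLO(n))+8n$ for excluding agents that are ahead are just a streamlined repackaging of the paper's per-case elapsed-time contradictions, not a different route.
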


\begin{proof}
Assume by contradiction there exists some round $x$ (playing the role of round $r$ in the statement of the lemma) such that the lemma does not hold. Without loss of generality, let us assume that $x$ is the first round for which this lemma is false.

Let $B$ be a good agent that does not declare gathering is over before round $x$.

Since agent $A$ is either {\tt orange}, or {\tt red} or in state {\tt tower}, in round $x$ it has already finished part~1 of Algorithm Byz-Known-Size which consists in executing procedure $EXPLO(n)$ and then traversing all edges traversed in $EXPLO(n)$ in the reverse order and the reverse direction. In view of the fact that procedure $EXPLO(n)$ allows an agent to visit all nodes of the graph, we know that agent $B$ is not dormant in round $x$.

Since agent $B$ is not dormant in round $x$, to prove the lemma it is then enough to show that in round $x$ it cannot be in part~1 of Algorithm Byz-Known-Size or test a configuration $\rho_j$ such that $j<i$ or $j>i$.

First assume by contradiction that agent $B$ tests a configuration $\rho_j$ in round $x$ such that $j<i$. Let us consider the case when agent $A$ is {\tt orange} in round $x$. Since agents $A$ and $B$ do not declare that gathering is over before round $x$, Lemma~\ref{lem:lem1} implies that in round $x$ agent $B$ has spent at most $2T(EXPLO(n))+j*Q(n)\leq 2T(EXPLO(n))+(i-1)Q(n)$ rounds since its wake-up. In view of Lemma~\ref{lem:lem1} and the period of $T(EXPLO(n))+n$ rounds which is necessary to wait in substate {\tt yellow} to enter substate {\tt orange}, 
when in round $x$ agent $A$ has spent at least $2T(EXPLO(n))+(i-1)*Q(n)+T(EXPLO(n))+n$ rounds since its wake up. This implies that the delay between the starting rounds of agents $A$ and $B$ is greater than $T(EXPLO(n))$, which contradicts Proposition~\ref{prop:prop1}. 

Let us consider the case when agent $A$ is in state {\tt tower} in round $x$. If agent $A$ is {\tt red} or in state {\tt tower} in round $x-1$, then according to Algorithm Byz-Known-Size it also tests configuration $\rho_i$ in round $x-1$. However, since agent $B$ tests $\rho_j$ in round $x$, according to Algorithm Byz-Known-Size it is either in part~1 of the algorithm, or also in phase $j$, or in phase $j-1$ (only if $j>1$). Hence if agent $A$ is {\tt red} or in state {\tt tower} in round $x-1$, then the lemma does not hold in round $x-1$ and we get a contradiction with the fact that $x$ is the first round for which this lemma is false. So, assume that agent $A$ is neither {\tt red} nor in state {\tt tower} in round $x-1$. According to Algorithm Byz-Known-Size it is in state {\tt wait-for-a-tower} in round $x-1$. The reason for which it decides to transit to state {\tt tower} in round $x-1$ is due to the fact that there is a tower $\mathcal{T}_s$ in round $x-1$ for some index $s$. As for round $x$, agent $A$ also tests configuration $\rho_i$ in round $x-1$ in view of Algorithm Byz-Known-Size. By assumption, the lemma holds in round $x-1$, and thus there is at least a good agent in state {\tt tower} belonging to $\mathcal{T}_s$ in round $x-1$ which tests configuration $\rho_i$. However, as mentioned above agent $B$ cannot test configuration $\rho_i$ in round $x-1$. As a consequence, the lemma does not hold in round $x-1$ and we get a contradiction with fact that $x$ is the first round for which this lemma is false.

Let us now consider the case when agent $A$ is  {\tt red} in round $x$. According to Algorithm Byz-Known-Size it also tests configuration $\rho_i$ in round $x-1$ and there is a good agent $C$ (not necessarily different from $A$) that is {\tt orange} or {\tt red}. Since the lemma holds in round $x-1$, agent $C$ also tests $\rho_i$ in round $x-1$. From this point, using as above the fact that agent $B$ does not test configuration $\rho_i$ in round $x-1$, we know that the lemma does not hold in round $x-1$ and we obtain the same contradiction.

Hence agent $B$ does not test a configuration $\rho_j$ in round $x$ such that $j<i$. In a similar way, we can prove that agent $A$ cannot be in part~1 of Algorithm Byz-Known-Size in round $x$. So, it remains to prove that agent $A$ does not test a configuration $\rho_j$ in round $x$ such that $j>i$. 

Assume by contradiction agent $B$ tests a configuration $\rho_j$ in round $x$ such that $j>i$. According to Algorithm Byz-Known-Size and Proposition~\ref{prop:prop2}, if agent $A$ is {\tt orange} (resp. {\tt red}) in round $x$, we know that it has spent at most $2T(EXPLO(n))+3n$ (resp. $3T(EXPLO(n))+4n$) rounds in phase $i$ when in round $x$. Thus, in round $x$, if agent $A$ is {\tt orange} (resp. {\tt red}), agent $A$ has spent at most $2T(EXPLO(n))+(i-1)*Q(n)+3T(EXPLO(n))+4n$ rounds since its wake up in view of Lemma~\ref{lem:lem1}. However according to Lemma~\ref{lem:lem1}, in round $x$ agent $B$ has spent at least $2T(EXPLO(n))+i*Q(n)=2T(EXPLO(n))+(i-1)*Q(n)+(10T(EXPLO(n))+9n)$ rounds since its wake up. Hence, the delay between the starting rounds of agents $A$ and $B$ is greater than $T(EXPLO(n))$, which contradicts Proposition~\ref{prop:prop1}. Thus agent $A$ is necessary in state {\tt tower} of phase $i$ in round $x$. According to Algorithm Byz-Known-Size, this implies agent $A$ is {\tt red} in round $x-1$ or there is a tower $\mathcal{T}_j$ in round $x-1$ for some $j\leq T(EXPLO(n))-1$. In the first case, we can show as above that the delay between the starting rounds of agents $A$ and $B$ is greater than $T(EXPLO(n))$ in round $x-1$ which is a contradiction with Proposition~\ref{prop:prop1}. So let us focus on the second case in which there is a tower $\mathcal{T}_j$ in round $x-1$. According to Algorithm Byz-Known-Size, agent $A$ tests configuration $\rho_i$ in round $x-1$. Since the lemma holds in round $x-1$, we then know that all good agents of tower $\mathcal{T}_j$ also test configuration $\rho_i$ in round $x-1$: among them there is at least one good agent $C$ that enters state tower of phase $i$ from substate {\tt red} of phase $i$ in some round $x'\leq x-1$ in view of Lemma~\ref{lem:lem2}. Hence in round $x-1$ agent $C$ has spent at most $2T(EXPLO(n))+(i-1)*Q(n)+3T(EXPLO(n))+4n+j\leq 2T(EXPLO(n))+(i-1)*Q(n)+4T(EXPLO(n))+4n$ rounds since its wake up in view of Lemma~\ref{lem:lem1}, while agent $B$ has spent at least $2T(EXPLO(n))+(i-1)*Q(n)+(10T(EXPLO(n))+9n)-1$. As before, we get a contradiction with Proposition~\ref{prop:prop1}. Thus, agent $B$ does not test a configuration $\rho_j$ in round $x$ such that $j>i$.

We then get a contradiction in all cases, which proves the lemma.

\end{proof}

\begin{lemma}
\label{lem:lem5}
Let $A$ be a good agent entering state {\tt tower} from state {\tt tower builder} in round $r$ at node $v$. Let $B$ be a good agent that does not declare that gathering is over before round $r$ and that does not enter state {\tt tower} from state {\tt tower builder} in round $r$ at node $v$. Agent $B$ is in state {\tt wait-for-a-tower} in round $r$.
\end{lemma}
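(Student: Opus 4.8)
The plan is to argue by exhaustion, ruling out every state agent $B$ could occupy in round $r$ except {\tt wait-for-a-tower}. Since $A$ is {\tt red} (hence in {\tt tower builder}) in round $r-1$ and in state {\tt tower} in round $r$, Lemma~\ref{lem:lem3} applied at rounds $r-1$ and $r$ gives that every good agent not declaring gathering before round $r$ -- in particular $B$ -- tests $\rho_i$ in rounds $r-1$ and $r$; hence $B$ executes phase $i$ in round $r$, which already excludes Part~1 and every phase $\ne i$. To locate $B$ inside phase $i$, set $W=T(EXPLO(n))+n$. As $A$ enters {\tt tower} from substate {\tt red} in round $r$, it was {\tt red} at $v$ throughout $[r-W,r-1]$, and this transition to {\tt red} was triggered by at least $k$ (the number of labels of $\rho_i$) agents in substate {\tt orange} at $v$ in round $r-W-1$; since there are at most $f$ Byzantine agents and $k\ge f+1$, one of them, say $G$, is good. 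A good agent reaches substate {\tt orange} only after a full $W$-round {\tt yellow} period, which starts right after state {\tt setup}; by Proposition~\ref{prop:prop2} that {\tt setup} starts in the first round of $G$'s phase $i$ and lasts at most $n$ rounds. Hence $G$ starts phase $i$ at least $2W+1$ rounds before round $r$; by Lemma~\ref{lem:lem1} (every phase before $i$ lasts exactly $Q(n)$ rounds and Part~1 has fixed length) together with Proposition~\ref{prop:prop1}, the agents $A$, $B$, $G$ start phase $i$ within $T(EXPLO(n))$ rounds of one another, so $B$ starts phase $i$ at least $2W+1-T(EXPLO(n))=T(EXPLO(n))+2n+1$ rounds before round $r$.

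Two states fall to this bound directly. State {\tt setup} lasts at most $n<T(EXPLO(n))+2n+1$ rounds, so $B$ has left it; and if $B$ were in substate {\tt yellow} in round $r$, then -- a {\tt yellow} period lasting at most $W$ rounds and being preceded by at most $n$ rounds of {\tt setup} -- $B$ would have started phase $i$ at most $W+n-1=T(EXPLO(n))+2n-1$ rounds before round $r$, a contradiction. For state {\tt tower} I would use Remark~\ref{rem:rem1} and Lemma~\ref{lem:lem2}: if $B$ is ever in state {\tt tower} by round $r$, then at some round $\le r$ its node hosts a tower, which contains a good agent and, by Lemma~\ref{lem:lem2}, descends from a {\tt tower builder}$\to${\tt tower} transition performed by a good agent in some round $r'\le r$; since a phase admits at most one successful {\tt tower builder}$\to${\tt tower} transition (the content of the preceding lemma) and $A$'s transition is in round $r$, either $r'=r$ and $B$ is performing, at $v$, precisely the transition excluded by hypothesis, or we have a contradiction. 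The same reasoning forbids $B$ to enter {\tt tower} from {\tt wait-for-a-tower} by round $r$ (it would require observing a tower strictly before round $r$). State {\tt failure} follows: it is reached either from {\tt tower} -- excluded above -- or through the {\tt wait-for-a-tower} time-out of $5T(EXPLO(n))+4n$ rounds, which is impossible because by round $r$ agent $B$ has spent in phase $i$ at most $n+3W$ rounds (the maximum from entering {\tt setup} to the end of a {\tt red} period) plus the at most $T(EXPLO(n))$ rounds by which it may lag $A$, i.e.\ at most $4T(EXPLO(n))+4n<5T(EXPLO(n))+4n$ rounds. There remain substates {\tt orange} and {\tt red}, which I would rule out by combining: that the handshake forces at least $k$, hence at least $k-f\ge1$ good, agents to be simultaneously {\tt orange} (resp.\ {\tt red}) at the node hosting $B$'s episode; that each good agent has a single uninterrupted {\tt tower builder} episode, located at one fixed node (it cannot re-enter {\tt tower builder} in the same phase); and that the $\le T(EXPLO(n))$ spread of phase-$i$ start times is strictly smaller than the $W$-round handshake windows -- which together force $B$'s episode to coincide with $A$'s episode at $v$ and to end in round $r$, so that $B$ again does the excluded transition.

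The point I expect to be the main obstacle is precisely this last exclusion of substates {\tt orange} and {\tt red}: unlike {\tt setup} and {\tt yellow}, it is not settled by counting rounds (the relevant windows are mutually compatible), so one must genuinely exploit that at most $f$ Byzantine agents cannot feed the $k$-agent handshake at two distinct nodes in the same round, together with the fact that a good agent supplies its tower-builder ``weight'' to a single node and with the uniqueness of the successful transition established earlier. Everything else is interval arithmetic and the tower bookkeeping sketched above.
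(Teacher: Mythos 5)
Your overall shape (exhaust the states $B$ could be in, using Lemma~\ref{lem:lem3} to put $B$ in phase $i$, timing via Proposition~\ref{prop:prop1}/Lemma~\ref{lem:lem1} for {\tt setup} and {\tt yellow}, and a count of agents versus the $k$ labels of $\rho_i$ for the colored substates) is the right one, and your {\tt setup}/{\tt yellow} exclusions are essentially the paper's. But there is a genuine gap at the step you lean on most: you dispose of the {\tt tower} case (and, through it, the {\tt failure}-entered-from-{\tt tower} case and part of the {\tt orange}/{\tt red} case) by citing ``a phase admits at most one successful {\tt tower builder}$\to${\tt tower} transition (the content of the preceding lemma).'' No such preceding lemma exists: the lemmas available at this point are only Lemmas~\ref{lem:lem1}, \ref{lem:lem2} and \ref{lem:lem3}. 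The uniqueness statement you invoke is essentially Lemma~\ref{lem:lem4} (``at most one tower per round''), which in the paper is proved \emph{after} and \emph{from} the present lemma, so your argument is circular. Moreover, even granting some uniqueness of the transition, your dichotomy ``either $r'=r$ and $B$ performs the transition at $v$, or contradiction'' skips the real cases: $B$ could enter {\tt tower} with index $0$ in round $r$ at a node $w\ne v$ (two disjoint groups of $\ge k$ {\tt red} agents in round $r-1$, killed by the label-counting argument, not by uniqueness), or be in {\tt tower} with positive index, which forces (via Lemma~\ref{lem:lem2}) a good agent to have done the {\tt tower builder}$\to${\tt tower} transition in an earlier round while $A$ was still {\tt red}. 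The paper handles exactly these situations by taking $r$ to be the \emph{first} round in which the lemma fails and deriving a contradiction with minimality; your proposal has no such induction, and without it the {\tt tower} (index $>0$) and {\tt failure}-from-{\tt tower} subcases are unsupported.

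The {\tt orange}/{\tt red} exclusion, which you flag yourself as the main obstacle, is also left short of a proof: the needed argument is not that $B$'s episode ``coincides with $A$'s at $v$,'' but a direct counting contradiction --- in the relevant round there are $\ge k$ {\tt red} agents at $A$'s node and, disjointly (different colors), $\ge k$ {\tt yellow}/{\tt orange} agents at $B$'s node, hence at least $2k-f\ge k+1$ good agents in state {\tt tower builder} of phase $i$; each such good agent passed {\tt setup}, so its label occurs in $\rho_i$, giving $k+1$ distinct labels in a configuration with only $k$ labels. Your sketch gestures at the Byzantine-budget idea but again appeals to the ``uniqueness established earlier,'' which is unavailable. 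So the proposal is not a complete proof: it needs the first-failing-round induction (or an equivalent device) for the {\tt tower}/{\tt failure} cases, and the explicit label-counting contradiction for {\tt orange}/{\tt red} and for {\tt tower} with index $0$.
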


\begin{proof}
Assume by contradiction that there exists some round $x$ (playing the role of round $r$ in the statement of the lemma) such that the lemma does not hold. Without loss of generality, let us assume that $x$ is the first round for which this lemma is false. Let $\rho_i$ be the configuration that is tested by agent $A$ in round $x$. Let $k$ be the number of labels in configuration $\rho_i$ (recall that all the labels in configuration $\rho_i$ are distinct as they are meant to represent only the labels of good agents). According to Lemma~\ref{lem:lem3}, we know that agent $B$ also tests configuration $\rho_i$ in round $x$. To get a contradiction, we will prove that agent $B$ cannot be in any state of phase $i$ in round $x$ except state {\tt wait-for-a-tower}.

Agent $B$ cannot be in state {\tt setup} of phase $i$ in round $x$. Indeed, since agents $A$ and $B$ do not declare that gathering is over before round $x$, in view of Lemma~\ref{lem:lem1} and Proposition~\ref{prop:prop1}, we know that the delay between the two rounds in which they start testing configuration $\rho_i$ is at most $T(EXPLO(n))$. However by Proposition~\ref{prop:prop2}, agent $B$ cannot spend more than $n$ rounds in state {\tt setup} of phase $i$, while agent $A$ needs to spend at least $T(EXPLO(n))+n+2$ rounds in phase $i$ before entering state {\tt tower} of phase $i$. Hence agent $B$ cannot be in state {\tt setup} in round $x$, otherwise we would get a contradiction with the fact that the delay between the two rounds in which they start testing configuration $\rho_i$ is at most $T(EXPLO(n))$. 

In round $x$ agent $B$ cannot be in state {\tt tower} of phase $i$. To show this, we first prove that in round $x$ agent $B$ cannot be in state {\tt tower} of phase $i$ with index $0$ (we will then prove it cannot be in this state even with any positive index). If in round $x$ agent $B$ is in state {\tt tower} of phase $i$ with index $0$, then according to Algorithm Byz-Known-Size agents $A$ and $B$ are {\tt red} in round $x-1$ and enter state {\tt tower} in round $x$. In round $x-1$, agents $A$ and $B$ are not at the same node, otherwise according to Algorithm Byz-Known-Size they are still together in round $x$ and we then get a contradiction with the assumption that agent $B$ does not enter state {\tt tower} from state {\tt tower builder} in the same round and at the same node as agent $A$. Hence in round $x-1$ agents $A$ and $B$ are {\tt red} at distinct nodes and decide to transit to state {\tt tower}: according to Lemma~\ref{lem:lem3}, they do so by testing the same configuration $\rho_i$.
According to Algorithm Byz-Known-Size, there are then at least $2k$ {\tt red} agents in round $x-1$: a group of at least $k$ at the node where agent $A$ is located, and another group of at least $k$ at the node of agent $B$. Since  $k\geq f+1$, there is at least $k+1$ good agents that are {\tt red} in round $x-1$. Thus, according to Lemma~\ref{lem:lem3}, there are at least $k+1$ good agents that are {\tt red} in round $x-1$ which test the same configuration $\rho_i$. However, this implies that each of these good {\tt red} agents has its label in configuration $\rho_i$ (otherwise it would have been impossible for at least one of them to transit to state {\tt tower builder} of phase $i$ according to the rules of state {\tt setup}): there are then at least $k+1$ distinct labels in $\rho_i$ which contradicts the definition of $k$. Let us now prove that in round $x$ agent $B$ cannot be in state {\tt tower} and have an index $j>0$. If agent $B$ is in state {\tt tower} of phase $i$ and has index $j>0$ in round $x$, then, in view of Lemma~\ref{lem:lem2}, there is a good agent $C$ in state {\tt tower} having index $j-1$ in round $x-1$ which has been in state {\tt tower} since round $x-j$. Hence, in round $x-j$ agent $C$ would be in state {\tt tower} with index $0$, which means that agent $C$ enters 
state {\tt tower} from state {\tt tower builder} in round $x-j$ according to Algorithm Byz-Known-Size. Since index $j$ cannot be greater than $T(EXPLO(n))$, we know from Algorithm Byz-Known-Size that agent $A$ is {\tt red} in round $x-j$ as it necessarily spent $T(EXPLO(n))+n$ rounds as a {\tt red} agents before entering state {\tt tower} in round $x$. Hence in round $x-j$, there is an agent $C$ that enters state {\tt tower} while agent $A$ is {\tt red} in state {\tt tower builder}: the lemma does not hold in round $x-j$, which contradicts the fact that $x$ is the first round when the lemma is false.

Now that the cases {\tt setup} and {\tt tower} have been excluded, it remains to prove that in phase $i$, agent $B$ cannot be either in state {\tt tower builder} or in state {\tt failure}.

Let us first prove that agent $B$ cannot be in state {\tt tower builder} of phase $i$ in round $x$. To prove this, it is enough to show that it cannot be either {\tt yellow}, or {\tt orange}, or {\tt red}. 

If in round $x$ agent $B$ is {\tt yellow} and tests configuration $\rho_i$, it cannot be {\tt yellow} and test configuration $\rho_i$ in round $x-T(EXPLO(n))-n-1$. Hence when in round $x-T(EXPLO(n))-n-1$, agent $B$ has spent at most $2T(EXPLO(n))+(i-1)*Q(n)+n$ rounds since its wake up in view of Algorithm Byz-Known-Size, Proposition~\ref{prop:prop2} and Lemma~\ref{lem:lem1}. However, according to Algorithm Byz-Known-Size, when in round $x-T(EXPLO(n))-n-1$, agent $A$ is {\tt yellow} or {\tt orange}, tests configuration $\rho_i$ and sees at least $k$ {\tt orange} agents at its current node. From Lemma~\ref{lem:lem3} and the fact that $k\geq f+1$, there is a good agent $C$ that is {\tt orange} and tests configuration 
$\rho_i$ in round $x-T(EXPLO(n))-n-1$. When in this round, by Lemma~\ref{lem:lem1} and Proposition~\ref{prop:prop2}, agent $C$ has spent at least $3T(EXPLO(n))+(i-1)*Q(n)+n+1$ rounds since its wake up (i.e. at least $T(EXPLO(n))+1$ rounds more than agent $B$), which contradicts Proposition~\ref{prop:prop1}.

In round $x$ agent $B$ cannot be {\tt orange}. Indeed, if agent $B$ is {\tt orange} in round $x$, then according to Algorithm Byz-Known-Size and Lemma~\ref{lem:lem3} we have the following fact: there are at least $k$ agents such that each of them is either {\tt yellow} or {\tt orange} in round $x-1$ which test the same configuration $\rho_i$, and there are at least $k$ {\tt red} agents in round $x-1$ which test the same configuration $\rho_i$. Since  $k\geq f+1$, there are at least $k+1$ good agents that are in state {\tt tower builder} in round $x-1$ which test the same configuration $\rho_i$. However, this implies that each of these good agents has its label in configuration $\rho_i$ (otherwise it is impossible to transit to state {\tt tower builder} of phase $i$ according to the rules of state {\tt setup}): there are then at least $k+1$ distinct labels in $\rho_i$ which contradicts the definition of $k$. In round $x$, agent $B$ cannot be {\tt red}. Indeed, if agent $B$ is {\tt red}
in round $x$ that means there is a round $x-p$ ($p\leq T(EXPLO(n))+n$) in which it is in a group of at least $k$ agents having a color belonging to $\{yellow, orange\}$ while agent $A$ is in a group of at least $k$ {\tt red} agents. From this point, using similar arguments to those used just above to prove agent $B$ is not {\tt orange} in round $x$, we get a contradiction with the definition of $k$.

To end the proof, it remains to show that agent $B$ is not in state {\tt failure} of phase $i$ in round $x$. Assume by contradiction, it is in state {\tt failure} of phase $i$ in round $x$. Let $x'<x$ be the last time agent $B$ decided to transit to state {\tt failure} (recall that when an agent decides to transit from some state X to some state Y in some round $t$, the agent remains in state X in round $t$ and is in state Y in round $t+1$). In round $x'$, agent $B$ is either in state {\tt tower} of phase $i$ or in state {\tt wait-for-a-tower} of phase $i$. If agent $B$ is in state {\tt wait-for-a-tower} of phase $i$ and decides to transit to state {\tt failure} in round $x'$, then by Lemma~\ref{lem:lem1}, Proposition~\ref{prop:prop2} and Algorithm Byz-Known-Size, when in round $x$ agent $B$ has spent at least $2T(EXPLO(n))+(i-1)Q(n)+5T(EXPLO(n))+4n$ rounds since its wake up. However still by Lemma~\ref{lem:lem1}, Proposition~\ref{prop:prop2} and Algorithm Byz-Known-Size, when in round $x$ agent $A$ has spent at most $2T(EXPLO(n))+(i-1)Q(n)+3T(EXPLO(n))+4n+1$ rounds since its wake up, which contradicts Proposition~\ref{prop:prop1}. If agent $B$ is in state {\tt tower} of phase $i$ with some index $j\geq 0$ in round $x'$, then in view of Lemmas~\ref{lem:lem2} and~\ref{lem:lem3} as well as Algorithm Byz-Known-Size, we know there is a good agent $C$ (not necessarily different from agent $B$) that enters state {\tt tower} of phase $i$ from state {\tt tower builder} of phase $i$ in round $x'-j$. Hence in view of Lemma~\ref{lem:lem3} and the fact the lemma holds in all rounds prior to round $x$, in round $x'-j$ agent $A$ either also enters state {\tt tower} of phase $i$ from state {\tt tower builder} of phase $i$, or is in state {wait-for-a-tower} of phase $i$. However in both these cases, it is then impossible for agent $A$ to enter state {\tt tower} of phase $i$ from state {\tt tower builder} of phase $i$ in round $x$ according to Algorithm Byz-Known-Size, which is a contradiction.

Thus, agent $B$ can be only in state {\tt wait-for-a-tower} of phase $i$ in round $x$, which proves the lemma.

\end{proof}

\begin{lemma}
\label{lem:lem4}
In any round there is at most one tower.
\end{lemma}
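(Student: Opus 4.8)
The plan is to argue by contradiction: suppose some round is witnessed by two distinct towers, and take $t$ to be the \emph{first} such round. So there are towers $\mathcal{T}_j$ at a node $u$ and $\mathcal{T}_{j'}$ at a node $u'$ in round $t$, where either $j \ne j'$ or $u \ne u'$. By Remark~\ref{rem:rem1} each tower contains a good agent, and by Lemma~\ref{lem:lem2} there is a good agent $A \in \mathcal{T}_j$ that has been in state {\tt tower} continuously since round $t-j$, and similarly a good agent $B \in \mathcal{T}_{j'}$ in state {\tt tower} continuously since round $t-j'$. Being in state {\tt tower} with index $0$ at those rounds, both $A$ and $B$ entered state {\tt tower} from state {\tt tower builder} (the only way to acquire index $0$): $A$ at round $t-j$ at node $u_0$, and $B$ at round $t-j'$ at node $u_0'$. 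The first step is to pin these entry events down and then use Lemma~\ref{lem:lem5} to derive a contradiction.

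First I would handle the case $t-j = t-j'$, i.e.\ $j=j'$. Then $A$ and $B$ both enter state {\tt tower} from {\tt tower builder} in the same round $r := t-j$. If they do so at the same node $v$, then by the rules of state {\tt tower} (they walk together through $EXPLO(n)$ as long as the index-synchronization condition holds, and neither has entered state {\tt failure} since both are still in {\tt tower} at round $t$) they remain co-located in every round up to $t$, so $u = u'$, contradicting the assumption that the two towers are distinct. If instead they enter at distinct nodes in round $r$, then apply Lemma~\ref{lem:lem5} with the roles ``$A$'' $=A$ and ``$B$'' $=B$: since $B$ does not declare gathering is over before round $r$ (it is still running, in state {\tt tower}, at round $t > r$) and $B$ does not enter {\tt tower} from {\tt tower builder} at $A$'s node $u_0$ in round $r$, Lemma~\ref{lem:lem5} forces $B$ to be in state {\tt wait-for-a-tower} in round $r$ — contradicting that $B$ enters {\tt tower} from {\tt tower builder} in round $r$.

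Next I would handle $j \ne j'$; without loss of generality $t-j < t-j'$, so $A$ enters {\tt tower} from {\tt tower builder} strictly earlier, at round $r_A := t-j$, while $B$ does so at round $r_B := t-j'$ with $r_A < r_B \le t$. Consider round $r_B$. Agent $B$ enters {\tt tower} from {\tt tower builder} at round $r_B$ at node $u_0'$, and $B$ does not declare gathering over before $r_B$. What is $A$ doing in round $r_B$? Since $A$ has been in state {\tt tower} continuously from $r_A$ through $t$, and $r_A < r_B \le t$, agent $A$ is in state {\tt tower} in round $r_B$ — in particular $A$ is \emph{not} in state {\tt wait-for-a-tower} in round $r_B$. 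But now apply Lemma~\ref{lem:lem5} with ``$A$'' $= B$ (the agent entering {\tt tower} from {\tt tower builder} in round $r_B$) and ``$B$'' $= A$: $A$ does not declare gathering over before $r_B$, and $A$ does not enter {\tt tower} from {\tt tower builder} in round $r_B$ (it is already there), so Lemma~\ref{lem:lem5} forces $A$ to be in {\tt wait-for-a-tower} in round $r_B$, a contradiction. This closes every case, so the lemma holds.

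The main obstacle I anticipate is making sure the hypotheses of Lemma~\ref{lem:lem5} are cleanly met in each invocation — in particular that the agent playing the role of ``$B$'' there genuinely ``does not declare that gathering is over before round $r$'' (handled because that agent is still actively in state {\tt tower} at the strictly later round $t$, and once an agent has declared gathering over it stops) and genuinely ``does not enter state {\tt tower} from state {\tt tower builder} in round $r$ at node $v$'' (handled in the $j=j'$ split by a separate co-location sub-argument, and in the $j\ne j'$ case because the relevant agent has index $>0$, hence did not just enter from {\tt tower builder}). The one spot needing a little care is the co-location argument in the first case: I would justify it by noting that an agent in state {\tt tower} only leaves that state by transiting to {\tt failure}, and the {\tt tower}-state condition (a group of $\ge f+1$ {\tt tower}-agents sharing the same index) together with synchronized execution of $EXPLO(n)$ keeps any two such good agents, once co-located with equal index, co-located with equal index until one of them leaves — which neither does before round $t$.
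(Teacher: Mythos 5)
Your proof is correct and takes essentially the same route as the paper: extract via Lemma~\ref{lem:lem2} a good agent in each tower that has been in state {\tt tower} continuously (hence entered it from state {\tt tower builder} with index $0$), then invoke Lemma~\ref{lem:lem5} to force the other agent into state {\tt wait-for-a-tower} at an entry round, contradicting either its own entry or its being already in state {\tt tower}, with the same-round/same-node case dispatched by the agents staying co-located and thus forming a single tower. The only cosmetic differences are that you apply Lemma~\ref{lem:lem5} at the later entry round instead of the earlier one and dispense with the appeal to Lemma~\ref{lem:lem3}, which is indeed not needed since Lemma~\ref{lem:lem5} makes no assumption about the phases being tested.
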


\begin{proof}
Assume by contradiction there exists some round $r$ when there are two distinct towers $\mathcal{D}_j$ and $\mathcal{T}_k$. The members of $\mathcal{D}_j$ (resp. $\mathcal{T}_k$) are in state {\tt tower} and all have index $j$ (resp. index $k$). Note that in the case where $\mathcal{D}_j$ and $\mathcal{T}_k$ are at the same node, indexes $j$ and $k$ are different from each other, otherwise we would have $\mathcal{D}_j=\mathcal{T}_k$ according to the definition of a tower. In the other case, index $j$ is not necessarily different to index $k$. Without loss of generality, we assume in the rest of this proof that $j\geq k$. By Lemma~\ref{lem:lem2}, there is a good agent $A\in\mathcal{D}_j$ (resp. $B\in\mathcal{T}_k$) which has been in state {\tt tower} since round $r-j$ (resp. round $r-k$). Let $\rho_i$ be the configuration tested by agent $A$ in round $r$. By Lemma~\ref{lem:lem3} agent $B$ also tests configuration $\rho_i$ in round $r$. Hence, in view of the fact that 
agent $A$ (resp. agent $B$) is in state {\tt tower} of phase $i$ with index $j$ (resp. index $k$) in round $r$, agent $A$ (resp. agent $B$) enters state {\tt tower} of phase $i$ in round $r-j$ (resp. round $r-k$) from state {\tt tower builder} of phase $i$. Denote by $u$ the node occupied by agent $A$ in round $r-j$. By Lemmas~\ref{lem:lem3} and~\ref{lem:lem5}, in round $r-j$ agent $B$ is either in state {\tt wait-for-a-tower} of phase $i$ or also enters state {\tt tower} of phase $i$ from state {\tt tower builder} at node $u$.  The first case implies that agent $B$ cannot enter state {\tt tower} of phase $i$ from state {\tt wait-for-a-tower} of phase $i$ in round $r-k$, which is a contradiction. The second case implies that agents $A$ and $B$ enter together state {\tt tower} of phase $i$ from state state {\tt tower builder} in the same round $r-j=r-k$ and at the same node, and thus belong to the same tower in round $r$ according to Algorithm Byz-Known-Size, which is also a contradiction and thus proves the lemma.

\end{proof}

We are now ready to prove Lemmas~\ref{lem:6} and~\ref{lem:7}.

\begin{lemma}
\label{lem:6}
If a good agent declares gathering is over at node $v$ in round $r$, then all good agents are at node $v$ in round $r$ and declare that gathering is over in round $r$.
\end{lemma}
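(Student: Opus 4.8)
The plan is to start from the agent $A$ that declares gathering is over at node $v$ in round $r$, trace back \emph{why} it did so, and show this forces a tower that has just completed a full execution of $EXPLO(n)$ to be present at $v$ in round $r$, and then argue that \emph{every} good agent must be inside that tower at that moment. First I would observe that, by the algorithm, $A$ declares gathering is over only from state {\tt tower} or from state {\tt wait-for-a-tower}, and in both cases the declaration happens precisely because the node where $A$ currently sits (namely $v$) contains a group $\mathcal{S}$ of at least $f+1$ agents in state {\tt tower} all having index $P(n)$. In other words, there is a tower $\mathcal{T}_{P(n)}$ at node $v$ in round $r$. By Remark~\ref{rem:rem1} this tower contains a good agent, and by Lemma~\ref{lem:lem2} some good agent $C\in\mathcal{T}_{P(n)}$ has been in state {\tt tower} continuously since round $r-P(n)$, i.e.\ $C$ entered state {\tt tower} from state {\tt tower builder} at round $r-P(n)$, at some node $w$, and executed $EXPLO(n)$ in full.

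Next I would fix an arbitrary good agent $B$ and show $B$ is at $v$ in round $r$ and declares termination there. The key is to prove $B$ is a member of the tower $\mathcal{T}_{P(n)}$ at round $r$. Suppose first $B$ has not declared gathering is over before round $r$. By Lemma~\ref{lem:lem3} (applied via $C$, who is in state {\tt tower} in round $r-P(n)$ through round $r$) together with Lemma~\ref{lem:lem1} and Proposition~\ref{prop:prop1}, $B$ is testing the same configuration $\rho_i$ as $C$ throughout the relevant window, and the delay between their wake-ups is at most $T(EXPLO(n))$. I would then run an induction going backwards from round $r$: at round $r-P(n)$, by Lemma~\ref{lem:lem5} either $B$ also enters state {\tt tower} from {\tt tower builder} at node $w$ (joining $C$'s tower from the start), or $B$ is in state {\tt wait-for-a-tower}. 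In the latter case, the timing/duration bounds of state {\tt wait-for-a-tower} (at most $5T(EXPLO(n))+4n$ rounds) guarantee that during that window $B$ sees the tower carrying $C$ pass through its node at some index $t<P(n)$, so $B$ makes the $t+1$-th traversal and transits to state {\tt tower}; from then on the standard "two agents with the same index in state {\tt tower} stay together" argument (the same mechanism used in the proof of Lemma~\ref{lem:lem2}) shows $B$ travels with $C$ and reaches index $P(n)$ at node $v$ in round $r$. Either way $B\in\mathcal{T}_{P(n)}$ at $v$ in round $r$, and the termination condition fires for $B$ as well, so $B$ declares gathering is over at $v$ in round $r$.

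It remains to rule out that some good agent $B$ declared gathering is over strictly before round $r$. This is where I expect the main obstacle to lie, since it is a self-referential minimality argument: I would take $r$ to be the \emph{earliest} round in which any good agent declares termination, so that no good agent has declared before round $r$, and the previous paragraph then applies uniformly to all good agents, showing they all declare at $v$ in round $r$ and none earlier. Thus there is a unique round — the first one — in which declarations happen, and in it all good agents are gathered at the same node; this is exactly the statement of the lemma. The delicate points to check carefully are: (i) that the $5T(EXPLO(n))+4n$ budget of {\tt wait-for-a-tower} is large enough to catch the tower regardless of where along its walk $B$ is picked up (here Proposition~\ref{prop:prop1} bounding the wake-up skew, plus Lemma~\ref{lem:lem4} guaranteeing there is at most one tower so $B$ cannot be diverted by a spurious Byzantine-only group, are essential), and (ii) that a Byzantine agent cannot make a good agent enter state {\tt failure} prematurely during this window, which again follows because the tower has at least $f+1$ members all sharing the correct index at every step, so the {\tt tower}-state condition "at least $f+1$ agents in state {\tt tower} with my index" is never violated along the tower's trajectory.
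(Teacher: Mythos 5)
Your skeleton is the same as the paper's: the declaration forces a tower $\mathcal{T}_{P(n)}$ at $v$ in round $r$, Lemma~\ref{lem:lem2} gives a good agent $C$ that has been in state {\tt tower} since round $r-P(n)$, Lemma~\ref{lem:lem5} places every other good agent either in that nascent tower or in state {\tt wait-for-a-tower} at round $r-P(n)$, Lemma~\ref{lem:lem4} identifies any tower $B$ later joins with $C$'s, and taking $r$ minimal disposes of earlier declarations. However, there is a genuine gap exactly at the crux. Your step ``the $5T(EXPLO(n))+4n$ budget guarantees that $B$ sees the tower carrying $C$ pass through its node'' is asserted, not proved, and the dangers you flag are not the real ones: a group of fewer than $f+1$ agents can never trigger the transition out of {\tt wait-for-a-tower}, so Byzantine agents can neither divert $B$ nor force it into {\tt failure}. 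The actual threat is purely a timing one: $B$ may have entered {\tt wait-for-a-tower} very early in phase $i$ (e.g., straight out of {\tt setup} because its label is not in $\rho_i$), so its timeout could in principle expire --- sending it to state {\tt failure} --- before $C$'s tower reaches its node, and nothing in your sketch rules this out.

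The paper closes precisely this case by an explicit round count: if $B$ were in state {\tt failure} of phase $i$ at round $r$, then by Lemma~\ref{lem:lem1}, Proposition~\ref{prop:prop2} and the state durations it would have spent at least $2T(EXPLO(n))+(i-1)Q(n)+5T(EXPLO(n))+4n+1$ rounds since its wake-up, whereas $C$, which only finished {\tt setup}, {\tt tower builder} and the $P(n)$-step traversal, has spent at most $2T(EXPLO(n))+(i-1)Q(n)+4T(EXPLO(n))+4n$ rounds; the difference exceeds $T(EXPLO(n))$, contradicting Proposition~\ref{prop:prop1}. Note that this accounting needs Lemma~\ref{lem:lem1} (all earlier phases take exactly $Q(n)$ rounds) and Proposition~\ref{prop:prop2} (the {\tt setup} path has length at most $n$), not just Proposition~\ref{prop:prop1} and Lemma~\ref{lem:lem4} as you suggest. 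A minor further point: if $B$ sits at $v$ itself, the tower reaches it only with index $P(n)$, so $B$ never joins the tower but declares directly from {\tt wait-for-a-tower}; your phrase ``at some index $t<P(n)$'' silently skips this subcase, which is harmless for the conclusion but should be stated.
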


\begin{proof}
Assume by contradiction there is a good agent $A$ that declares gathering is over at node $v$ in round $r$ but there is a good agent $B$ that does not make such a declaration at the same node and in the same round. Without loss of generality, we assume round $r$ is the first round when an agent declares gathering is over. Thus, agent $B$ does not declare gathering is over before round $r$.

According to Algorithm Byz-Known-Size, there is a tower $\mathcal{T}_{P(n)}$ in round $r$ at node $v$: all agents belonging to $\mathcal{T}_{P(n)}$ are in state {\tt tower} and have index $P(n)$ (which corresponds to the number of edge traversals of procedure $EXPLO(n)$). By Lemma~\ref{lem:lem2}, there is a good agent $C$ that proceeded to an entire execution of $EXPLO(n)$ from round $r-P(n)$ to round $r$. None of agents in $\mathcal{T}_{P(n)}$ could be agent $B$ because all good agents belonging to $\mathcal{T}_{P(n)}$ in round $r$ declare gathering is over according to the rules of state {\tt tower}. In particular agent $C$ cannot be agent $B$. When agent $C$ enters state  {\tt tower} in round $r-P(n)$, agent $B$ is in state {\tt wait-for-a-tower}, otherwise in view of Lemma~\ref{lem:lem5} and Algorithm Byz-Known-Size, agents $C$ and $B$ both belong to $\mathcal{T}_{P(n)}$, which is a contradiction.

If agent $B$ does not enter state {\tt failure} in some round of $\{r-P(n)+1,\cdots,r\}$, then either agent $B$ remains in state {\tt wait-for-a-tower} from round $r-P(n)$ to round $r$, or agent $B$ enters state {\tt tower} in some round $x\in \{r-P(n)+1,\cdots,r\}$. In the former case, that means agent $B$ is at node $v$ in round $r$. Indeed, otherwise agent $B$ shares with agent $C$ the same node (because $C$ makes an entire traversal as an agent in state {\tt tower}) in some round $x\in \{r-P(n),\cdots,r-1\}$, and then transits to state {\tt tower} in round $x+1\leq r$ (because agent $C$ is always in a tower during its entire traversal), which is a contradiction. However due to the presence of a tower in round $r$ at node $v$, agent $B$ declares gathering is over in the same round and at the same node according to the rules of state {\tt wait-for-a-tower}, which is again a contradiction. In the latter case, agent $B$ enters state {\tt tower} in some round $x\in \{r-P(n)+1,\cdots,r\}$. By Lemma~\ref{lem:lem4} and the fact that agent $C$ is always in state {\tt tower} from round $r-P(n)$ to $r$, agent $B$ belongs to the same tower as agent $C$ from rounds $x$ to $r$. Thus agent $B$ is included in tower $\mathcal{T}_{P(n)}$ in round $r$, which is a contradiction.

Hence agent $B$ enters state {\tt failure} in some round $x\in\{r-P(n)+1,\cdots,r\}$. By Lemma~\ref{lem:lem3}, agents $A$, $B$ and $C$ test the same configuration $\rho_i$ in round $r$ for some positive integer $i$. According to Algorithm Byz-Known-Size, we know agent $C$ also tests configuration $\rho_i$ in round $x$. Since agent $C$ is in state {\tt tower} in round $x$, agent $B$ also tests configuration $\rho_i$ in round $x$, according to Lemma~\ref{lem:lem3}. Hence in view of the fact that agent $B$ enters state {\tt failure} by testing configuration $\rho_i$ in round $x$ and the fact that it still tests configuration $\rho_i$ in round $r$, we know that agent $B$ is in state {\tt failure} of phase $i$ in round $r$. However by Lemma~\ref{lem:lem1}, Proposition~\ref{prop:prop2} and Algorithm Byz-Known-Size, when in round $r$, agent $B$ has spent at least $2T(EXPLO(n))+(i-1)*Q(n)+5T(EXPLO(n))+4n+1$ rounds since its wake up, while agent $C$ has spent at most $2T(EXPLO(n))+(i-1)*Q(n)+4T(EXPLO(n))+4n$ rounds: we thus get a contradiction with Proposition~\ref{prop:prop1}. Hence agent $B$ does not exist and the lemma holds.

\end{proof}

\begin{lemma}
\label{lem:7}
There is at least one good agent that ends up declaring that gathering is over.
\end{lemma}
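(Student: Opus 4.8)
The plan is to argue that the good agents, as they march through the phases $1,2,3,\dots$, eventually reach a phase $i$ in which the tested configuration $\rho_i$ faithfully represents the true initial configuration, and that in such a phase a tower of $P(n)$-indexed agents is necessarily formed and completes its traversal, forcing all good agents to declare gathering over. First I would fix a ``correct'' configuration $\rho^\star=(G^\star,L^\star)$ in which $G^\star$ is an $n$-node graph isomorphic (with port numbers) to the real network $G$, node identifiers are assigned arbitrarily but distinctly, and $L^\star$ lists exactly the $k\ge f+1$ pairs $(v,l)$ where a good agent with label $l$ starts at the node of $G^\star$ corresponding to the real starting node. Since $G^\star$ is a connected $n$-node graph and $|L^\star|\ge f+1$, we have $\rho^\star\in\mathcal{P}$, so $\rho^\star=\rho_{i_0}$ for some finite index $i_0$ in the enumeration $\Theta$.

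Next I would observe, using Lemmas~\ref{lem:lem1} and~\ref{lem:6}, that either some good agent declares gathering over before reaching phase $i_0$ (in which case, by Lemma~\ref{lem:6}, all good agents do so and we are done), or every good agent reaches and starts testing $\rho_{i_0}$, each having spent the same bounded number $2T(EXPLO(n))+(i_0-1)Q(n)$ of rounds since its own wake-up — so by Proposition~\ref{prop:prop1} their phase-$i_0$ start rounds are within $T(EXPLO(n))$ of one another. Now in phase $i_0$, because $\rho_{i_0}$ is correct, \emph{every} good agent has its label in $\rho_{i_0}$, the path $\pi$ it computes in state \texttt{setup} is a genuine path in the real graph leading to the node $v^\star$ hosting the good agent of smallest label, and it traverses $\pi$ without ever hitting a nonexistent port or an inconsistent incoming port number. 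Hence every good agent enters state \texttt{tower builder} at the single node $v^\star$. I would then track the colour dynamics at $v^\star$: because the mutual delay is at most $T(EXPLO(n))$ and the \texttt{yellow} waiting period is $T(EXPLO(n))+n$ (with the extra $n$ absorbing the at-most-$n$-round \texttt{setup} traversals, via Proposition~\ref{prop:prop2}), by the time the first good agent finishes its \texttt{yellow} period all $k$ good agents are already present at $v^\star$ and are \texttt{yellow} or \texttt{orange}; an easy induction on the three substates shows all $k$ of them become \texttt{orange} in overlapping rounds, then all become \texttt{red}, stay \texttt{red} together for the full $T(EXPLO(n))+n$ rounds, and so all transit together to state \texttt{tower} with index $0$ at $v^\star$ in the same round. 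From that round on, at every step the group of these $\ge f+1$ good agents moves as one according to $EXPLO(n)$, so the \texttt{tower} condition (``$\ge f+1$ agents in state \texttt{tower} with my index at my node'') is met at every round, no good agent transits to \texttt{failure}, and they complete all $P(n)$ traversals; at that moment they form $\mathcal{T}_{P(n)}$ and declare gathering over. By Lemma~\ref{lem:6} all good agents declare it, establishing the lemma.

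The main obstacle I anticipate is the colour-dynamics induction at $v^\star$: one must verify that the quantitative waiting periods $T(EXPLO(n))+n$ in substates \texttt{yellow}, \texttt{orange}, \texttt{red} are large enough, given a mutual start delay of at most $T(EXPLO(n))$ plus up to $n$ rounds spent in \texttt{setup}, that no good agent ever prematurely sees ``fewer than $k$ \texttt{yellow}-or-\texttt{orange} agents'' and falls to \texttt{wait-for-a-tower}, and that Byzantine agents — who may show arbitrary colours — cannot push a good agent to \texttt{red} ``too early'' in a way that later breaks the \texttt{red} count, nor prevent the count from reaching $k$. The key points are that the $k$ good agents alone already meet the threshold $k$, so Byzantine interference can only \emph{help} reach the counts, never be \emph{needed}; and that once the good agents are synchronized within the \texttt{tower builder} window they all pass through the substates in lockstep, so the ``$\ge f+1$ \texttt{red} agents at every round'' requirement in substate \texttt{red} is satisfied by the good agents themselves regardless of Byzantine behaviour. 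A secondary point to handle cleanly is the possibility that a good agent already declared gathering over, or is still in an earlier phase, in round $r-P(n)$ of some hypothetical tower in a \emph{wrong} phase — but this is exactly ruled out by Lemmas~\ref{lem:lem3} and~\ref{lem:6} and need not be re-proved here.
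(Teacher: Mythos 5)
Your overall plan coincides with the paper's (enumerate configurations until the true one $\rho_{i_0}$ is tested, show all good agents meet at $v^\star$, build a tower, finish $EXPLO(n)$, then conclude), but there is a genuine gap exactly at the point you yourself flag as the ``main obstacle'': you never rule out that a Byzantine-induced premature transition breaks the lockstep, and the substitute argument you give (``the $k$ good agents alone already meet the threshold, so Byzantine interference can only help reach the counts, never be needed'') addresses the wrong direction of the threat. The danger is not that Byzantine help might be missing, but that unwanted Byzantine ``help'' might be present: while the good agents are still trickling into $v^\star$ (their entries into substate \texttt{yellow} are staggered over up to $T(EXPLO(n))+n$ rounds), a good agent transits to \texttt{red} the moment it sees at least $k$ \texttt{orange} agents at its node. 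If this could happen to one good agent before the others have arrived, the adversary could afterwards let the faked \texttt{orange}/\texttt{red} counts collapse, sending that agent to \texttt{wait-for-a-tower}; the remaining $k-1$ good agents could then never reach the threshold $k$ of \texttt{yellow}-or-\texttt{orange} agents, no tower would form, the whole phase would fail, and since the correct configuration need appear only once in the enumeration $\Theta$, the argument would not recover. So the lockstep you invoke is precisely what must be proven, not assumed.

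The missing idea — which is the heart of the paper's proof (its Claim~1) — is a counting-plus-timing argument: $k\geq f+1$ \texttt{orange} agents at $v^\star$ require at least one \emph{good} \texttt{orange} agent; by Lemma~\ref{lem:lem3} such a good agent would be testing the same configuration $\rho_{i_0}$; but an agent that is \texttt{orange} in phase $i_0$ must have entered state \texttt{tower builder} of phase $i_0$ at least $T(EXPLO(n))+n+1$ rounds earlier, which (by Propositions~\ref{prop:prop1} and~\ref{prop:prop2} and Lemma~\ref{lem:lem1}) is strictly before the round in which the \emph{first} good agent entered \texttt{tower builder} of phase $i_0$ — a contradiction. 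Hence during the arrival window at most $f<k$ agents can appear \texttt{orange}, so no good agent leaves \texttt{yellow} prematurely, all $k$ good agents are simultaneously \texttt{yellow} at $v^\star$, and only from that point does your induction (all good agents, being at the same node, see the same counts in the same rounds and therefore turn \texttt{red} and enter \texttt{tower} together) go through. Without this step the proof does not close.
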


\begin{proof}
Assume by contradiction no agent ends up declaring gathering is over. Let $\rho_i$ be a configuration, belonging to enumeration $\Theta$, which corresponds to the initial configuration of all good agents in the graph. Configuration $\rho_i$ is said to be good. Let $k$ be the number of labels in configuration $\rho_i$: $k\geq f+1$ because every tested configuration contains at least $f+1$ labels. Since configuration $\rho_i$ is good, in view of Proposition~\ref{prop:prop1}, Lemma~\ref{lem:lem1} and the fact that at least one good agent is woken up by the adversary, every good agent reaches state {\tt tower builder} of phase $i$ possibly in different rounds but at the same node $v$ corresponding to the node where the agent having the smallest label is initially located. Denote by $A$ the first agent to enter state {\tt tower builder} of phase $i$ in some round $r$. Agent $A$ is thus {\tt yellow} in round $r$ at node $v$. By Propositions~\ref{prop:prop1} and~\ref{prop:prop2}, Lemma~\ref{lem:lem1} and the fact that configuration $\rho_i$ is good, we know that each of all good agents enters substate {\tt yellow} of phase $i$ at node $v$ in some round $x\in \{r,\cdots,r+T(EXPLO(n))+n-1\}$. Note that if a good agent, which is in substate {\tt yellow} of phase $i$ in round $x'\in \{r,\cdots,r+T(EXPLO(n))+n-2\}$ at node $v$, is no longer in substate {\tt yellow} of phase $i$ in round $x'+1$ at node $v$, this implies that the good agent sees at least $k$ {\tt orange} agents at node $v$ in round $x'$. Among this group of at least $k$ {\tt orange} agents, at least one agent, call it $C$, is good as $k\geq f+1$. By Lemma~\ref{lem:lem3}, agent $C$ also tests configuration $\rho_i$ in round $x'$, and thus agent $C$ entered state {\tt tower builder} of phase $i$ in round $x'-T(EXPLO(n))-n-1<r$ at the latest, which contradicts the definition of agent $A$. Hence, each of all good agents enters substate {\tt yellow} of phase $i$ at node $v$ in some round $x\in \{r,\cdots,r+T(EXPLO(n))+n-1\}$ and remains in substate {\tt yellow} of phase $i$ at node $v$ at least until round $r+T(EXPLO(n))+n-1$ included. We have therefore the following claim.


{\bf Claim~1.} All good agents are {\tt yellow} at node $v$ and test configuration $\rho_i$ in round $r+T(EXPLO(n))+n-1$.

From Claim~1 and Algorithm Byz-Known-Size, the $k$ good agents become {\tt red} together before round $r+2T(EXPLO(n))+2n$. From this point, we know the $k$ good agents will then enter state {\tt tower} of phase $i$ in the same round at node $v$ and make together an entire execution of procedure $EXPLO(n)$: according to the rules of state {\tt tower}, agent $A$ declares gathering is over at the end of this execution. So, we get a contradiction, and the lemma holds.
\end{proof}

From Lemmas~\ref{lem:6} and~\ref{lem:7}, we know that Algorithm Byz-Known-Size and Theorem~\ref{theo:theo3} are valid.

\section{Unknown graph size}
\label{sec:unknown}

In this section, we consider the same problem, except we assume that the agents are not initially given the size of the graph. Under this harder scenario, we aim at proving the following theorem.
\vspace{-0.3cm}
\begin{theorem}
\label{theo:theo4}
Deterministic $f$-Byzantine gathering of $k$ good agents is possible in any graph of unknown size if, and only if $k\geq f+2$.
\end{theorem}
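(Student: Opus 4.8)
\textbf{Proof plan for Theorem~\ref{theo:theo4}.}
The lower bound, namely that $f$-Byzantine gathering of $k$ good agents is impossible in some graph of unknown size when $k\leq f+1$, is already known from \cite{DieudonnePP14}, so the work is entirely in the upper bound: design a deterministic algorithm that gathers all good agents in any graph of unknown size whenever $k\geq f+2$. The plan is to adapt the architecture of Algorithm Byz-Known-Size, replacing the two ingredients that crucially relied on knowing $n$: the wake-up/synchronization mechanism of Part~1, and the graph traversal procedure $EXPLO(n)$ used both to fetch stragglers and to certify termination inside a tower. For traversal we will use procedure $EST$ from Section~\ref{sec:pre}, which needs no bound on $n$ but requires a stationary token; the token will be played by a group of agents left behind at a node while another group explores, exactly the ``roles played by groups of agents'' alluded to there. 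For the initial enumeration we enlarge $\mathcal{P}$ to the recursively enumerable set of triples $\rho_i=(N_i,G_i^*,L_i^*)$ where $N_i$ is a \emph{guessed} size, $G_i^*$ is a connected $N_i$-node port-labelled graph with identifiers, and $|L_i^*|\geq f+2$; each phase now tests both a candidate size and a candidate configuration, so $T(EXPLO(n))$ is replaced throughout by $T(EST(N_i))$ (or $T(EXPLO(N_i))$ where a size guess suffices), and all the constants $Q(n)$, the waiting windows of states \texttt{yellow}/\texttt{orange}/\texttt{red}, \texttt{wait-for-a-tower} and \texttt{failure} are rescaled to functions of $N_i$.

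The heart of the matter is why $f+2$ rather than $f+1$ is needed, and this dictates the one structural change relative to the known-size proof. With an unknown size the preliminary graph-wide sweep of Part~1 cannot be performed (a token-free terminating exploration of anonymous rings of unknown size is impossible, as noted in Section~\ref{sec:pre}), so we cannot guarantee that all good agents are awake and ``almost synchronized'' before the phases begin; a good agent may still be dormant while others are deep into some phase. The remedy is to make every tower, during its $EST$-based traversal, not merely pass through each node but \emph{pick up} every agent it meets and carry the whole caravan along (an agent swept up mid-traversal joins the tower, as in state \texttt{wait-for-a-tower}); since $EST$ visits all nodes, a tower that completes a full traversal has collected every good agent. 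The reason one spare good agent beyond $f+1$ is necessary: a tower must be split into an exploring part and a token part, and at every moment \emph{each} part must be certified to contain a good agent so that the Byzantine agents cannot, by abandoning one part, cause the certified-tower invariant to be satisfied by a purely Byzantine group; guaranteeing a good agent in both halves simultaneously forces at least $f+2$ good agents to be present, and this is precisely the arithmetic that the lower bound of \cite{DieudonnePP14} shows to be unavoidable. Accordingly the tower-formation rules (the \texttt{yellow}/\texttt{orange}/\texttt{red} handshake) now demand $k\geq f+2$ agents of a common colour, and the termination test inside state \texttt{tower} demands, at the final index, a group of $f+2$ (not $f+1$) agents carrying the full $EST$ tree.

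The correctness proof then mirrors Section~3 almost line for line, with $EST$ replacing $EXPLO$ and the threshold $f+2$ replacing $f+1$. First I would re-establish the analogues of Propositions~\ref{prop:prop1} and~\ref{prop:prop2}: that a good agent begins each phase at its wake-up node (unchanged, from the \texttt{failure} backtracking rule), and a bounded-delay statement — but now the bound holds only \emph{relative to the first round in which some good agent is swept into a traversing tower}, or equivalently we prove directly the phase-length lemma (analogue of Lemma~\ref{lem:lem1}) giving a uniform per-phase duration $Q_i$ depending on $N_i$, and derive synchronization among the already-awake agents from the enumeration being shared. Then the tower-lineage lemma (analogue of Lemma~\ref{lem:lem2}: some good agent has been in the tower since it formed), the phase-agreement lemma (analogue of Lemma~\ref{lem:lem3}: everyone not yet terminated is in the same phase), the uniqueness-of-tower lemma (analogue of Lemma~\ref{lem:lem4}), and the mutual-exclusion lemma (analogue of Lemma~\ref{lem:lem5}: whoever is not joining the forming tower is waiting) all go through, the only new subtlety being that in the token/explorer split one must track \emph{two} indices (explorer progress in $EST$ plus a bit saying which sub-group one is in) and verify that a good agent survives in whichever sub-group is currently acting as the certified tower. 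Finally, as in Lemmas~\ref{lem:6} and~\ref{lem:7}: if anyone declares termination it is inside a full-traversal tower which by the sweeping rule contains all good agents at the same node in the same round; and when the phase testing the \emph{true} $(n,G,L)$ is reached, all $k\geq f+2$ good agents meet at the designated smallest-label node, complete the \texttt{yellow}/\texttt{orange}/\texttt{red} handshake, form a tower, run $EST$ to completion and declare gathering over. The main obstacle I expect is the two-index bookkeeping inside the \texttt{tower} state for the token/explorer alternation of $EST$ — making the ``every acting sub-tower contains a good agent'' invariant precise and checking it is preserved across each hand-off between the stationary and the moving sub-group, which is exactly where the gap between $f+1$ and $f+2$ is consumed.
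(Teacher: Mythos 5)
Your overall architecture (lower bound from \cite{DieudonnePP14}, enumeration of guessed configurations now including a guessed size, a tower that fetches the others, and a token/explorer split to simulate $EST$) is the right family of ideas, but two of your central mechanisms do not work as described, and they are exactly the points where the paper's algorithm is built differently. First, your explanation of where $f+2$ comes from is based on splitting a tower into an exploring part and a token part with \emph{both} parts ``certified to contain a good agent''. Certification against up to $f$ Byzantine agents forces each part to have at least $f+1$ members, hence at least $2f+2$ agents present simultaneously; since the Byzantine agents may simply be absent, this cannot be guaranteed from $f+2$ good agents whenever $f\geq 1$. The paper's scheme avoids this arithmetic trap: only the \emph{stationary} token group is required to have at least $f+1$ members, the explorer is a \emph{single} agent, and no one ever trusts an explorer's report --- each good agent learns the size by performing its own $EST'$ exploration in turn (the turns are scheduled by the rank of its label in the set $\mathcal{H}$ computed in substate {\tt red}), while the remaining $k-1\geq f+1$ good agents stay put as the token. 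Moreover, once the size is verified ($n_i=n$, Lemma~\ref{lem:lem45}), the tower's graph traversal is done with $EXPLO(n_i)$, which needs no token at all; making the tower's own traversal $EST$-based, as you propose, reintroduces the same split-certification problem while it is moving.

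Second, your replacement for Part~1 does not solve the wake-up/synchronization problem. You drop the graph-wide sweep on the grounds that it cannot be performed without knowing $n$, and rely on the traversing tower to ``sweep up'' agents. But a dormant good agent is only woken when some agent enters its node, and in your design nothing guarantees this ever happens before a tower exists: in the phase testing the true configuration, the handshake at the rendezvous node needs all $k$ labels present, so the tower you are counting on to collect the dormant agent cannot form while it is still dormant --- a chicken-and-egg failure. The paper resolves this by putting an $EXPLO(n_i)$ traversal (with the \emph{guessed} size $n_i$) at the start of state {\tt setup} of every phase; once $n_i\geq n$ this wakes everybody, yielding the delay bound $Q_\alpha+T(EXPLO(n_\alpha))$ of Proposition~\ref{prop:prop42}. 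Relatedly, your waiting windows ``rescaled to functions of $N_i$'' are not enough: the asynchrony between good agents is bounded by the accumulated execution time, not by any function of the current size guess, which is why all waiting periods in Algorithm Byz-Unknown-Size contain additive terms in $Q$ (the time spent before the current phase). Without per-phase sweeps and $Q$-dependent windows, the analogues of Lemmas~\ref{lem:lem43}--\ref{lem:lem44} (phase agreement and mutual exclusion) cannot be established, and the final argument of Lemmas~\ref{lem:lem47}--\ref{lem:48} collapses.
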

\vspace{-0.3cm}

As mentioned in Subsection~\ref{sub:subm}, we know from \cite{DieudonnePP14} that:
\vspace{-0.3cm}
\begin{theorem}[\cite{DieudonnePP14}]
\label{theo:theo5}
Deterministic $f$-Byzantine gathering of $k$ good agents is not possible in some graphs of unknown size if $k\leq f+1$.
\end{theorem}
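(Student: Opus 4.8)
(This lower bound is due to \cite{DieudonnePP14}; here is how I would establish it.) The plan is an indistinguishability argument on oriented rings. Suppose, for contradiction, that some deterministic algorithm $\mathcal{A}$ performs, with termination, $f$-Byzantine gathering of every team of at least $f+1$ good agents in every graph of unknown size; I would prove the impossibility for $k=f+1$ (smaller $k$ is handled the same way with some Byzantine agents left idle, and for $f=0$ the statement concerns a single good agent and follows from the impossibility of a terminating exploration of unknown-size anonymous rings recalled in Section~\ref{sec:pre}). Working in oriented rings (port $0$ clockwise everywhere), I would build two executions of $\mathcal{A}$ that look identical to one tagged good agent $G_0$, yet in the first $\mathcal{A}$ correctly gathers all good agents while in the second $G_0$ declares termination with the other good agents far away, contradicting correctness. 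Two features of the model make this possible: as $n$ is unknown, $\mathcal{A}$ must force $G_0$ to commit to termination after finitely many rounds without having inspected the whole network; and as Byzantine agents may change labels, $f$ of them can jointly impersonate $f$ absent good agents.

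\emph{Run~$1$.} Run $\mathcal{A}$ with $f+1$ good agents $G_0,\dots,G_f$ (labels $\ell_0,\dots,\ell_f$) at $f+1$ consecutive nodes of an oriented ring $R$ and with \emph{no} Byzantine agent, the adversary waking all of them in round~$0$. By hypothesis the execution halts, say at round $T$, with all good agents at a node $u$ declaring termination. A routine compactness argument — using that clustered good agents cannot perceive the ring size before some agent closes a loop — lets me take $R$ large enough that this execution is \emph{wrap-free}: no agent wraps around $R$ within the first $T$ rounds. With no Byzantine agent present, at each round $G_0$ sees only those of $G_1,\dots,G_f$ then at its node; let $H_0$ be the entire history of $G_0$ up to round $T$, and note $H_0$ prescribes a fixed walk of $G_0$ with at most $T$ edge traversals.

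\emph{Run~$2$.} Take a much larger oriented ring $R'$, place $G_0$ at a node $w_0$, the real agents $G_1,\dots,G_f$ at nodes more than $f+3T$ away from $w_0$, and the $f$ Byzantine agents at the same small integer offsets from $w_0$ that $G_1,\dots,G_f$ had from $G_0$'s start in Run~$1$; the adversary again wakes everyone in round~$0$. For each $j$, prescribe the $j$-th Byzantine agent to perform step by step the exact walk $G_j$ performed in Run~$1$ and, whenever it shares $G_0$'s node, to display the label $\ell_j$ and to repeat verbatim the messages $G_j$ gave $G_0$ in Run~$1$; the genuine $G_1,\dots,G_f$, more than $f+3T$ away and making at most $T$ traversals, never reach $G_0$. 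A round-by-round induction then shows $G_0$'s history in Run~$2$ equals $H_0$ through round $T$: all node degrees are $2$, the entry ports are fixed by the orientation and the move sequence, and each Run~$1$ encounter of $G_0$ is reproduced by the Byzantine agents (wrap-freeness of Run~$1$ is what makes performing the same walk from the same offset yield the same meeting rounds in the larger ring). By determinism $G_0$ then repeats its Run~$1$ moves and declares termination in round $T$ at a node within distance $T$ of $w_0$, while $G_1$ is still more than $2T$ away from $w_0$, hence not at $G_0$'s node. Thus in Run~$2$ — a legitimate instance with $f+1$ good agents, at most $f$ Byzantine agents, and unknown size — the good agents are \emph{not} gathered when $G_0$ terminates, contradicting the correctness of $\mathcal{A}$.

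The main obstacle I anticipate is making Run~$2$ precise: one must give a schedule of the $f$ Byzantine agents that reproduces, round by round, \emph{every} observation $G_0$ made in Run~$1$ (including information relayed from agents $G_0$ never met directly) while keeping the genuine far-away good agents out of $G_0$'s sight and the Byzantine agents clear of $G_0$ until its first Run~$1$ encounter; this is exactly where one uses that in the Byzantine-free Run~$1$ the tagged agent sees at most $f$ others at once, so that $f$ impersonators suffice. The compactness step that makes Run~$1$ wrap-free also needs some care. Finally, it is worth stressing where the unknown size is indispensable: if $n$ were known, $\mathcal{A}$ could have the gathered good agents act as a stationary token and perform a terminating exploration certifying that no further good agent exists — precisely the option that is unavailable here, and the reason the threshold rises to $f+2$ from the $f+1$ of the known-size case (Theorem~\ref{theo:theo3}).
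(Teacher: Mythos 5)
This theorem is imported by the paper from \cite{DieudonnePP14} and stated without proof, so there is no in-paper argument to compare against. Judged on its own, your overall strategy is a reasonable route to the bound: an indistinguishability argument on large oriented rings in which $f$ Byzantine agents, exploiting their ability to forge labels, impersonate the absent good agents and fool one good agent $G_0$ into terminating while the genuine $G_1,\dots,G_f$ are too far away to have arrived, with the unknown size entering exactly where it must (the embedding of $G_0$'s view into a larger ring, which a known-$n$ algorithm could rule out).

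There is, however, a genuine gap at the step you dismiss as ``a routine compactness argument'': the claim that $R$ can be chosen so that the terminating execution of Run~1 is wrap-free. Nothing forces the termination time $T=T(n)$ to be smaller than $n$: a correct algorithm may have every agent traverse the whole ring (several times) before stopping --- the algorithms of this very paper repeatedly execute $EXPLO$ with growing guesses and so have $T(n)\geq n$ for every $n$ --- hence there need not exist \emph{any} ring size for which no agent wraps before termination, and you cannot outrun the dependence of $T$ on $n$ by ``taking $R$ large enough,'' since enlarging $R$ may enlarge $T$ as well. This matters because your Run~2 has the Byzantine agents replay the exact walks of $G_1,\dots,G_f$: any Run-1 meeting of $G_0$ that occurred via wrap-around (positions equal modulo $n$ but not as integers) is then missed in the larger ring $R'$, $G_0$'s history diverges from $H_0$, and the round-by-round induction collapses. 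The gap is repairable, but by a different simulation: rather than replaying walks, let the adversary schedule each Byzantine agent offline so that it is at $G_0$'s node precisely at the rounds at which $G_j$ shared $G_0$'s node in Run~1 (displaying label $\ell_j$ and the Run-1 messages) and off that node otherwise; this is feasible because the positions it must reach are positions of $G_0$ itself, and $G_0$ moves at most one edge per round, so consecutive required positions are always mutually reachable in the available time. With that change (or with an explicit case distinction on whether some ring admits $T(n)<n/2$) the indistinguishability argument goes through; as written, the wrap-freeness claim is unsupported and false in general, so the proof is incomplete at that point.
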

\vspace{-0.3cm}
In view of Theorem~\ref{theo:theo5}, it is then enough to show the following theorem in order to prove Theorem~\ref{theo:theo4}.
\vspace{-0.3cm}
\begin{theorem}
\label{theo:theo6}
Deterministic $f$-Byzantine gathering of $k$ good agents is possible in any graph of unknown size if $k\geq f+2$.
\end{theorem}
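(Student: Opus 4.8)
\textbf{Proof proposal for Theorem~\ref{theo:theo6}.}

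The plan is to adapt Algorithm Byz-Known-Size to the setting of unknown size, paying for the missing knowledge of $n$ in two currencies: an extra good agent (hence the requirement $k\geq f+2$ instead of $f+1$), and the replacement of the blind exploration procedure $EXPLO(n)$ by the token-based procedure $EST$, which needs no size bound but requires a stationary token. First I would describe a modified algorithm, call it \textbf{Byz-Unknown-Size}, that again proceeds in phases indexed $1,2,3,\dots$, where in phase $i$ the agents test a candidate configuration $\rho_i$ drawn from a fixed enumeration of all pairs $(G_i^*,L_i^*)$ with $|L_i^*|\geq f+2$. The \texttt{setup} phase works exactly as before: an agent whose label lies in $\rho_i$ walks the lexicographically smallest shortest path toward the node hosting the smallest label, transiting to \texttt{wait-for-a-tower} if the path is inconsistent with the real graph. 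The substantive change is in how a ``tower'' certifies itself and then fetches the remaining agents: since no agent knows $n$, the entire-graph traversal that previously used $EXPLO(n)$ must now be a traversal via $EST$, where one designated sub-group of the tower plays the role of the stationary token while another sub-group explores, periodically returning to the token to synchronize. The colour mechanism (\texttt{yellow}/\texttt{orange}/\texttt{red}) is kept to guarantee, as in Lemma~\ref{lem:lem4}, that at most one tower forms per phase, but now the threshold for a valid tower is $f+2$ rather than $f+1$.

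The key steps, in order, would be: (1) Re-establish the analogues of Propositions~\ref{prop:prop1} and~\ref{prop:prop2} — bounded wake-up delay (now the delay is at most $T(EST(n))$, since Part~1 performs an $EST$-based traversal) and the fact that each phase starts at an agent's home node. (2) Prove the phase-duration lemma (analogue of Lemma~\ref{lem:lem1}): each phase lasts a fixed number $Q'(n)$ of rounds unless the agent declares termination, where $Q'$ is a suitable polynomial bound on the running time of $EST$ and the attendant waiting windows; crucially this bound exists even though individual agents do not know $n$, because the waiting durations are defined in terms of the token-exploration progress, not an a priori count. (3) Re-prove the ``same-phase'' synchronization lemma (analogue of Lemma~\ref{lem:lem3}): any good agent that is \texttt{orange}, \texttt{red}, or in state \texttt{tower} in round $r$ forces all not-yet-terminated good agents to be testing the same $\rho_i$ in round $r$; the argument is the same timing-comparison argument, now with $T(EST(n))$ in place of $T(EXPLO(n))$. (4) Re-prove the tower-uniqueness and tower-persistence lemmas (analogues of Lemmas~\ref{lem:lem2}, \ref{lem:lem5}, \ref{lem:lem4}); here is where the extra good agent is spent: with $k\geq f+2$, two disjoint would-be towers of size $\geq f+2$ would force $\geq 2(f+2)-f = f+4 > k$ good agents to carry the label set of $\rho_i$, contradicting $|L_i^*| = k$ — wait, more precisely one needs $2(f+2) - f \geq f+3$ good agents all having labels in $\rho_i$, and one uses $|L_i^*|\geq f+2$ together with a pigeonhole to derive the contradiction exactly as in the proof of Lemma~\ref{lem:lem5}. (5) Conclude, as in Lemmas~\ref{lem:6} and~\ref{lem:7}, that (a) if any good agent declares termination at node $v$ in round $r$ then all do, and (b) when $\rho_i$ is the true configuration, a tower necessarily forms in phase $i$, completes its $EST$-traversal, and all $k$ good agents declare termination.

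The main obstacle I anticipate is step (4) combined with the correctness of the $EST$-based traversal performed by a tower. In the known-size case, a tower is a monolithic group of $\geq f+1$ agents all executing $EXPLO(n)$ in lock-step, and the ``index'' (edge-traversal count) of an agent is a clean synchronization token that the uniqueness argument exploits. With $EST$, the tower must split into an explorer part and a token part, and a Byzantine agent inside the tower — recall there may be up to $f$ of them, and a tower of size $f+2$ contains at least two good agents — can try to desynchronize the two parts, impersonate the token, or cause the explorer group to accept a spurious token node. The fix is to require that \emph{both} the token sub-group and the explorer sub-group independently contain at least $f+1$ agents (so a tower now genuinely needs $\geq 2(f+1)$ members during traversal? — no: more economically, require the whole tower to have $\geq f+2$ members and have the token role rotate or be held collectively, with every rendezvous between explorer and token re-validated by a majority-of-$(f+1)$ check on the reported index), so that at each rendezvous round a good agent is present on both sides and the exploration state is majority-certified; this is exactly the point where the threshold $f+2$, rather than $f+1$, becomes necessary and where the bulk of the new technical work lies. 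The remaining lemmas then transcribe the known-size proofs almost verbatim with $EST$ replacing $EXPLO$ and the constant $f+2$ replacing $f+1$.
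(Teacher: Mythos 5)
Your proposal has two concrete gaps, both at the points where knowledge of $n$ was actually used in the known-size algorithm. First, your step (1) has Part~1 (the wake-up traversal) performed by a lone, freshly woken agent ``via $EST$''. This is impossible: $EST$ requires a stationary token, and a single agent with no token cannot even explore an anonymous ring of unknown size with termination (this is exactly the impossibility recalled in Section~\ref{sec:pre}). The paper has no separate Part~1 in the unknown-size case; instead, each phase's {\tt setup} begins with $EXPLO(n_i)$ on the \emph{guessed} size $n_i$ of the tested configuration, and the unbounded wake-up skew is absorbed by adding the elapsed time $Q$ to every waiting period, so that synchronization (the analogue of Proposition~\ref{prop:prop1}) is only guaranteed from the first phase $\alpha$ with $n_\alpha\geq n$ onward. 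Your claim that a fixed phase length ``exists even though agents do not know $n$, because the waiting durations are defined in terms of the token-exploration progress'' is not a deterministic waiting rule and, as stated, could be driven by Byzantine agents.

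Second, your mechanism for the tower's entire-graph traversal does not close. You have the tower itself do a token-based $EST$ traversal, split into an explorer sub-group and a stationary token sub-group. For waiting agents to join and for the explorer to recognize its token, each sub-group must contain at least $f+1$ agents (a group of $\leq f$ may be entirely Byzantine), i.e.\ the tower needs $\geq 2f+2$ members; but even in the good phase only $f+2$ good agents are guaranteed and Byzantine agents need not participate, so this can fail. Your fallback (``rotate the token role \ldots\ majority-of-$(f+1)$ check'') is unspecified and cannot work in the regime you yourself invoke, a tower of $f+2$ members with only two guaranteed good agents, where any majority can be Byzantine. The paper sidesteps this entirely: the $EST$ simulation ($EST'$) is run \emph{before} any tower exists and serves only to verify that the guessed size satisfies $n_i=n$. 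The agents that survived the {\tt red} substate compute a common label set $\mathcal{H}$ and take turns, one at a time in the order of $\mathcal{H}$, acting as a single explorer while \emph{all} the other good agents (at least $f+1$ of them --- this is precisely where $k\geq f+2$ is spent) stay put as the token; an explorer aborts to {\tt failure} if its BFS tree does not have exactly $n_i$ nodes or if it exceeds $T(EST(n_i))$ rounds, which is what makes it impossible to adopt a wrong size (Lemma~\ref{lem:lem45}). Only after this verification does the group become a tower and fetch the waiting agents by the ordinary $EXPLO(n_i)$, with the tower threshold kept at $f+1$ (not $f+2$), so the uniqueness and termination arguments of the known-size case carry over essentially unchanged.
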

\vspace{-0.3cm}
Hence, similarly as in Section~\ref{sec:sec1}, the rest of this section is devoted to showing a deterministic algorithm that gathers all good agents, but this time in an arbitrary network of unknown size and provided there are at least $f+2$ good agents.

Before giving the algorithm, which we call \emph{Algorithm Byz-Unknown-Size}, let us provide some intuitive ingredients on which our solution is based.

The algorithm of this section displays a number of similarities with the algorithm of the previous section, but there are also a number of changes to tackle the  non-knowledge of the network size. Among the most notable changes, there is firstly the way of enumerating the configurations. Previously, the agents were considering the enumeration $\Theta=(\rho_1,\rho_2,\rho_3,\cdots)$ of $\mathcal{P}$ where $\mathcal{P}$ is the set of every configuration corresponding to a $n$-node graph in which there are at least $f+1$ robots with pairwise distinct labels. Now, instead of considering $\Theta$, the agents will consider the enumeration $\Omega=(\phi_1,\phi_2,\phi_3,\cdots)$ of $\mathcal{Q}$ where $\mathcal{Q}$ is the set of all configurations corresponding to a graph of any size (instead of size $n$ only) in which there are at least $f+2$ agents (instead of at least $f+1$) with pairwise distinct labels. Note that, as for set $\mathcal{P}$, set $\mathcal{Q}$ is also recursively enumerable.

Another change stems from the function performed by a tower, which we also find here. In Algorithm Byz-Known-Size, the role of a tower was to fetch all awaiting good agents (which know that the tested configuration is not good) via procedure $EXPLO(n)$: in the new algorithm, we keep the exact same strategy. However, to be able to use procedure $EXPLO$ with a parameter corresponding to the size of the network, it is necessary, for the good agents that are members of a tower, to know this size. Hence, in our solution, before being considered as a tower and then authorized to make a traversal of the graph, a group of agents will have to learn the size of the graph. To do this, at least each good agent of the group will be required to make a simulation of procedure $EST$ by playing the role of an explorer and using the others as its token. To carry out these simulations, it is also required for the group of agents to contain initially at least $f+2$ members (explorer + token), even if subsequently it is required for a group of agents forming a tower to contain at least $f+1$ members. Our algorithm is designed in such a way that if during the simulation of procedure $EST$ by an agent playing the role of an explorer, we have the guarantee there are always at least $f+1$ agents playing the role of its token, then the explorer will be able to recognize its own token without any ambiguity (and thus will act as if it performed procedure $EST$ with a ``genuine'' token). Of course, the agents will not always have such a guarantee (especially due to the possible bad behavior of Byzantine agents when testing a wrong configuration) and will not be able to detect in advance whether they will have it or not. Besides, some other problems can arise including, for example, some Byzantine explorer which takes too much time to explore the graph (or worse still, ``never finishes'' the exploration). However we will show that in all cases, the good agents can never learn an erroneous size of the graph (even with the duplicity of Byzantine agents when testing a wrong configuration). We also show that good agents are assured of learning the size of the network when testing a good configuration at the latest (as the creation of a group of at least $f+2$ agents and the aforementioned guarantee are ensured when testing a good configuration). As for Algorithm Byz-known-Size, in the worst case the good agents will have to wait until assuming a good hypothesis about the real initial configuration, in order to declare gathering is over. 

Despite the fact Algorithm Byz-Unknown-Size has also a number of technical changes compared with Algorithm Byz-Known-Size (e.g., the duration of waiting periods that are adjusted according to the new context), it should be noted that some parts of Algorithm Byz-Unknown-Size are inevitably almost identical to some of those of Algorithm Byz-known-Size  (this is particularly the case for the description of state {\tt tower}). However for easy readability, we made the choice of writing completely these parts instead of explaining this or that part ``is the same as in Algorithm Byz-Known-Size except that...''.

We now give a detailed description of the algorithm.

{\bf Algorithm Byz-Unknown-Size}

The agent works in phases numbered $1,2,3,\cdots$. During the execution of each phase, the agent can be in one of the following seven states: {\tt setup}, {\tt tower builder}, {\tt token}, {\tt explorer}, {\tt tower}, {\tt wait-for-a-tower}, {\tt failure}. Below we describe the actions of an agent $A$ having label $l_A$ in each of the states as well as the transitions between these states within phase $i$. As in Algorithm Byz-Known-Size, we assume that in every round agent $A$ tells the others (sharing the same node as agent $A$) in which state it is. At the beginning of phase $i$, agent $A$ enters state {\tt setup}. We denote by $Q$ the time spent by agent $A$ executing Algorithm Byz-Unknown-Size before starting phase $i$.

  \noindent
 {\bf State} {\tt setup}.

Let $\phi_i$ be the $i$-th configuration of enumeration $\Omega=(\phi_1,\phi_2,\phi_3,\cdots)$ (refer to its description given in Section~\ref{sec:unknown}). Let $n_i$ be the number of nodes in configuration $\phi_i$. Agent $A$ starts executing procedure $EXPLO(n_i)$. Once the execution of $EXPLO(n_i)$ is accomplished, the agent backtracks to its starting node by traversing all edges traversed in $EXPLO(n_i)$ in the reverse order and the reverse direction. When the backtrack is done, the agent continues with this state via the following rules.
If $l_A$ is not in $\phi_i$, then it transits to state {\tt wait-for-a-tower}. Otherwise, let $X$ be the set of the shortest paths in $\phi_i$ leading from the node containing the agent having label $l_A$, to the node containing the smallest label of the supposed configuration. Each of paths belonging to $X$ is represented as the corresponding sequence of port numbers. Let $\pi$ be the lexicographically smallest path in $X$ (the lexicographic order can be defined using the total order on the port numbers). Agent $A$ follows path $\pi$ in the real network. If, following path $\pi$, agent $A$ has to leave by a port number that does not exist in the node where it currently resides, then it transits to state {\tt wait-for-a-tower}. In the same way, it also transits to state {\tt wait-for-a-tower} if, following path $\pi$, agent $A$ enters at some point a node by a port number which is not the same as that of path $\pi$. Once path $\pi$ is entirely followed by agent $A$, it transits to state {\tt tower builder}.
  \vspace*{0.2cm}
 
  \noindent
 {\bf State} {\tt tower builder}.


When in state {\tt tower builder}, agent $A$ can be in one of the following three substates: {\tt yellow}, {\tt orange}, {\tt red}. In all of these substates the agent does not make any move: it stays at the same node denoted by $v$. At the beginning, agent $A$ enters substate {\tt yellow}. As for Algorithm Byz-Known-Size, we will sometimes use a slight misuse of language by saying an agent ``is {\tt yellow}" instead of ``is in substate {\tt yellow}". We will also use the same kind of shortcut for the two other colors. In addition to its state, we also assume that in every round agent $A$ tells the others in which substate it is.

{\bf Substate} {\tt yellow}

Let $k$ be the number of labels in configuration $\phi_i$.
Agent $A$ waits $T(EXPLO(n_i))+n_i + Q$ rounds. If during this waiting period, there are at some point at least $k$ {\tt orange} agents at node $v$ then agent $A$ transits to substate {\tt red}. Otherwise, if at the end of this waiting period there are not at least $k$ agents residing at node $v$ such that each of them is either {\tt yellow} or {\tt orange}, then agent $A$ transits to state {\tt wait-for-a-tower}, else it transits to substate {\tt orange}.
  
{\bf Substate} {\tt orange}

Agent $A$ waits at most $T(EXPLO(n_i))+n_i + Q$ rounds to see the occurrence of one of the following two events. The first event is that there are not at least $k$ agents residing at node $v$ such that each of them is either {\tt yellow} or {\tt orange}. The second event is that there are at least $k$ {\tt orange} agents residing at node $v$. Note that the two events cannot occur in the same round. If during this waiting period, the first (resp. second) event occurs then agent $A$ transits to state {\tt wait-for-a-tower} (resp. substate {\tt red}). If at the end of the waiting period, none of these events has occurred, then agent $A$ transits to substate {\tt wait-for-a-tower}. 

{\bf Substate} {\tt red}

Agent $A$ waits $T(EXPLO(n_i))+n_i + T(EST(n_i)) + Q$ rounds, and in every round of this waiting period it tells the others the phase number $i$. If there is a round during the waiting period in which there are not at least $k$ {\tt red} agents in phase $i$ at node $v$: agent $A$ then transits to state {\tt wait-for-a-tower} as soon as it notices this fact. Otherwise, at the end of the waiting period agent $A$ transits either to state {\tt explorer}, or to state {\tt token}, or to state {\tt failure} according to the following rule. Let $\mathcal{H}$ be the set of pairwise distinct labels such that $l\in \mathcal{H}$ iff there is at least one {\tt red} agent in phase $i$ having label $l$ at node $v$ in the last round of the waiting period. Let $|\mathcal{H}|$ be the cardinality of $\mathcal{H}$. If $|\mathcal{H}|>n_i$ then agent $A$ transits to state {\tt failure}. Otherwise $|\mathcal{H}|\leq n_i$ and agent $A$ applies the following instruction: if $l_A$ is the smallest label in $\mathcal{H}$ then agent $A$ transits to state {\tt explorer}, else it transits to state {\tt token}.

  \noindent
 {\bf State} {\tt explorer}.

We first briefly describe the procedure $EST$ based on \cite{ChalopinDK10} that will be subsequently adapted to our needs. In this procedure, the agent constructs a BFS tree rooted at node $r$
marked by the stationary token. In this tree it marks port numbers at all nodes. 
During the BFS traversal, some nodes are added to the BFS tree. {In the beginning, the agent adds the root $r$ and then it makes the {\em process} of $r$. The process of a node $w$ consists in checking all the neighbors of $w$ in order to determine whether 
some of them have to be added to the tree or not. When an agent starts the process of a node $w$, it goes to the neighbor reachable via port $0$ and then checks the neighbor.}

{When a neighbor $x$ of $w$ gets checked, the agent  
verifies if $x$ is equal to some node
previously added to the tree. To do this, for each node $u$ belonging to the current BFS tree, the agent travels from $x$ using the reversal $\overline{q}$ of the shortest path $q$ from $r$ to $u$ in the BFS tree (the path $q$ is
a sequence of port numbers). If at the end of this backtrack it meets the token, then $x=u$: in this case $x$ is not added to the tree as a neighbor of $w$ and is called $w$-{\em rejected}. If not, then $x\neq u$. Whether node $x$ is rejected or not, the agent then comes back to $x$ using the path $q$. If $x$ is different from all the nodes of the BFS tree, then it is added to the tree.}

{Once node $x$ is added to the tree or rejected, the agent makes an edge traversal in order to be located at $w$ and then goes to a non-checked neighbor of $w$, if any. The order, in which the neighbors of $w$ are checked, follows the increasing order of the port numbers of $w$.} 

{When all the neighbors of $w$ are checked, the agent proceeds as follows. Let $\mathcal{X}$ be the set of the shortest paths in the BFS tree leading from the root $r$ to a node $y$ having non-checked neighbors. If $\mathcal{X}$ is empty then procedure $EST$ is completed. Otherwise, the agent goes to the root $r$, using the shortest path from $w$ to $r$ in the BFS tree, and then goes to a node $y$ having non-checked neighbors, using the lexicographically smallest path from $X$. From there, the agent starts the process of $y$.}

We are now ready to give the description of state {\tt explorer}.

When entering this state, agent $A$ executes the procedure $EST'$ which corresponds to a simulation of procedure $EST$ with the following three changes. The first change concerns meetings with the token. Consider a verification if a node $x$, which is getting checked, is equal to some previously added node $u$. This verification consists in traveling from $x$ using the reverse path $\overline{q}$,  where $q$ is the path from the root $r$ to $u$ in the BFS tree and checking the presence of the token. If 
 at the end of the simulation of path $\overline{q}$ in $EST'$, agent $A$ is at a node containing at least $f+1$ agents in state {\tt token} of phase $i$, then it acts as if it saw the token in $EST$; otherwise it acts as if it did not see
 the token in $EST$. The second change occurs during the construction of the BFS tree: if at some point agent $A$ has added more than $n_i$ nodes in the BFS tree or has spent more than $T(EST(n_i))$ rounds executing the current simulation, then it drops the simulation and transits to state {\tt failure}. The third and final change occurs at the end of the simulation: if agent $A$ has added less than $n_i$ nodes in the BFS tree, then it transits to state {\tt failure}.

Once the execution of procedure $EST'$ is done, agent $A$ backtracks to the node where it was located when entering state {\tt explorer}. To do this, the agent traverses all edges traversed during the execution of procedure $EST'$ in the reverse order and the reverse direction. When the backtrack is done, agent $A$ has spent exactly $2T(EST(n_i))-t$ rounds in state explorer of phase $i$, for some integer $t\geq 0$. From this point, agent $A$ waits $t$ rounds. At the end of the waiting period, if agent $A$ does not share its current node with at least $f+1$ agents in state {\tt token} of phase $i$, then it transits to state {\tt failure}. Otherwise agent $A$ does share its current node with at least $f+1$ agents in state {\tt token} of phase $i$: in this case, if $l_A$ is the largest label in set $\mathcal{H}$ (this set was determined when agent $A$ was {\tt red} in this phase) then agent $A$ transits to state {\tt tower}, else it transits to state {\tt token}.

  \noindent
 {\bf State} {\tt token}.

While in this state, agent $A$ remains at the same node $v$, and in every round it tells the others the phase number $i$. Agent $A$ can transit to state {\tt token} either from state {\tt tower builder} or from state {\tt explorer}. Below, we distinguish both these cases. Let $j$ be the number of labels of $\mathcal{H}$ that are smaller than $l_A$.

\begin{itemize}
\item {Case~1: the last time agent $A$ transited to state {\tt token} was from state {\tt tower builder}.} In this case agent $A$ waits $2j.T(EST(n_i))$ rounds. If there is a round during the waiting period in which there are not at least $f+1$ agents in state {\tt token} of phase $i$ at node $v$: agent $A$ then transits to state {\tt failure} as soon as it notices this fact. Otherwise, at the end of the waiting period agent $A$ transits to state {\tt explorer}.

\item {Case~2: the last time agent $A$ transited to state {\tt token} was from state {\tt explorer}.} In this case agent $A$ waits $2(|\mathcal{H}|-j-1).T(EST(n_i))$ rounds. If there is a round during the waiting period in which there are not at least $f+1$ agents in state {\tt token} of phase $i$ at node $v$: agent $A$ then transits to state {\tt failure} as soon as it notices this fact. Otherwise, at the end of the waiting period agent $A$ transits to state {\tt tower}.

\end{itemize}

  \noindent
 {\bf State} {\tt tower}.

Agent $A$ can enter state {\tt tower} either from state {\tt token}, or state {\tt explorer} or state {\tt wait-for-a-tower}. While in this state, agent $A$ will execute all or part of procedure $EXPLO(n_i)$. In all cases we assume that, in every round, agent $A$ tells the others the edge traversal number of $EXPLO(n_i)$ it has just made (in addition to its state). We call this number the index of the agent. Below, we distinguish and detail the case when agent $A$ enters state {\tt tower} from state {\tt token} or {\tt explorer}, and the case when it enters state {\tt tower} from state {\tt wait-for-a-tower}.

When agent $A$ enters state {\tt tower} from state {\tt token} or {\tt explorer}, it starts executing procedure $EXPLO(n_i)$. In the first round, its index is $0$. Just after making the $j$-th edge traversal of $EXPLO(n_i)$, its index is $j$. Agent $A$ carries out the execution of $EXPLO(n_i)$ until its term, except if at some round of the execution the following condition is not satisfied, in which case agent $A$ transits to state {\tt failure}. Here is the condition: the node where agent $A$ is currently located contains a group $\mathcal{S}$ of at least $f+1$ agents in state {\tt tower} having the same index as agent $A$. $\mathcal{S}$ includes agent $A$ but every agent that is in the same node as agent $A$ is not necessarily in $\mathcal{S}$. If at some point this condition is satisfied and the index of agent $A$ is equal to $P(n_i)$, which is the total number of edge traversals in $EXPLO(n_i)$ (refer to Section~\ref{sec:pre}), then agent $A$ declares that gathering is over.

When agent $A$ enters state {\tt tower} from state {\tt wait-for-a-tower}, it has just made the $s$-th edge traversal of $EXPLO(n_i)$ for some $s$ (cf. state {\tt wait-for-a-tower}) and thus, its index is $s$. Agent $A$ executes the next edge traversals
 i.e., the $s+1$-th, $s+2$-th, $\cdots$, and then its index is successively $s+1$, $s+2$, etc. Agent $A$ carries out this execution until the end of procedure $EXPLO(n_i)$, except if the same condition as above is not fulfilled at some round of the execution of the procedure, in which case agent $A$ also transits to state {\tt failure}. As in the first case, if at some point the node where agent $A$ is currently located contains a group $\mathcal{S}$ of at least $f+1$ agents in state {\tt tower} having an index equal to $P(n_i)$, then agent $A$ declares that gathering is over.

\noindent
 {\bf State} {\tt wait-for-a-tower}.


Agent $A$ waits at most $7T(EXPLO(n_i))+4n_i+(2n_i+1)T(EST(n_i))+4Q$ rounds to see the occurrence of the following event: the node where it is currently located contains a group of at least $f+1$ agents in state {\tt tower} having the same index $t$.
If during this waiting period, agent $A$ sees such an event, we distinguish two cases. If $t<P(n_i)$, then it makes the $t+1$-th edge traversal of procedure $EXPLO(n_i)$ and transits to state {\tt tower}. If $t=P(n_i)$, then it declares that gathering is over.

Otherwise, at the end of the waiting period, agent $A$ has not seen such an event, and thus it transits to state {\tt failure}.

\noindent
 {\bf State} {\tt failure}.


Agent $A$ backtracks to the node where it was located at the beginning of phase $i$. To do this, agent $A$ traverses in the reverse order and the reverse direction all edges it has traversed in phase $i$ before entering state {\tt failure}. Once at its starting node, agent $A$ waits $16T(EXPLO(n_i))+9n_i+2(n_i+1)T(EST(n_i))+7Q-p$ rounds where $p$ is the number of elapsed rounds between the beginning of phase $i$ and the end of the backtrack it has just made. At the end of the waiting period, phase $i$ is over. In the next round, agent $A$ will start phase $i+1$.

\subsection{Proof of correctness}

The purpose of this section is to prove that Algorithm Byz-Unknown-Size is correct and that by extension Theorem~\ref{theo:theo6} holds.

For any positive integer $i$, we say that a good agent $A$ tests configuration $\phi_i$ when it executes the $i$-th phase of Algorithm Byz-Unknown-Size. We denote by $n_i$ the size of the graph in configuration $\phi_i$ and by $n$ the (unknown) size of the network where the agents currently evolve. We assume that the smallest integer $i$ such that $n_i\geq n$ is $\alpha$.

According to state failure, we have the following proposition.

\begin{proposition}
\label{prop:prop41}
At the beginning of every phase it executes, a good agent is at the node where it was woken up.
\end{proposition}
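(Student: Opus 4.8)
The statement to prove is Proposition~\ref{prop:prop41}: at the beginning of every phase a good agent executes, it is located at its wake-up node. The plan is to argue by induction on the phase number $i$, exactly mirroring the structure used for Proposition~\ref{prop:prop2} in Section~\ref{sec:sec1}. The base case is phase~$1$: when a good agent is woken up (either by the adversary or by another agent entering its node), it is by definition at its wake-up node, and it begins phase~$1$ immediately since Algorithm Byz-Unknown-Size has no Part~1 preamble separate from phase~$1$'s state \texttt{setup}.

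\textbf{Inductive step.} Assume a good agent $A$ starts phase $i$ at its wake-up node $v$; I must show that if $A$ starts phase $i+1$, it does so again at $v$. The key observation is that within a single phase, $A$ returns to its phase-start node whenever it would leave that phase. Concretely, phase $i$ ends for a good agent only in one of two ways. Either $A$ declares gathering is over (from state \texttt{tower} or \texttt{wait-for-a-tower}), in which case no phase $i+1$ is started and there is nothing to prove; or $A$ passes through state \texttt{failure}. In state \texttt{failure}, the algorithm explicitly instructs $A$ to traverse, in reverse order and reverse direction, all edges it has traversed during phase $i$ before entering \texttt{failure} — this backtrack returns $A$ precisely to the node where it was at the beginning of phase $i$, which by the inductive hypothesis is $v$. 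After the prescribed waiting period in \texttt{failure}, phase $i$ is over and phase $i+1$ begins with $A$ still at $v$. So I would walk through each state (\texttt{setup}, \texttt{tower builder}, \texttt{token}, \texttt{explorer}, \texttt{tower}, \texttt{wait-for-a-tower}) and check that every transition out of phase $i$ either is a declaration of termination or routes through \texttt{failure}'s backtrack; a good agent cannot "fall off" the end of a phase any other way.

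\textbf{Main obstacle.} The only delicate point is verifying that the backtrack in state \texttt{failure} is well-defined and actually returns the agent to $v$ — that is, that the agent has faithfully recorded the sequence of edges it traversed during phase $i$ (the sub-traversals inside \texttt{setup}'s $EXPLO(n_i)$, the path $\pi$, any $EST'$ simulation in \texttt{explorer} together with its own internal backtrack, and any partial $EXPLO(n_i)$ in \texttt{tower}), and that reversing this full concatenated sequence is a valid walk in the graph ending at $v$. This is immediate from the fact that each edge traversal is a physical move with a known incoming port, so the reverse walk is always executable; there is no subtlety beyond bookkeeping. Byzantine agents play no role here, since the claim is only about the good agent $A$'s own moves, which follow the deterministic algorithm regardless of what other agents do. Hence the proposition follows by induction.

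\qed
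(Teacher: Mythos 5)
Your proof is correct and matches the paper's argument: the paper simply derives the proposition from the rules of state {\tt failure} (the backtrack to the phase-start node), which is exactly the induction you spell out in more detail. No discrepancy.
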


\begin{lemma}
\label{lem:lem41}
Let $A$ be a good agent which starts executing the $i$-th phase of Algorithm Byz-Unknown-Size in some round $r$. Let $Z(n_i)=16T(EXPLO(n_i))+9n_i+2(n_i+1)T(EST(n_i))+7Q$ where $Q$ is the number of rounds spent by agent $A$ executing the algorithm before round $r$. The following two properties hold.

\begin{itemize}

\item Property~1. Agent $A$ either spends exactly $Z(n_i)$ rounds executing the $i$-th phase or it will declare that gathering is over after having spent at most $Z(n_i)$ rounds in the $i$-th phase.

\item Property~2. Let $B$ be a good agent different from agent $A$ which starts executing the $i$-th phase of Algorithm Byz-Unknown-Size in some round $r'$ (round $r'$ is not necessarily different from round $r$). Agent $B$ has also spent exactly $Q$ rounds executing Algorithm Byz-Unknown-Size before starting phase $i$.
\end{itemize}
\end{lemma}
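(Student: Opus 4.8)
The plan is to prove both properties essentially by the same structural bookkeeping that was used for Lemma~\ref{lem:lem1} in the known-size case, but carried out simultaneously by induction on the phase number $i$, so that the two properties reinforce each other. The key observation is that Property~2 — that every good agent enters phase $i$ having spent exactly the same number $Q$ of rounds — is what makes the quantity $Z(n_i)$ well-defined and common to all good agents, and conversely Property~1 is what guarantees that phase $i$ consumes exactly $Z(n_i)$ rounds for every good agent (unless it halts), which feeds Property~2 for phase $i+1$. So I would set up a single induction: the base case is phase $1$, where $Q=0$ for every good agent that is woken by the adversary (note there is no Part~1 pre-traversal here, unlike Byz-Known-Size; each agent starts phase $1$ the round it is woken), and the inductive step assumes both properties hold for all phases $<i$ and derives them for phase $i$.

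For Property~1, fix a good agent $A$ starting phase $i$ in round $r$, and suppose it has not declared gathering over by round $r+Z(n_i)-1$. First I would argue, exactly as in Lemma~\ref{lem:lem1}, that $A$ cannot declare gathering over strictly inside phase $i$ after round $r+Z(n_i)-1$: the only states in which the halting declaration can be made are {\tt tower} and {\tt wait-for-a-tower}, and one bounds the maximum number of rounds $A$ can spend in phase $i$ before being forced into state {\tt failure} by summing the worst-case durations state-by-state. Here the accounting is heavier than before: in {\tt setup} the cost is $2T(EXPLO(n_i))+n_i$ rounds (the $EXPLO(n_i)$ exploration, its backtrack — each $T(EXPLO(n_i))$ — plus at most $n_i-1$ edges of $\pi$, using Proposition~\ref{prop:prop41}); then $T(EXPLO(n_i))+n_i+Q$ in each of {\tt yellow} and {\tt orange}; then $T(EXPLO(n_i))+n_i+T(EST(n_i))+Q$ in {\tt red}; then at most $2T(EST(n_i))$ in {\tt explorer} plus at most $2(|\mathcal{H}|-1)T(EST(n_i)) \le 2(n_i-1)T(EST(n_i))$ across the {\tt token}/{\tt explorer} relay (since $|\mathcal{H}|\le n_i$ whenever the agent did not already go to {\tt failure}); then at most $P(n_i)\le T(EXPLO(n_i))$ rounds in {\tt tower}; then at most $7T(EXPLO(n_i))+4n_i+(2n_i+1)T(EST(n_i))+4Q$ in {\tt wait-for-a-tower}. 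I would check that these sum to at most $Z(n_i) - (\text{one or two rounds})$, i.e. strictly less than $Z(n_i)$, so that before round $r+Z(n_i)$ agent $A$ has entered state {\tt failure}; since {\tt failure} cannot transit to a state that declares gathering over, $A$ does not halt inside phase $i$, contradiction. Hence $A$ enters {\tt failure} in phase $i$ after some $x$ rounds; the backtrack from {\tt failure} takes $y$ rounds with $x+y\le Z(n_i)$ (the backtrack reverses at most one truncated $EXPLO(n_i)$ plus one truncated $EST$-simulation plus the $\pi$-path, and by the preceding sum this total is $\le Z(n_i)$); then $A$ waits $Z(n_i)-(x+y)$ rounds, so phase $i$ lasts exactly $Z(n_i)$ rounds and $A$ starts phase $i+1$ in round $r+Z(n_i)$.

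For Property~2, let $B$ be another good agent; by the inductive hypothesis both $A$ and $B$ spent exactly the same total $Q$ executing phases $1,\dots,i-1$ (formally: $Q=\sum_{j<i}Z(n_j)$, which depends only on the fixed enumeration $\Omega$ and not on the agent), so when $B$ enters phase $i$ it too has spent exactly $Q$ rounds, which is the claim. The one subtlety is that a good agent could in principle halt in an earlier phase; but if any good agent declared gathering over in a phase $<i$ then neither $A$ nor $B$ would be starting phase $i$ at all, so the statement is vacuous unless no good agent has halted before phase $i$, in which case the clean recursion $Q=\sum_{j<i}Z(n_j)$ applies to both. I would state this explicitly. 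The main obstacle I anticipate is purely the arithmetic verification that the state-by-state worst-case durations sum to something strictly below $Z(n_i)$ with enough slack to absorb the {\tt failure}-backtrack — in particular correctly tallying the relay of {\tt token}/{\tt explorer} hand-offs (each good agent in $\mathcal{H}$ plays explorer once, costing $2T(EST(n_i))$, and the $2j$ and $2(|\mathcal{H}|-j-1)$ waiting periods telescope to $2(|\mathcal{H}|-1)T(EST(n_i))$ per agent's lifetime in those states) and the $EXPLO(n_i)$-plus-backtrack costs embedded in {\tt setup}. This is mechanical but must be done carefully since the whole phase-synchronization argument, and hence every later lemma, rests on $Z(n_i)$ being an exact common phase length.
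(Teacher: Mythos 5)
Your overall route is the same as the paper's: the paper disposes of Property~1 by transplanting the accounting of Lemma~\ref{lem:lem1} (bound the time before entering \texttt{failure}, add the backtrack, then the padding wait of $Z(n_i)-p$ rounds makes the phase length exact), and obtains Property~2 as a corollary, which is precisely your recursion $Q_1=0$, $Q_{j+1}=Q_j+Z(n_j)$. However, two pieces of your bookkeeping would fail as literally written. First, you propose to bound the time before \texttt{failure} by summing the worst-case durations of \emph{all} the listed states, but your list contains both the \texttt{token}/\texttt{explorer} relay (up to $2|\mathcal{H}|T(EST(n_i))\leq 2n_iT(EST(n_i))$ rounds) and the \texttt{wait-for-a-tower} wait (up to $7T(EXPLO(n_i))+4n_i+(2n_i+1)T(EST(n_i))+4Q$ rounds); together with the $T(EST(n_i))$ term of substate \texttt{red} this puts a coefficient $4n_i+2$ on $T(EST(n_i))$, whereas $Z(n_i)$ only provides $2(n_i+1)$, and nothing in the paper relates $T(EST)$ to $T(EXPLO)$ so the surplus cannot be absorbed generically. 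The check must instead be done per trajectory, using the structural fact that within one phase the \texttt{explorer}/\texttt{token} relay and \texttt{wait-for-a-tower} are mutually exclusive (\texttt{wait-for-a-tower} is entered only from \texttt{setup}, \texttt{yellow}, \texttt{orange} or \texttt{red} and leads only to \texttt{tower} or \texttt{failure}, while \texttt{explorer}/\texttt{token} are reached only at the end of the \texttt{red} wait and never lead to \texttt{wait-for-a-tower}). Second, your estimate of the \texttt{failure} backtrack is too small: the agent retraces \emph{all} edges traversed in the phase, which includes the $EXPLO(n_i)$ of state \texttt{setup} \emph{and} its reverse, the path $\pi$, the $EST'$ walk \emph{and} its internal reverse, and the (possibly truncated) $EXPLO(n_i)$ of state \texttt{tower}, i.e.\ up to roughly $3T(EXPLO(n_i))+n_i+2T(EST(n_i))$ rounds rather than one truncated $EXPLO$ plus one truncated $EST$. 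This is not a harmless slack issue: the \texttt{wait-for-a-tower} trajectory together with its $\approx 3T(EXPLO(n_i))+n_i$ backtrack already consumes essentially all of $Z(n_i)$, so the verification only closes if the full backtrack is counted and the two trajectories are budgeted separately.

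A smaller point: your aside that ``if any good agent declared gathering over in a phase $<i$ then neither $A$ nor $B$ would be starting phase $i$'' is not available here---that is essentially Lemma~\ref{lem:lem47}, proved later and relying on the present lemma, so invoking it would be circular---but it is also unnecessary. Property~2 only requires applying Property~1 inductively to $A$ and $B$ themselves: since each of them starts phase $i$, neither declared gathering over in any earlier phase, so each spent exactly $Z(n_1)+\cdots+Z(n_{i-1})$ rounds beforehand, a quantity depending only on the enumeration $\Omega$ and not on the agent; what other good agents did is irrelevant.
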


\begin{proof}
Using similar arguments to those used in the proof of Lemma~\ref{lem:lem1}, we can prove that the first property holds. Concerning the second property, it is a corollary of the first property.
\end{proof}

In view of Lemma~\ref{lem:lem41}, we know that for every integer $i$, the good agents that do not declare gathering is over before entering phase $i$, all spend the exact same time executing Algorithm Byz-Unknown-Size before entering phase $i$ (whether they enter it in the same round or not). In the rest of this section, we denote by $Q_j$ the number of rounds spent before entering phase $j$ by any agent that does not declare gathering is over before starting the $j$-th phase of Algorithm Byz-Unknown-Size.

\begin{proposition}
\label{prop:prop42}
Let $A$ and $B$ be two good agents such that agent $A$ is woken up by the time agent $B$ is woken up. If agent $A$ does not declare gathering is over before starting phase $\alpha$, then the delay between the starting rounds of agents $A$ and $B$ is at most $Q_\alpha+T(EXPLO(n_\alpha))$.
\end{proposition}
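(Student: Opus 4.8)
The plan is to follow essentially the same line as Proposition~\ref{prop:prop1}, but paying attention to what ``not declaring gathering is over before starting phase $\alpha$'' buys us in the unknown-size setting. First recall that, unlike in Algorithm Byz-Known-Size, there is no separate Part~1 that forces a full graph traversal at wake-up; instead, the very first thing every good agent does, upon waking and entering state \texttt{setup} of phase~$1$, is execute procedure $EXPLO(n_1)$ followed by a backtrack, and more generally, at the beginning of each phase $j$ it executes $EXPLO(n_j)$ and backtracks. The key point is that, since $\alpha$ is by definition the smallest index with $n_\alpha \geq n$, the execution of $EXPLO(n_\alpha)$ visits \emph{all} nodes of the actual network (by the guarantee on procedure $EXPLO$ recalled in Section~\ref{sec:pre}). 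In particular, whichever good agent wakes first and reaches phase $\alpha$ will, while carrying out the $EXPLO(n_\alpha)$ of its own state \texttt{setup} of phase $\alpha$, pass through the starting node of every still-dormant agent and thereby wake it.

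So the first step is to let agent $A$ be one of the two good agents; by hypothesis $A$ is woken no later than $B$, and $A$ does not declare gathering is over before starting phase $\alpha$. Suppose $A$ starts phase $\alpha$ in round $r_A$. By Lemma~\ref{lem:lem41} (Property~2), every good agent that has not yet terminated spends exactly $Q_\alpha$ rounds before entering phase $\alpha$; in particular, if $B$ is woken in round $r_B$, then either $B$ terminates first (in which case it was already awake, a contradiction with $B$ being dormant up to $r_B$, or the delay is trivially bounded), or $B$ reaches phase $\alpha$ exactly $Q_\alpha$ rounds after its wake-up. The plan is then to argue that $B$ must be awake by round $r_A + T(EXPLO(n_\alpha))$: during the first $T(EXPLO(n_\alpha))$ rounds of $A$'s phase $\alpha$, agent $A$ is running $EXPLO(n_\alpha)$ which, since $n_\alpha \geq n$, is a cover walk of the whole graph, so $A$ visits $B$'s starting node at some round in $\{r_A, \dots, r_A + T(EXPLO(n_\alpha)) - 1\}$ and wakes $B$ then if it is not already awake. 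Hence $r_B \leq r_A + T(EXPLO(n_\alpha))$. Since $A$ starts phase $\alpha$ at round $r_A = s_A + Q_\alpha$ where $s_A$ is $A$'s wake-up round, we get $s_B \leq r_B \leq r_A + T(EXPLO(n_\alpha)) = s_A + Q_\alpha + T(EXPLO(n_\alpha))$, i.e. the delay $s_B - s_A \leq Q_\alpha + T(EXPLO(n_\alpha))$, which is the claimed bound (and of course $s_A \leq s_B$ by hypothesis).

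The main obstacle I anticipate is a subtle timing issue: I must make sure $A$ is actually executing $EXPLO(n_\alpha)$, and not some later part of state \texttt{setup} of phase $\alpha$, during the window in which it needs to wake $B$ — but this is immediate, because $EXPLO(n_\alpha)$ is literally the first action of state \texttt{setup}. A second point of care is the case where $B$ might have been woken by yet another agent (adversary or a third agent) even earlier, or where $B$ itself has already terminated; in all such cases the delay bound only becomes easier to satisfy, so they need only a one-line remark. Finally, I should note that the use of $n_\alpha \geq n$ is exactly what forces the traversal to be a genuine cover walk — this is the only place where the definition of $\alpha$ enters, and it is why the bound is stated in terms of $Q_\alpha$ and $T(EXPLO(n_\alpha))$ rather than the analogous phase-$1$ quantities as one might naively expect.
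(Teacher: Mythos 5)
Your proof is correct and follows essentially the same route as the paper's: agent $A$, not having declared gathering over, enters phase $\alpha$ exactly $Q_\alpha$ rounds after its wake-up, and the $EXPLO(n_\alpha)$ traversal at the start of state {\tt setup} of that phase covers the whole graph (since $n_\alpha\geq n$), so it wakes any still-dormant agent, in particular $B$, within $T(EXPLO(n_\alpha))$ further rounds. The paper states this in three sentences; your additional remarks (e.g.\ about $B$ possibly reaching phase $\alpha$ itself) are harmless but not needed, since only $B$'s wake-up round matters.
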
  

\begin{proof}
If agent $A$ is woken up in some round, it ends up starting phase $\alpha$ after having spent exactly $Q_\alpha$ rounds. When entering phase $\alpha$, agent $A$ is in state {\tt setup} and first execute procedure $EXPLO(n_\alpha)$ according to Algorithm Byz-Unknown-Size. Since by definition $n_\alpha\geq n$, then the propositions holds. 
\end{proof}

Even if Algorithm Byz-Unknown-Size has several changes compared with Algorithm Byz-known-Size (in particular the two extra states {\tt explorer} and {\tt token} to take into account here), we can prove Lemmas~\ref{lem:lem42}, \ref{lem:lem43} and~\ref{lem:lem44} by using similar arguments to those used in the proofs of Lemmas~\ref{lem:lem2}, \ref{lem:lem3} and~\ref{lem:lem5}.

\begin{lemma}
\label{lem:lem42}
Let $T_j$ be a tower located at node $v$ in round $t$. There is at least one good agent in $T_j$ which has been in state {\tt tower} since round $t-j$.
\end{lemma}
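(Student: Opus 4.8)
The plan is to mirror the proof of Lemma~\ref{lem:lem2} essentially verbatim, since the structure of state \texttt{tower} is identical in both algorithms (an agent entering from \texttt{tower builder} gets index $0$ and increments it each round; an agent entering from \texttt{wait-for-a-tower} gets an index matching the tower it joined). First I would argue by contradiction: suppose $T_j$ at node $v$ in round $t$ contains no good agent that has been in state \texttt{tower} since round $t-j$. By Remark~\ref{rem:rem1} there is at least one good agent $A\in T_j$, and by definition of a tower, $A$ is in state \texttt{tower} with index $j$ in round $t$. Examining the two possible ways $A$ last transited to state \texttt{tower}: if it was from \texttt{token} or \texttt{explorer}, then $A$ entered with index $0$ exactly $j$ rounds before round $t$ and stayed in \texttt{tower} throughout (it never left, else it would have entered \texttt{failure}), so it has been in \texttt{tower} since round $t-j$, contradicting the assumption. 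Hence $A$'s last transition to \texttt{tower} was from \texttt{wait-for-a-tower}.

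Next I would carry out the induction exactly as in Lemma~\ref{lem:lem2}. Let $s\le t$ be the last round in which $A$ enters \texttt{tower} from \texttt{wait-for-a-tower}, and among all good agents of $T_j$ choose $A$ so that this last-entry round $s$ is minimal; then for every good agent in $T_j$ its last entry round $s'$ satisfies $s\le s'\le t$. In round $s-1$, agent $A$ is in \texttt{wait-for-a-tower} and sees a group of at least $f+1$ agents in state \texttt{tower} with index $j-(t-s)-1$; among them is a good agent $B$ (at most $f$ Byzantine). Agents $A$ and $B$ then make the same $\bigl(j-(t-s)\bigr)$-th edge traversal of $EXPLO(n_i)$ and are together, in \texttt{tower}, with index $j-(t-s)$ at the same node in round $s$. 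One shows inductively that at each subsequent round the group of at least $f+1$ \texttt{tower} agents of the current index persists (otherwise either $T_j$ would fail to exist in round $t$, or $A$ would transit to \texttt{failure} before round $t$, contradicting the choice of $s$), so $A$ and $B$ stay together up to round $t$, $B\in T_j$, and $B$ has been continuously in \texttt{tower} since round $s-1<s$. This contradicts the minimality of $s$. Since $A$ could transit to \texttt{tower} only from \texttt{token}/\texttt{explorer} or from \texttt{wait-for-a-tower}, both cases are impossible, so $A$ cannot exist and the lemma follows.

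The only genuinely new bookkeeping compared with Lemma~\ref{lem:lem2} is that the exploration procedure now has parameter $n_i$ rather than the globally known $n$, and that there are two extra states \texttt{token} and \texttt{explorer}; but these play exactly the role that \texttt{tower builder} played before — they are simply the states \emph{from which} an agent can first enter \texttt{tower} with index $0$ — so the case analysis on $A$'s last transition is unchanged in substance. I expect the main (minor) obstacle to be checking that an agent entering \texttt{tower} from \texttt{token} or from \texttt{explorer} indeed does so with index $0$ and remains in \texttt{tower} without interruption while its index climbs to $j$: this follows directly from the description of state \texttt{tower}, where the only escape is a transition to \texttt{failure} triggered by the absence of a group of $f+1$ same-index \texttt{tower} agents, which cannot have happened if $A$ is a member of $T_j$ in round $t$ with index $j$. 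Aside from this, the argument is a routine transcription of the earlier proof with $n$ replaced by $n_i$ throughout.
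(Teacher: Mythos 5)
Your transcription is indeed what the paper itself does: it offers no standalone proof of this lemma and simply asserts that the argument of Lemma~\ref{lem:lem2} carries over, with \texttt{token} and \texttt{explorer} playing the role that \texttt{tower builder} played as the only entry points into state \texttt{tower} with index $0$; your case analysis on the last transition into \texttt{tower}, the choice of the good agent with minimal last-entry round $s$, and the induction keeping $A$ and $B$ together all match that plan.

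There is, however, one step where ``replace $n$ by $n_i$ throughout'' is not routine, and your write-up asserts it without justification: when $A$, in state \texttt{wait-for-a-tower} of \emph{its} phase $i$, joins the tower in round $s$, it makes the next edge traversal of $EXPLO(n_i)$, whereas the good agent $B$ it found in the tower continues $EXPLO(n_{i'})$ for \emph{its own} tested configuration $\phi_{i'}$. In the known-size algorithm both agents execute the same $EXPLO(n)$, so ``$A$ and $B$ make the traversal together'' is immediate; here it needs $n_i=n_{i'}$, i.e.\ essentially that $A$ and $B$ are testing the same configuration. That is the content of Lemma~\ref{lem:lem43}, which is stated after this lemma and whose proof relies on it (exactly as the proof of Lemma~\ref{lem:lem3} invokes Lemma~\ref{lem:lem2}), so you cannot simply appeal to it here. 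If $B$ were running $EXPLO$ with a different parameter, the two agents could separate at the very next traversal, the companions that keep $A$'s $f+1$-condition satisfied afterwards could be exclusively Byzantine, and the induction that is supposed to produce a good agent of $T_j$ whose last entry into \texttt{tower} precedes $s$ collapses. A complete proof therefore needs an additional argument that the good agents involved use the same parameter $n_i$ --- for instance a timing argument in the style of Lemma~\ref{lem:lem3}, exploiting the $+Q$ paddings of the waiting periods and Proposition~\ref{prop:prop42}, or a simultaneous induction on rounds establishing Lemmas~\ref{lem:lem42} and~\ref{lem:lem43} together. Apart from this point (which the paper's one-line justification also leaves implicit), your adaptation is faithful.
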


\begin{lemma}
\label{lem:lem43}
Let $A$ be a good agent that is either {\tt orange}, or {\tt red} or in a state $\in$ \{{\tt tower}, {\tt token}, {\tt explorer}\} in round $r$. Let $\phi_i$ be the configuration tested by agent $A$ in round $r$. If $i\geq\alpha$,  then all good agents, which do not declare gathering is over before round $r$, test $\phi_i$ in round $r$.
\end{lemma}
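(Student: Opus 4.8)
The plan is to mimic the proof of Lemma~\ref{lem:lem3} almost verbatim, so I would first recall its structure: one argues by contradiction, picks the \emph{first} round $x$ for which the statement fails, takes a witness good agent $B$ that has not yet declared gathering, and shows that in round $x$ agent $B$ can be neither in an earlier ``part'' of the algorithm, nor testing some $\phi_j$ with $j<i$, nor testing some $\phi_j$ with $j>i$. The new ingredient is the hypothesis $i\geq\alpha$, which is needed precisely because, with unknown $n$, procedure $EXPLO(n_j)$ only guarantees a full traversal of the network (and hence wakes every dormant agent) once $n_j\geq n$, i.e.\ once $j\geq\alpha$. So the very first step in each branch is: since agent $A$ is {\tt orange}/{\tt red}/in one of the three advanced states in round $x$ while testing $\phi_i$ with $i\geq\alpha$, it has already completed the $EXPLO(n_i)$ traversal of state {\tt setup}, and because $n_i\geq n_\alpha\geq n$ this traversal visited every node; hence $B$ is not dormant in round $x$, and more importantly the ``almost-synchronization'' bound of Proposition~\ref{prop:prop42} applies.

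Next I would handle the three impossibilities. For $j<i$ (or $B$ still in the $EXPLO$/setup prefix of phase $i$, which now plays the role that ``part~1'' played in Section~\ref{sec:sec1}): use Property~1 of Lemma~\ref{lem:lem41} (uniform per-phase durations $Z(n_j)$) together with Lemma~\ref{lem:lem41} Property~2 and the synchronization bound $Q_\alpha+T(EXPLO(n_\alpha))$ from Proposition~\ref{prop:prop42} to derive that $A$ would have been running for strictly more rounds than the delay permits — exactly the counting argument of Lemma~\ref{lem:lem3}, but with the waiting constants of the new algorithm (the $T(EST(n_i))$ and $Q$ terms) plugged in. For $j>i$: symmetrically, if $B$ is already a full phase ahead, then $B$ has run at least $Z(n_i)$ extra rounds relative to where $A$ can be while {\tt orange}/{\tt red}, contradicting the delay bound; the one place requiring care, as in the original, is when $A$ is in state {\tt tower}: then either $A$ was {\tt red} (or in {\tt token}/{\tt explorer}) the round before — reducing to a case already excluded, or to round $x-1$ contradicting minimality of $x$ — or there is a tower $\mathcal{T}_j$ in round $x-1$, in which case Lemma~\ref{lem:lem42} supplies a good agent $C$ in that tower that entered {\tt tower} from {\tt tower builder} early enough, and applying minimality of $x$ to $C$ (which by induction also tests $\phi_i$) gives the contradiction. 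The extra states {\tt token} and {\tt explorer} are absorbed by noting they occur strictly between {\tt red} and {\tt tower} within the same phase $i$, so being in them only strengthens the lower bound on how long $A$ has been running.

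The main obstacle is bookkeeping, not ideas: one must verify that the arithmetic inequalities among the waiting constants ($T(EXPLO(n_i))$, $n_i$, $T(EST(n_i))$, and the accumulated prefix time $Q_i=Q_{i-1}+Z(n_{i-1})$) still produce a gap strictly larger than the allowed delay $Q_\alpha+T(EXPLO(n_\alpha))$ in every branch, now that the per-phase budget $Z(n_i)$ is much larger and involves $T(EST)$. Since the algorithm's constants were chosen (in state {\tt wait-for-a-tower} and state {\tt failure}) with exactly this slack in mind, the inequalities go through, but each case needs to be re-checked against $Z(n_i)=16T(EXPLO(n_i))+9n_i+2(n_i+1)T(EST(n_i))+7Q$. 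A secondary subtlety is that the induction hypothesis ``the lemma holds in all rounds before $x$'' is only available for configurations with index $\geq\alpha$; this is harmless because every agent invoked in the argument ($B$, and any tower member $C$) is shown to test the \emph{same} $\phi_i$ with $i\geq\alpha$, so the restricted induction suffices. I would therefore state the proof as: ``Identical to the proof of Lemma~\ref{lem:lem3}, using Lemma~\ref{lem:lem41} in place of Lemma~\ref{lem:lem1}, Proposition~\ref{prop:prop42} in place of Proposition~\ref{prop:prop1}, Lemma~\ref{lem:lem42} in place of Lemma~\ref{lem:lem2}, the hypothesis $i\geq\alpha$ to guarantee the full $EXPLO(n_i)$ traversal, and the waiting constants of Algorithm Byz-Unknown-Size,'' then spell out only the one or two inequalities whose verification differs substantively.
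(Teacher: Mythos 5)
Your plan is essentially the paper's own: the paper gives no separate proof of this lemma, stating only that Lemmas~\ref{lem:lem42}, \ref{lem:lem43} and~\ref{lem:lem44} follow ``by similar arguments'' to Lemmas~\ref{lem:lem2}, \ref{lem:lem3} and~\ref{lem:lem5}, i.e.\ exactly the substitutions you list (Lemma~\ref{lem:lem41} for Lemma~\ref{lem:lem1}, Proposition~\ref{prop:prop42} for Proposition~\ref{prop:prop1}, Lemma~\ref{lem:lem42} for Lemma~\ref{lem:lem2}, the new waiting constants, and the hypothesis $i\geq\alpha$ providing the synchronization that Part~1 provided in the known-size setting, with {\tt token} and {\tt explorer} handled as intermediate states of the same phase).

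One local justification in your write-up is incorrect as stated: $n_i\geq n_\alpha$ need not hold, since the enumeration $\Omega$ is not ordered by graph size, so for $i>\alpha$ the traversal $EXPLO(n_i)$ executed in state {\tt setup} of phase $i$ need not visit every node of the real graph. The facts you actually need (agent $B$ is awake by round $x$, and the wake-up delay is bounded) should be drawn instead from the observation that an agent that has reached phase $i\geq\alpha$ has already completed the $EXPLO(n_\alpha)$ traversal of phase $\alpha$, where $n_\alpha\geq n$ --- which is exactly the content of Proposition~\ref{prop:prop42}, which you invoke in the same sentence. So this is a one-line repair rather than a structural gap, and the rest of the adaptation (including the arithmetic checks against $Z(n_i)$ and $Q_i\geq Q_\alpha$) goes through as you describe.
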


\begin{lemma}
\label{lem:lem44}
Let $A$ be a good agent entering state {\tt explorer} or {\tt token} from state {\tt tower builder} in round $r$ at node $v$ by testing a configuration $\phi_i$ such that $i\geq\alpha$. Let $B$ be a good agent that does not declare that gathering is over before round $r$ and that does not enter state {\tt explorer} or {\tt token} from state {\tt tower builder} in round $r$ at node $v$. Agent $B$ is in state {\tt wait-for-a-tower} in round $r$.
\end{lemma}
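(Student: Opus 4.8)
The plan is to mirror the proof of Lemma~\ref{lem:lem5} almost verbatim, adapting it to the richer state space of Algorithm Byz-Unknown-Size. I would argue by contradiction: suppose $x$ is the first round in which the lemma fails, let $\phi_i$ (with $i\geq\alpha$) be the configuration tested by agent $A$ in round $x$, and let $k$ be the number of labels in $\phi_i$. By Lemma~\ref{lem:lem43} (which applies precisely because $i\geq\alpha$ and $A$ enters {\tt explorer} or {\tt token} from {\tt tower builder}, hence is necessarily {\tt red} in round $x-1$), agent $B$ also tests $\phi_i$ in round $x$. It then suffices to rule out, one by one, every state of phase $i$ that $B$ could occupy in round $x$ other than {\tt wait-for-a-tower}.

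First I would eliminate {\tt setup}: by Property~2 of Lemma~\ref{lem:lem41} and Proposition~\ref{prop:prop42}, the delay between the rounds in which $A$ and $B$ start phase $i$ is at most $Q_i+T(EXPLO(n_i))$ wait --- actually, since $i\ge\alpha$, the relevant bound is the one from Proposition~\ref{prop:prop42} specialized through the phase structure, namely at most $T(EXPLO(n_i))$ after both have reached phase $i$; yet reaching state {\tt explorer} or {\tt token} from {\tt tower builder} requires $A$ to have spent at least $2T(EXPLO(n_i))+2n_i+T(EST(n_i))+2Q_i$-ish rounds in phase $i$ (execute $EXPLO(n_i)$, backtrack, follow $\pi$, then $T(EXPLO(n_i))+n_i+Q_i$ as {\tt yellow}, etc.), whereas $B$ by Proposition~\ref{prop:prop41} cannot spend more than $T(EXPLO(n_i))+n_i+Q_i$ rounds in {\tt setup}; comparing these against the delay bound yields a contradiction. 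Next I would eliminate {\tt tower builder}: if $B$ were {\tt yellow}, {\tt orange}, or {\tt red} in round $x$ while $A$ leaves {\tt tower builder} in round $x$, then counting rounds as in Lemma~\ref{lem:lem5} (using $k\geq f+2$ now, which only strengthens the ``$k+1$ distinct labels in $\phi_i$'' contradiction) rules out {\tt orange} and {\tt red}, and the timing argument via Lemma~\ref{lem:lem41}/Proposition~\ref{prop:prop42} rules out {\tt yellow}.

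The genuinely new cases are {\tt token}, {\tt explorer}, and {\tt tower}, none of which appear in Lemma~\ref{lem:lem5}. For {\tt token} and {\tt explorer} in round $x$: by Lemma~\ref{lem:lem43} any good agent in such a state in round $x$ tests $\phi_i$, and tracing backwards, such an agent must have been {\tt red} (testing $\phi_i$) for a full $T(EXPLO(n_i))+n_i+T(EST(n_i))+Q_i$-round window; combined with $A$'s own obligation to have been {\tt red} for the same window before leaving {\tt tower builder} in round $x$, and the fact that at the moment both $A$ and $B$ became {\tt red} there would be two disjoint groups of $\geq k$ {\tt red} agents all testing $\phi_i$ (if they are at distinct nodes) forcing $\geq k+1$ good agents with labels in $\phi_i$, I get the familiar ``$\geq k+1$ distinct labels'' contradiction; and if they are at the same node, the deterministic rule assigning {\tt explorer}/{\tt token} roles by position in $\mathcal{H}$, together with the prescribed waiting periods $2j\cdot T(EST(n_i))$ and $2(|\mathcal{H}|-j-1)\cdot T(EST(n_i))$, forces $A$ and $B$ into the same state at the same node --- contradicting that $B$ does \emph{not} enter {\tt explorer}/{\tt token} from {\tt tower builder} at $v$ in round $x$. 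For {\tt tower} in round $x$ with index $j\geq 0$: exactly as in Lemma~\ref{lem:lem5}, Lemma~\ref{lem:lem42} hands me a good agent $C$ in state {\tt tower} with index $0$ in round $x-j$, i.e. $C$ entered {\tt tower} from {\tt token} or {\tt explorer} in round $x-j$; but then tracing $A$ backwards shows $A$ is still {\tt red} (or in {\tt token}/{\tt explorer}) in round $x-j$, so the lemma would already fail in round $x-j<x$, contradicting minimality of $x$. Finally I would eliminate {\tt failure} by the same timing comparison used at the end of Lemma~\ref{lem:lem5}, now with the waiting bound $7T(EXPLO(n_i))+4n_i+(2n_i+1)T(EST(n_i))+4Q_i$ of state {\tt wait-for-a-tower} and the {\tt failure} padding of Lemma~\ref{lem:lem41}.

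The main obstacle is bookkeeping the round counts correctly through the new states: the windows $T(EST(n_i))$ for simulated exploration and the role-dependent waits $2j\cdot T(EST(n_i))$, $2(|\mathcal{H}|-j-1)\cdot T(EST(n_i))$ in state {\tt token}, plus the $2T(EST(n_i))-t$ then $t$ split in state {\tt explorer}, must be tallied precisely so that every inequality against Proposition~\ref{prop:prop42}'s delay bound $Q_\alpha+T(EXPLO(n_\alpha))$ goes through; the structural arguments (two disjoint {\tt red} groups $\Rightarrow$ too many labels; position in $\mathcal{H}$ $\Rightarrow$ forced identical role) are straightforward transplants of the known-size proof, and the hypothesis $i\geq\alpha$ is exactly what makes Lemma~\ref{lem:lem43} applicable throughout. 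I expect the write-up to be long but entirely routine once the timing ledger for the seven states is set up once and reused.
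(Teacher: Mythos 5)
Your overall plan -- transplant the proof of Lemma~\ref{lem:lem5} and handle the extra states -- is exactly what the paper intends (it only asserts that Lemma~\ref{lem:lem44} follows by ``similar arguments''), and your elimination of {\tt setup}, {\tt yellow}, {\tt orange}, {\tt red} and of the {\tt wait-for-a-tower}-to-{\tt failure} sub-case does carry over. The gaps are precisely in the cases you call ``genuinely new''. For {\tt token}/{\tt explorer}: your counting argument rests on the claim that ``at the moment both $A$ and $B$ became {\tt red}'' there are two disjoint groups of at least $k$ {\tt red} agents. That synchronization only holds in the sub-case where $B$ also leaves {\tt tower builder} in round $x$ (at some node $u\ne v$), because then both {\tt red} waiting periods, which have the same fixed length, end at round $x-1$. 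If instead $B$ entered {\tt explorer} or {\tt token} in some earlier round $x'<x$, its {\tt red} window ended earlier and need not intersect $A$'s at all: the stay in {\tt token}/{\tt explorer} can last up to $2|\mathcal{H}|T(EST(n_i))\le 2n_iT(EST(n_i))$ rounds, which is not bounded by the length of the {\tt red} window, and while $B$ is in state {\tt explorer} there is not even a guaranteed group of $f+1$ {\tt token} agents in every round to feed the count. The clean repair is the device Lemma~\ref{lem:lem5} uses for its index $j>0$ case: trace $B$ back, within phase $i$, to the round $x'<x$ at which it entered {\tt explorer} or {\tt token} from {\tt tower builder} (it still tests $\phi_i$ there, and $i\ge\alpha$), and invoke minimality of $x$ at $x'$; since within a phase there is no transition from {\tt wait-for-a-tower}, {\tt explorer} or {\tt token} back to {\tt tower builder}, agent $A$ -- which is {\tt red} at $x-1$ -- can neither be in {\tt wait-for-a-tower} at $x'$ nor enter with $B$ at $x'$, so the lemma already fails at $x'$.

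The {\tt tower} case has a genuine logical misstep: the agent $C$ supplied by Lemma~\ref{lem:lem42} enters state {\tt tower} from {\tt token} or {\tt explorer} in round $x-j$, and that event is \emph{not} an instance of this lemma's hypothesis (which concerns entering {\tt explorer} or {\tt token} from {\tt tower builder}), so ``the lemma would already fail in round $x-j$'' does not follow; moreover $A$ cannot be ``in {\tt token}/{\tt explorer}'' at $x-j$ and {\tt red} at $x-1$, since there is no return to {\tt tower builder} within a phase. You must push $C$ further back, through its {\tt token}/{\tt explorer} period, to its entry from {\tt tower builder} (Lemma~\ref{lem:lem45} together with Lemma~\ref{lem:lem43} pins the tested configuration to $\phi_i$ with $i\ge\alpha$ there), and only then apply minimality, as above. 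Finally, {\tt failure} is not reachable only from {\tt wait-for-a-tower}: in Byz-Unknown-Size an agent may enter {\tt failure} from {\tt red} (when $|\mathcal{H}|>n_i$), from {\tt token}, from {\tt explorer} or from {\tt tower}, and those sub-cases require the counting/minimality arguments rather than the single timing comparison you propose. All of these are fixable with the same backtracking-plus-minimality idea, but as written these steps do not go through.
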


\begin{lemma}
\label{lem:lem45}
Let $A$ be a good agent that tests configuration $\phi_i$ in round $r$. If agent $A$ enters state {\tt tower} from state {\tt token} or {\tt explorer} in round $r$ then $n_i=n$.
\end{lemma}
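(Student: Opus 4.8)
### Proof plan for Lemma~\ref{lem:lem45}

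\textbf{Setup.} Suppose agent $A$ tests $\phi_i$ and enters state {\tt tower} from state {\tt token} or {\tt explorer} in round $r$. By the rules of state {\tt token} (Case~2) and state {\tt explorer}, just before this transition, $A$ must have witnessed, throughout the relevant waiting period and in particular at its end, at least $f+1$ agents in state {\tt token} of phase $i$ at its current node; and $A$ itself either just finished a successful simulation of $EST'$ as an {\tt explorer} (having added exactly $n_i$ nodes to its BFS tree, since fewer than $n_i$ forces {\tt failure} and more than $n_i$ also forces {\tt failure}), or it transited from {\tt token} to {\tt tower} directly, which by the structure of the algorithm can only happen after the full round-robin of explorations among the agents in $\mathcal H$ has completed. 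The plan is to show that a successful $EST'$-simulation by a \emph{good} explorer, with $f+1$ genuine token-agents present throughout, necessarily computes $n$; and then to argue that the chain of transitions leading to $A$ entering {\tt tower} guarantees that such a good, successful simulation took place and its value $n_i$ is the one being used.

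\textbf{Step 1: a good explorer with a reliable token computes $n$.} First I would isolate the good explorer. Among the agents that were {\tt red} in phase $i$ at node $v$, the set $\mathcal H$ of labels has size $|\mathcal H|\le n_i$ (else {\tt failure}); at least one of the labels in $\mathcal H$ belongs to a good agent, and in fact since $k\ge f+2$ there are at least two good agents among them, so there is a good {\tt explorer} and the ``token'' group always contains at least $f+1$ agents, hence at least one good agent, during every exploration. The key point is that in $EST'$, agent $A$ (or the relevant good explorer) declares ``token seen'' exactly when it finds $\ge f+1$ agents in state {\tt token} of phase $i$ at the node it reaches after simulating $\overline q$. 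Since the genuine token agents never move while in state {\tt token}, they are always at the root node $v$; and since Byzantine agents number at most $f$, a group of $\ge f+1$ {\tt token}-agents can \emph{only} occur at $v$. Therefore the ``token seen'' predicate in $EST'$ coincides exactly with ``the current node is $v$'' — i.e.\ $EST'$ behaves exactly like genuine $EST$ with a real stationary token at $v$. By the correctness of procedure $EST$ (from \cite{ChalopinDK10}, recalled in Section~\ref{sec:pre}), the BFS tree it builds has exactly $n$ nodes. A successful $EST'$-run adds exactly $n_i$ nodes, so $n_i=n$.

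\textbf{Step 2: the transition into {\tt tower} forces a successful good $EST'$-run with parameter $n_i$.} Now I would trace the two entry routes. If $A$ enters {\tt tower} from {\tt explorer}, then $A$ itself is a good agent that just completed a successful $EST'$-simulation with parameter $n_i$ (otherwise it would be in {\tt failure}), and moreover it verified $\ge f+1$ {\tt token}-agents of phase $i$ were present throughout — so Step~1 applies directly to $A$ and gives $n_i=n$. If $A$ enters {\tt tower} from {\tt token} (Case~2), then $A$ was a token in the last exploration of the round-robin; I would use Lemma~\ref{lem:lem43} and Lemma~\ref{lem:lem44} (together with the fact that all these agents test the same $\phi_i$ and, since a tower of $\ge f+1$ agents all with index $0$ is about to form, $i\ge\alpha$ holds for the relevant reasoning) to argue that the explorer of that last round must have been the good agent with the largest label in $\mathcal H$, and that it completed $EST'$ successfully — for otherwise that explorer would have gone to {\tt failure} and the required $f+1$ {\tt token}-agents (which $A$ keeps checking) would have dropped below threshold during $A$'s Case~2 waiting period, sending $A$ itself to {\tt failure}, a contradiction. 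Hence again a good explorer ran a successful $EST'$ with parameter $n_i$ and reliable token, so $n_i=n$ by Step~1.

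\textbf{Main obstacle.} The delicate part is Step~2 in the {\tt token}-entry case: ruling out that \emph{every} good explorer in the round-robin failed while Byzantine agents, by impersonating token-labels, somehow kept a phony group of $\ge f+1$ ``{\tt token}'' agents alive so that $A$ transits into {\tt tower} without any good agent ever having successfully explored. This is exactly where $k\ge f+2$ is used rather than $k\ge f+1$: the extra good agent guarantees that at least one good explorer and $f+1$ honest token agents coexist, so there is always a good explorer whose run either succeeds (giving $n_i=n$) or fails in a way that collapses the token count below $f+1$ and thereby aborts the whole construction. I would need to be careful with the timing/bookkeeping of the waiting periods in states {\tt token} (Cases~1 and~2) and {\tt explorer} — the padding to $2T(EST(n_i))$ per exploration and the $2jT(EST(n_i))$ / $2(|\mathcal H|-j-1)T(EST(n_i))$ offsets — to confirm that a {\tt failure} of any participant is observed by $A$ before $A$ commits to {\tt tower}. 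Everything else is a fairly mechanical transfer of the known correctness of $EST$ through the ``$\ge f+1$ tokens $\iff$ at $v$'' equivalence.
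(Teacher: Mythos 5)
There is a genuine gap, and it sits at the very heart of your Step~1. You assert that ``since the genuine token agents never move and Byzantine agents number at most $f$, a group of $\geq f+1$ {\tt token}-agents can only occur at $v$'', i.e.\ that the ``token seen'' test in $EST'$ coincides with being at the root. But this is precisely what has to be proved when the tested configuration may be wrong ($n_i\neq n$, possibly $i<\alpha$): nothing a priori prevents another node $w\neq v$ from hosting a group of $\geq f+1$ agents claiming state {\tt token} of phase $i$, consisting of up to $f$ Byzantine agents plus good agents that legitimately reached {\tt token} at $w$ (note also that the root group of $\geq f+1$ token-claimants may contain as few as \emph{one} good agent, since only $k-f\geq 2$ of the $k$ red claimants are guaranteed good). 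The paper's proof is devoted exactly to excluding this fake-token scenario (and the token-vanishes-then-reappears scenario): it shows that such an event forces, \emph{in one and the same round}, two disjoint groups at distinct nodes --- one of $\geq k$ red claimants and one of $\geq f+1$ red/token claimants of phase $i$ --- hence at least $k+1$ good agents whose labels all lie in $\phi_i$, contradicting that $\phi_i$ has only $k$ labels; establishing the simultaneity requires the careful bookkeeping with the rounds $t$, $t'$, $t''$ and the length of the {\tt red}/{\tt token} waiting periods (without simultaneity the $f$ Byzantine agents could inflate both groups at different times and no contradiction follows). Your ``Main obstacle'' paragraph does not supply this: ``at least $f+1$ honest token agents coexist'' is false in general, Byzantine agents can keep the claimed token count at $f+1$ regardless of what any explorer does, and a failed (or Byzantine, or nonexistent) explorer does not lower the token count, so your abort mechanism does not fire.

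Your Step~2 for the {\tt token}-entry case is also off the paper's route and partly circular: you invoke Lemmas~\ref{lem:lem43} and~\ref{lem:lem44}, which require $i\geq\alpha$, but $i\geq\alpha$ is essentially the conclusion of the present lemma ($n_i=n$), so it cannot be assumed here; and the explorer of the last round-robin turn need not be good (the largest label in $\mathcal H$ may be forged by a Byzantine agent), so no contradiction arises from its behaviour. The paper's handling is much simpler: an agent can enter {\tt tower} from {\tt token} only via Case~2, hence it was itself an {\tt explorer} earlier in the phase and completed a successful $EST'$ run with exactly $n_i$ nodes, so the same claim as in the explorer case applies to its own BFS tree. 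In short, the easy direction of your argument (a good explorer whose token behaves genuinely computes $n$) matches the paper, but the adversarial direction --- why the token \emph{must} behave genuinely even when the tested configuration is wrong --- is assumed rather than proved, and that is the actual content of the lemma.
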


\begin{proof}

Let us first consider the case where agent $A$ enters state {\tt tower} from state {\tt explorer} in round $r$. According to Algorithm Byz-Unknown-Size agent $A$ computed a BFS tree $T$ while in state {\tt explorer} (just before transiting to state {\tt tower} of phase $i$). Before going any further, we need to prove the following claim.

{\bf Claim~1.} The size $m$ of $T$ is such that $m=n$.

If $n_i=n$ then the claim is true because according to Algorithm Byz-Unknown-Size agent $A$ can transit from state {\tt explorer} to state {\tt tower} in round $r$ only if $m=n_i$. So let us focus on the case where $n_i\ne n$. Denote by $u$ the node of the graph corresponding to the root of $T$. Denote by $T'$ the BFS tree rooted at a node corresponding to node $u$ and that would result from the execution of procedure $EST$ by an explorer having its own token at node $u$ which cannot disappear and which cannot be confused with another token. The size of $T'$ is therefore equal to $n$. 
If $T$ is identical to $T'$, then $m=n$. However note that since $n_i\ne n$, we have $m\ne n_i$ and thus agent $A$ cannot transit from state {\tt explorer} to state {\tt tower} in round $r$ according to Algorithm Byz-Unknown-Size which is a contradiction with the definition of $r$. 

Therefore $T$ is necessarily different from $T'$. There are only two possible incidents that can lead to such a situation, according to the definition of $T$. The first one is that at some point during the exploration of agent $A$, its token vanished i.e, there is a round during the exploration when there are not $f+1$ agents at node $u$ that claim being in state {\tt token} of phase $i$. The other one is that at some point agent $A$ confused its token with another token i.e., it encountered during its exploration a group of at least $f+1$ agents at a node $v\ne u$ that claimed being in state {\tt token} of phase $i$. However note that any execution of procedure $EST'$ consists of alternating periods of two different types. The first type corresponds to periods when the agent processes a node and the second type corresponds to those when the agent moves to the next node to process it. During the periods of the second type, an agent does not use any token to move: it follows the same path regardless of whether it meets some token or not on its path. Hence, each of the two possible incidents describe above can only have an impact on the BFS tree only if they occur during a period of the first type when verifying whether a node has to be rejected or not. So denote by $t$ the first round in the construction of $T$ via procedure $EST'$ when agent $A$ adds a node to its BFS tree $T$ under construction that has to be rejected, or when agent $A$ rejects a node that has to be added to its BFS tree $T$ under construction. This round necessarily exists as otherwise $T'$ could not be different from $T$ according to the above explanations. We consider the only two possible cases (each of them leading to a contradiction). Let $k$ be the number of labels in configuration $\rho_i$.

\begin{itemize}

\item{Case~1: in round $t$ agent $A$ rejects a node  $x$ that has to be added to its BFS tree $T$ under construction.}
This can occur only if agent $A$ encounters by round $t$ during its exploration a group of at least $f+1$ agents at a node $v\ne u$ that claim being in state {\tt token} of phase $i$. Among these agents there is necessary at least one good agent $B$. Denote by $t'$ the last round before round $t$ such that agent $A$ is not in state {\tt explorer}. According to Algorithm Byz-Unknown-Size, $t'\geq t-T(EST(n_i))-1$, and thus in round $t'$ agent $B$ is either in substate {\tt red} of phase $i$ (at node $v$), or in state {\tt token} of phase $i$ (at node $v$), or in state {\tt explorer} of phase $i$. Now denote by $t''$ the last round such that $t''\leq t'$ and such that agent $A$ or $B$ is in substate {\tt red} of phase $i$. Let us first assume that agent $B$ is in substate {\tt red} of phase $i$ in round $t''$: in this round agent $A$ is then either {\tt red} (at node $u$) or in state {\tt token} (at node $u$), and tests configuration $\phi_i$ in view of Algorithm Byz-Unknown-Size. So still according to Algorithm Byz-Unknown-Size, there are at least $k$ {\tt red} agents that claim testing configuration $\phi_i$ at node $v$ in round $t''$, while there are at least $f+1$ agents (which are {\tt red} or in state {\tt token}) that claim testing configuration $\phi_i$ at node $u$ in the same round $t''$. Hence there are at least $k+1$ good agents which test the same configuration $\phi_i$ in round $t''$ and such that each of them is either {\tt red} or in state {\tt token}. However, this implies that each of these good agents has its label in configuration $\phi_i$ (otherwise it would have been impossible for at least one of them to be in substate {\tt red} of phase $i$ or in state {\tt token} of phase $i$ in round $t''$ according to the rules of state {\tt setup}): there are then at least $k+1$ distinct labels in $\phi_i$ which contradicts the definition of $k$. Hence agent $B$ cannot be in substate {\tt red} of phase $i$ in round $t''$. Let us now consider that agent $A$ is in substate {\tt red} of phase $i$ in round $t''$: in this round agent $B$ is then either {\tt red} (at node $v$) or in state {\tt token} (at node $v$) or in state {\tt explorer} in round $t''$, and tests configuration $\phi_i$. If agent $B$ tests configuration $\phi_i$ and is either {\tt red} or in state {\tt token} in round $t''$, then similarly as above we can get a contradiction with the definition of $k$. If agent $B$ tests configuration $\phi_i$  and is in state {\tt explorer} in round $t''$, then it is either {\tt red} or in state {\tt token} at node $v$ in round $t''-T(EST(n_i))-1$, while agent $A$ is necessary {\tt red} at node $u$ in the same round in view of the definition of $t''$ and Algorithm Byz-Unknown-Size. Therefore similarly as above we can prove there are at least $k+1$ good agents which test the same configuration $\phi_i$ in round $t''-T(EST(n_i))-1$ and such that each of them is either {\tt red} or in state {\tt token} leading again to a contradiction with the definition of $k$. Hence Case~1 is impossible.    

\item{Case~2: in round $t$ agent $A$ adds a node to its BFS tree $T$ under construction that has to be rejected.} 
This can occur only if at some point during the exploration of agent $A$, there are not at least $f+1$ agents at node $u$ that claim being in state {\tt token} of phase $i$. Since agent $A$ adds a node $x$ to its BFS tree, we know that the execution of procedure $EST'$ by agent $A$ does not terminate in round $t$ as it remains to make at least the process of node $x$. From round $t$ on, if agent $A$ does not meet any group of at least $f+1$ agents that claim being in state {\tt token} of phase $i$ during the current execution of $EST'$, then at some point it has spent more than $T(EST(n_i))$ rounds executing $EST'$ in phase $i$ and transits to state {\tt failure} of phase $i$: we get a contradiction with that fact that agent $A$ enters state {\tt tower} of phase $i$ from state {\tt explorer} of phase $i$. So there is a round after round $t$, during its execution of $EST'$ in phase $i$, where agent $A$ meets a group of at least $f+1$ agents that claim being in state {\tt token} of phase $i$. This group is located either on a node different from node $u$ or on node $u$ (in which case a "new token" appears on node $u$ after the disappearance of the first one): by using similar arguments to those used above, we can also get a contradiction with the definition of $k$ in both these situations.
\end{itemize}

So, $n_i$ cannot be different from $n$, and thus the claim follows.

Now we can conclude the proof for the case where agent $A$ enters state {\tt tower} from state {\tt explorer} in round $r$. Indeed, according to Claim~1, the BFS tree $T$ computed by agent $A$ before transiting to state {\tt token} in round $r$ is of size $n$. However, according to Algorithm Byz-Unknown-Size agent $A$ can transit from state {\tt explorer} of phase $i$ to state {\tt token} of phase $i$ only if the size of $T$ is equal to $n_i$. Hence we necessarily have $n=n_i$ in this case.

To end the proof of this lemma, it remains to consider the case where agent $A$ enters state {\tt tower} from state {\tt token} in round $r$. According to Algorithm Byz-Unknown-Size, agent $A$ can make such a transition only if it previously transited from state {\tt explorer} of phase $i$ to state {\tt token} of phase $i$. Hence, still according to Algorithm Byz-Unknown-Size, agent $A$ computed a BFS tree $D$ having size $n_i$, while in state {\tt explorer} of phase $i$. However in view of the claim below, the size of $D$ is necessarily equal to $n$. Hence $n=n_i$, which proves the lemma also holds in the case where agent $A$ enters state {\tt tower} from state {\tt token} in round $r$.

{\bf Claim~2.} The size of $D$ is $n$.

If $n_i=n$ then the claim is true because agent $A$ can transit from state {\tt explorer} to state {\tt token} in round $r$ only if $m=n_i$. Concerning the case where $n_i\ne n$, similarly to what is done for this case in the proof of Claim~1, we can get the same contradictions, which proves the claim.

\end{proof}


\begin{lemma}
\label{lem:lem46}
In any round there is at most one tower.
\end{lemma}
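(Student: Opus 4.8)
The plan is to adapt the proof of Lemma~\ref{lem:lem4}, the only genuine new ingredient being that in Algorithm Byz-Unknown-Size an agent reaches state {\tt tower} with index $0$ not straight from state {\tt tower builder} but after a sojourn in the {\tt token}/{\tt explorer} subsystem. Assume for contradiction that in some round $r$ there are two distinct towers $\mathcal{D}_j$ and $\mathcal{T}_k$, and take $j\geq k$ without loss of generality. By Lemma~\ref{lem:lem42} pick a good agent $A\in\mathcal{D}_j$ that has been in state {\tt tower} since round $r-j$ and a good agent $B\in\mathcal{T}_k$ that has been in state {\tt tower} since round $r-k$; in particular $A$ (resp.\ $B$) has index $0$ in round $r-j$ (resp.\ $r-k$). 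Inspecting the rules of state {\tt tower}, index $0$ can only be reached coming from {\tt token} or {\tt explorer}, so $A$ enters state {\tt tower} from {\tt token} or {\tt explorer} in round $r-j$, and likewise $B$ in round $r-k$. Let $\phi_i$ be the configuration $A$ tests in round $r$ (the same one all through $[r-j,r]$, since state {\tt tower} stays within one phase). By Lemma~\ref{lem:lem45} this entry forces $n_i=n$, hence $i\geq\alpha$; and since neither $A$ nor $B$ declares gathering over before round $r$, Lemma~\ref{lem:lem43} gives that $B$ tests $\phi_i$ in round $r$ as well, hence throughout $[r-k,r]$.

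First I would trace both agents back to the instant they entered the subsystem. Its only entry point is substate {\tt red} of {\tt tower builder}, so there is a well-defined round $r_A<r-j$ in which $A$ transits from substate {\tt red} of phase $i$ to {\tt explorer} or {\tt token} at some node $w_A$, and similarly a round $r_B<r-k$ at a node $w_B$ for $B$. Applying Lemma~\ref{lem:lem44} at round $r_A$ (legitimate because $i\geq\alpha$): in round $r_A$ agent $B$ is either in state {\tt wait-for-a-tower} or transits from {\tt tower builder} to {\tt explorer}/{\tt token} at node $w_A$. The first option is impossible --- by Lemma~\ref{lem:lem43} it would be {\tt wait-for-a-tower} of phase $i$, from which $B$ can later reach only {\tt tower} or {\tt failure} of phase $i$ and then phase $i+1$, so $B$ can never be in {\tt tower builder} of phase $i$ again; since no state of the subsystem leads back to {\tt tower builder} of phase $i$ either, this contradicts the existence of $r_B$ (whether $r_B<r_A$ or $r_B>r_A$, and $r_B=r_A$ is incompatible with $B$ being in {\tt wait-for-a-tower} in round $r_A$). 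Hence $r_A=r_B=:r^\ast$ and $w_A=w_B=:w$, and since $r^\ast$ is the last round of the {\tt red} waiting period for both agents, both are {\tt red} at node $w$ in round $r^\ast$ and therefore read off the very same set $\mathcal{H}$.

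The remaining step is a deterministic timing computation inside the subsystem. Whatever the rank $j$ of an agent's label in $\mathcal{H}$ is (the smallest, the largest, or in between), the schedule of state {\tt token} (Cases~1 and~2) and state {\tt explorer} is arranged so that an agent that does reach state {\tt tower} spends exactly $2|\mathcal{H}|\cdot T(EST(n_i))$ rounds in the subsystem: the wait $2j\cdot T(EST(n_i))$ of {\tt token} Case~1, the fixed $2\,T(EST(n_i))$ rounds spent in state {\tt explorer} (the compensating wait makes this exact even when $EST'$ is truncated) and the wait $2(|\mathcal{H}|-j-1)\cdot T(EST(n_i))$ of {\tt token} Case~2 add up to the same total in every case, and the agent is back at (indeed never permanently leaves) node $w$ when it finally enters state {\tt tower}. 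Since $A$ and $B$ both reach state {\tt tower}, both started the subsystem in the same round $r^\ast$ at node $w$, and both work with the same $\mathcal{H}$ and the same $n_i$, they enter state {\tt tower} with index $0$ in the same round at node $w$; that round is $r-j$ for $A$ and $r-k$ for $B$, so $j=k$. From there both execute $EXPLO(n_i)$ out of $w$ in lockstep, so in round $r$ they occupy the same node with the same index $j=k$ --- that is, $\mathcal{D}_j=\mathcal{T}_k$, contradicting their distinctness.

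The step I expect to be the main obstacle is this last bookkeeping together with the trace-back: one must be sure that Lemma~\ref{lem:lem44} is invoked at the common round $r^\ast$ where the two agents leave {\tt tower builder} --- not at the round they enter {\tt tower}, where no analogue of Lemma~\ref{lem:lem44} is available --- and one must verify the invariant that the {\tt token}/{\tt explorer} itinerary has total length independent of the label's rank in $\mathcal{H}$ and never relocates the agent away from its {\tt tower builder} node. Everything else is a faithful transcription of the corresponding steps in the proof of Lemma~\ref{lem:lem4}.
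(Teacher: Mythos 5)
Your proof is correct and follows essentially the same route as the paper: reduce to two good agents that entered state {\tt tower} with index $0$ from the {\tt explorer}/{\tt token} subsystem, use Lemma~\ref{lem:lem45} to get $n_i=n$ (hence $i\geq\alpha$) so that Lemmas~\ref{lem:lem43} and~\ref{lem:lem44} force both agents to leave substate {\tt red} in the same round at the same node with the same set $\mathcal{H}$, and then conclude via the fixed $2|\mathcal{H}|\,T(EST(n_i))$-round schedule that they enter state {\tt tower} together and thus belong to one tower. The only cosmetic difference is that the paper pins down the common exit round by taking the last round either agent is {\tt red} (with a WLOG), whereas you apply Lemma~\ref{lem:lem44} at $A$'s exit round and rule out the {\tt wait-for-a-tower} alternative by the same transition-structure argument; both are sound.
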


\begin{proof}
Assume by contradiction there exists some round $r$ when there are two distinct towers $\mathcal{D}_j$ and $\mathcal{T}_k$. The members of $\mathcal{D}_j$ (resp. $\mathcal{T}_k$) are in state {\tt tower} and all have index $j$ (resp. index $k$). Note that in the case where $\mathcal{D}_j$ and $\mathcal{T}_k$ are at the same node, indexes $j$ and $k$ are different from each other, otherwise we would have $\mathcal{D}_j=\mathcal{T}_k$ according to the definition of a tower. In the other case, index $j$ is not necessarily different to index $k$.

By Lemma~\ref{lem:lem42}, there is a good agent $A\in\mathcal{D}_j$ (resp. $B\in\mathcal{T}_k$) which has been in state {\tt tower} since round $r-j$ (resp. round $r-k$). Since agent $A$ (resp. agent $B$) is in state {\tt tower} with index $j$ (resp. index $k$) in round $r$, agent $A$ (resp. agent $B$) enters state {\tt tower} in round $r-j$ (resp. round $r-k$) from state {\tt explorer} or {\tt token}. Let $\phi_i$ be the configuration tested by agent $A$ in round $r$. In round $r-j$, agent $A$ enters state tower of phase $i$ from state {\tt explorer} of phase $i$ or state {\tt token} of phase $i$. So, by Lemma~\ref{lem:lem45} we have $n_i=n$, and then we know that $i\geq\alpha$. Hence according to Lemma~\ref{lem:lem43} agent $B$ also tests configuration $\phi_i$ in round $r$, and according to Algorithm Byz-Unknown-Size it enters state {\tt tower} of phase $i$ from state {\tt explorer} or {\tt token} of phase $i$ in round $r-k$. Let $t$ be the last round before round $r$ such that agent $A$ or $B$  is {\tt red} in phase $i$. Without loss of generality, assume that in round $t$ agent $A$ is {\tt red} and tests configuration $\phi_i$. According to Algorithm Byz-Unknown-Size agent $A$ enters state {\tt explorer} or {\tt token} of phase $i$ from state {\tt tower builder} in round $t+1$ at some node $u$. By Lemmas~\ref{lem:lem43} and~\ref{lem:lem44}, in round $t+1$ agent $B$ is either in state {\tt wait-for-a-tower} of phase $i$ or also enters state {\tt explorer} or {\tt token} of phase $i$ from state {\tt tower builder} at node $u$. Let us first consider the first case in which agent $B$ is in state {\tt wait-for-a-tower} of phase $i$ in round $t+1$. In this case, we cannot have $r-k<t+1$ as it is impossible to transit directly or indirectly from state {\tt tower} of phase $i$ to state {\tt wait-for-a-tower} of phase $i$. So, we necessarily have $r-k\geq t+1$. However this implies that agent $B$ cannot enter state {\tt tower} of phase $i$ from state {\tt explorer} or {\tt token} of phase $i$ in round $r-k$ because it is impossible to reach (directly or indirectly) state {\tt explorer} or {\tt token} of phase $i$ from state {\tt wait-for-a-tower} of phase $i$. The first case is therefore impossible.
Concerning the second case, it implies that agents $A$ and $B$ computed the same set $H$ in round $t$ at the same node $u$ (i.e., during the last round of their waiting period as {\tt red} agents of phase $i$). Hence according to Algorithm Byz-Unknown-Size agents $A$ and $B$ enter together state {\tt tower} of phase $i$ in round $t+2|H|T(EST(n))+1=r-j=r-k$, and thus belong to the same tower in round $r$, which is a contradiction.
\end{proof}

\begin{lemma}
\label{lem:lem47}
If a good agent declares gathering is over at node $v$ in round $r$, then all good agents are at node $v$ in round $r$ and declare that gathering is over in round $r$.
\end{lemma}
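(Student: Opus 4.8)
The plan is to follow the proof of Lemma~\ref{lem:6} step by step, transposing each argument to the richer state space of Algorithm Byz-Unknown-Size. Suppose for contradiction that a good agent $A$ declares gathering is over at node $v$ in round $r$ while some good agent $B$ does not do so at $v$ in round $r$, and take $r$ to be the first round in which any agent declares gathering is over; then $B$ declares nothing before round $r$. By the rules of state {\tt tower} there is a tower $\mathcal{T}_{P(n_i)}$ at $v$ in round $r$, where $\phi_i$ is the configuration tested by $A$ in round $r$, and by Lemma~\ref{lem:lem42} this tower contains a good agent $C$ that has been in state {\tt tower} since round $r-P(n_i)$, hence entered {\tt tower} with index $0$. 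Since index $0$ cannot be reached from {\tt wait-for-a-tower} (which yields an index at least $1$), $C$ entered {\tt tower} from state {\tt token} or {\tt explorer}, so Lemma~\ref{lem:lem45} gives $n_i=n$; consequently $i\geq\alpha$ and Lemmas~\ref{lem:lem43} and~\ref{lem:lem44} are available for phase $i$. Note that every good member of $\mathcal{T}_{P(n_i)}$ declares in round $r$, so $B\notin\mathcal{T}_{P(n_i)}$, and in particular $B\neq C$.

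Next I would prove that in phase $i$ the only tower that can ever arise is $C$'s, and use this to pin down $B$. Let $t<r-P(n_i)$ be the last round in which $C$ is {\tt red} in phase $i$, at some node $u$; then $C$ enters {\tt explorer} or {\tt token} of phase $i$ from {\tt tower builder} at $u$ in round $t+1$, having computed a set $\mathcal{H}$ of at most $n_i$ labels. Consider any tower present at a round of $\{t+1,\dots,r-1\}$: by Lemma~\ref{lem:lem42} it has a good member that entered {\tt tower} with index $0$, hence from {\tt token}/{\tt explorer}, so by Lemma~\ref{lem:lem45} it tests a size-$n$ configuration, which by Lemma~\ref{lem:lem43} must be $\phi_i$; then Lemma~\ref{lem:lem44} applied to $C$ forces that member to have been {\tt red} at node $u$ in round $t$ together with $C$, hence to have computed the same set $\mathcal{H}$, and the deterministic timetable of the {\tt token}/{\tt explorer} subphase (as in the proof of Lemma~\ref{lem:lem46}) makes it enter {\tt tower} exactly in round $t+2|\mathcal{H}|T(EST(n_i))+1=r-P(n_i)$ at node $u$. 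Thus any such tower is $C$'s tower, so no tower exists in rounds $\{t+1,\dots,r-P(n_i)-1\}$. Applying Lemmas~\ref{lem:lem43} and~\ref{lem:lem44} at round $t+1$: either $B$ is itself {\tt red} at $u$ in round $t$, in which case the above puts $B$ in $C$'s tower $\mathcal{T}_{P(n_i)}$ — a contradiction — or $B$ is in state {\tt wait-for-a-tower} of phase $i$ in round $t+1$. In the latter case, since no tower is visible to $B$ before round $r-P(n_i)$, $B$ remains in {\tt wait-for-a-tower} until round $r-P(n_i)$ unless its timeout expires earlier, in which case $B$ transits to {\tt failure} of phase $i$ and, since it still tests $\phi_i$ in round $r$ by Lemma~\ref{lem:lem43}, is still in {\tt failure} of phase $i$ in round $r$ — a case I defer to the timing argument below.

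Assuming $B$ reaches round $r-P(n_i)$ in state {\tt wait-for-a-tower}, I would run the analogue of the end of the proof of Lemma~\ref{lem:6} over the window $\{r-P(n_i),\dots,r\}$, during which $C$ remains continuously inside a tower (else $C$ would enter {\tt failure} before round $r$). If $B$ stays in {\tt wait-for-a-tower} throughout this window, then, because $C$'s traversal visits every node while always carrying a tower, $B$ must be located at $v$ in round $r$ (otherwise $C$ meets $B$ at an earlier round and $B$ joins a tower), whence the presence of $\mathcal{T}_{P(n_i)}$ forces $B$ to declare at $v$ in round $r$ — a contradiction. If $B$ enters {\tt tower} at some round of the window, then by the ``only $C$'s tower'' claim and Lemma~\ref{lem:lem46} it joins $C$'s tower and, moving in lockstep with $C$, lies in $\mathcal{T}_{P(n_i)}$ in round $r$ — again a contradiction. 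The only remaining possibility is that $B$ enters {\tt failure} of phase $i$ in this window, and, $B$ being never in state {\tt tower}, {\tt explorer} or {\tt token} of phase $i$ and never transiting {\tt red}$\to${\tt failure}, it does so from {\tt wait-for-a-tower} after exhausting its full timeout $7T(EXPLO(n_i))+4n_i+(2n_i+1)T(EST(n_i))+4Q_i$. In this case, and in the leftover case of the previous paragraph, Lemma~\ref{lem:lem41} and Proposition~\ref{prop:prop41} show that by round $r$ agent $B$ has spent in phase $i$ at least the {\tt setup} traversal of $EXPLO(n_i)$ plus this timeout plus one more round, whereas $C$ has spent in phase $i$ at most the full {\tt setup}, {\tt tower builder} and {\tt token}/{\tt explorer} durations plus $P(n_i)$; since both spent $Q_i$ rounds before phase $i$, a routine computation (using $Q_i\geq Q_\alpha$, together with $n_\alpha=n$ when $i=\alpha$ and the fact that phase $\alpha$ alone lasts more than $T(EXPLO(n_\alpha))$ rounds when $i>\alpha$) makes the delay between the wake-up rounds of $B$ and $C$ exceed $Q_\alpha+T(EXPLO(n_\alpha))$, contradicting Proposition~\ref{prop:prop42}. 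Since every case yields a contradiction, $B$ cannot exist.

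The main obstacle, relative to Lemma~\ref{lem:6}, is concentrated in the second paragraph. Because $C$ now enters {\tt tower} from {\tt explorer}/{\tt token} rather than from {\tt tower builder}, there is no direct analogue of Lemma~\ref{lem:lem5} to apply at round $r-P(n_i)$, so one has to climb back to $C$'s last {\tt red} round and exploit the rigid schedule of the {\tt token}/{\tt explorer} subphase; and, more delicately, one must rule out a \emph{spurious} tower — one populated mostly by Byzantine agents — forming before round $r-P(n_i)$ and capturing $B$, which is precisely where Lemma~\ref{lem:lem45} (a tower joined from {\tt token}/{\tt explorer} certifies the true size $n$) is indispensable alongside Lemma~\ref{lem:lem44}. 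Once these points are secured, the timing bookkeeping, now carrying the extra $T(EST(\cdot))$ and $Q_\alpha$ terms and invoking Proposition~\ref{prop:prop42} in place of Proposition~\ref{prop:prop1}, is mechanical.
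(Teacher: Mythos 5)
Your proposal follows the paper's skeleton quite closely (tower $\mathcal{T}_{P(n_i)}$ at $v$ in round $r$, a good agent $C$ inside it since round $r-P(n_i)$ that entered {\tt tower} from {\tt token}/{\tt explorer}, Lemma~\ref{lem:lem45} to certify $n_i=n$, then Lemmas~\ref{lem:lem43} and~\ref{lem:lem44} to place $B$ either alongside $C$ when it leaves {\tt tower builder} or in {\tt wait-for-a-tower}, and a final timing contradiction via Proposition~\ref{prop:prop42}), and your explicit ``only $C$'s tower can exist, and only from round $r-P(n_i)$ on'' claim is a legitimate repackaging of what the paper obtains through Lemma~\ref{lem:lem46} and the deferred Lemma~\ref{lem:6}-style argument. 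However, there is a genuine gap in the branch where $B$ is {\tt red} at node $u$ with $C$ in round $t$. You dismiss it in one clause (``the above puts $B$ in $C$'s tower''), but the deterministic timetable you invoke only says that $B$ enters {\tt tower} at round $t+2|\mathcal{H}|T(EST(n_i))+1$ \emph{provided it does not transit to state {\tt failure} in the meantime}, and nothing you have proved excludes that: during the {\tt token}/{\tt explorer} subphase $B$ can fail individually, e.g.\ as an {\tt explorer} whose simulation of $EST'$ rejects or adds a wrong node (possibly misled by Byzantine ``tokens''), exceeds $T(EST(n_i))$ rounds, or returns to fewer than $f+1$ tokens, or as a {\tt token} at $u$ in a round where the group momentarily drops below $f+1$ while $C$ is away exploring. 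Your concluding case analysis explicitly assumes ``$B$ being never in state {\tt tower}, {\tt explorer} or {\tt token} of phase $i$'', so this scenario is covered nowhere in the proposal.

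Ruling it out is precisely the most delicate part of the paper's proof: it is a four-way subcase analysis on the states of $B$ and $C$ at the moment $B$ would fail. If both are {\tt token} at $u$, then $C$ fails too, contradicting that $C$ reaches {\tt tower}; if $B$ is {\tt explorer} while $C$ is {\tt token}, one must show $B$'s BFS tree can only go wrong by wrongly rejecting a node, which yields a contradiction with the definition of $k$ exactly as in Claim~1 of the proof of Lemma~\ref{lem:lem45}; both being {\tt explorer} simultaneously contradicts the common set $\mathcal{H}$; and if $B$ is {\tt token} while $C$ is {\tt explorer}, then $C$'s token has vanished and one must argue $C$ can never again see $f+1$ tokens, hence cannot enter {\tt tower} in round $r-P(n_i)$. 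None of this is mechanical timing bookkeeping, and your proposal does not contain it. (A smaller gloss: Lemma~\ref{lem:lem45} applies to the configuration tested by $C$, not by $A$; you need Lemma~\ref{lem:lem43} first to identify the two, as the paper does, before concluding $n_i=n$ — this is easily repaired, unlike the missing failure analysis.)
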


\begin{proof}
Assume by contradiction there is a good agent $A$ that declares gathering is over at node $v$ in round $r$ but there is a good agent $B$ that does not make such a declaration at the same node and in the same round. Without loss of generality, we assume round $r$ is the first round when an agent declares gathering is over. Thus, agent $B$ does not declare gathering is over before round $r$. Let $\phi_i$ be the configuration tested by agent $A$ in round $r$. Let $k$ be the number of labels in configuration $\phi_i$.

According to Algorithm Byz-Known-Size, there is a tower $\mathcal{T}_{P(n_i)}$ in round $r$ at node $v$: all agents belonging to $\mathcal{T}_{P(n_i)}$ are in state {\tt tower} and have index $P(n_i)$ (which corresponds to the number of edge traversal of procedure $EXPLO(n_i)$). By Lemma~\ref{lem:lem42}, there is a good agent $C$ inside of tower $\mathcal{T}_{P(n_i)}$ which has been in state {\tt tower} since round $r-P(n_i)$. Thus agent $C$ is in state {\tt tower} and has index $0$ in round $r-P(n_i)$. This implies that agent $C$ enters state {\tt tower} from state {\tt explorer} or {\tt token} in round $r-P(n_i)$. 

Let $\phi_j$ be the configuration tested by agent $C$ in round $r-P(n_i)$. By Lemma~\ref{lem:lem45}, we have $n_j=n$, and thus $j\geq\alpha$. Moreover according to Algorithm Byz-Unknown-Size, agent $C$ still tests configuration $\phi_j$ in round $r$. Since agent $A$ tests configuration $\phi_i$ in round $r$ and $j\geq\alpha$, by Lemma~\ref{lem:lem43} we know $i=j$. Hence all good agents in $\mathcal{T}_{P(n_i)}$ in round $r$ test configuration $\phi_i$, which implies all good agents of $\mathcal{T}_{P(n_i)}$ declare that gathering is over in round $r$ according to Algorithm Byz-Unknown-Size. Agent $B$ also tests configuration $\phi_i$ in round $r$ but it is not in state {\tt tower} of phase $i$, as otherwise it would belong to $\mathcal{T}_{P(n_i)}$ by Lemma~\ref{lem:lem46} and thus it would also declare that gathering is over in round $r$, which contradicts the definition of agent $B$.

As mentioned above we have $j\geq\alpha$, $i=j$ and $n=n_j$. Hence $i\geq\alpha$ and $n_i=n$.

According to Lemmas~\ref{lem:lem43} and~\ref{lem:lem44}, when agent $C$ enters state {\tt explorer} of phase $i$ or state {\tt token} of phase $i$ from state {\tt tower builder} of phase $i$ in round $r-P(n)-2|H|*T(EST(n))$ at some node $u$ (for the definition of $H$, refer to the description of substate {\tt red} in Algorithm Byz-Unknown-Size), then in the same round agent $B$ either also enters state {\tt explorer} of phase $i$ or state {\tt token} of phase $i$ from state {\tt tower builder} of phase $i$ at node $u$, or is in state {\tt wait-for-a-tower} of phase $i$.

Let us first consider the first situation. In this situation, agent $B$ determined the same set $H$ as agent $C$ in the last round as a {\tt red} agent in phase $i$. According to Algorithm Byz-Unknown-Size, agent $B$ also enters state {\tt tower} of phase $i$ at the same node and in the same round as agent $C$, or agent $B$ enters state {\tt failure} of phase $i$ by round $r-P(n)$. In the former case, agent $B$ and $C$ then belong to tower $\mathcal{T}_{P(n_i)}$ in round $r$ and agent $B$ declares that gathering is over in round $r$, which is a contradiction. The latter case in which agent $B$ is in state {\tt failure} of phase $i$ by round $r-P(n)$ requires a deeper analysis. According to Algorithm Byz-Unknown-Size, for each round of $\{r-P(n)-2|H|*T(EST(n)),\cdots,r-P(n)-1\}$ agent $C$ is either in state {\tt token} of phase $i$ or in state {\tt explorer} of phase $i$. If agent $B$ and $C$ are in state {\tt token} of phase $i$ when agent $B$ decides to transit to state {\tt failure} of phase $i$, then agent $C$ does the same thing according to Algorithm Byz-Unknown-Size and we get a contradiction with the fact that agent $C$ declares gathering is over in round $r$ as an agent in state {\tt token} of phase $i$ (as it is impossible to reach state {\tt tower} of phase $i$ once in state {\tt failure} of phase $i$). If agent $B$ is explorer while agent $C$ is {\tt token} when agent $B$ decides to transit to state {\tt failure} then that implies agent $B$ computes a BFS tree of a size smaller than $n_i=n$. (Indeed agent $B$ can compute a BFS tree of a size larger than $n_i$ only if at some point its token disappears, but that implies that agent $C$ enters state {\tt failure} of phase $i$ and as above we get a contradiction with the definition of $C$. For the same reasons, agent $B$ cannot spent more than $T(EST(n_i))$ rounds to compute the BFS tree or decide to transit to state {\tt failure} after having backtracked to its token i.e., after having traversed all edges traversed during the execution of procedure $EST'$ in the reverse order and the reverse direction). The only reason is that at some point agent $B$ rejects a node $x$ that has to be added to its BFS tree under construction: similarly as in the proof of Claim~1 in the proof of Lemma~\ref{lem:lem45} we can get a contradiction with the definition of $k$. If agents $A$ and $B$ are in state {\tt explorer} of phase $i$ in the same round we get a contradiction with the fact that they computed the same set $H$ before leaving state {\tt tower builder} of phase $i$. It remains to analyse the case in which agent $B$ is {\tt token} while agent $C$ is {\tt explorer} when agent $B$ decides to transit to state {\tt failure} of phase $i$ in some round $t<r-P(n_i)$. According to Algorithm Byz-Unknown-Size the token of agent $C$ disappears in round $t$. Moreover similarly as in the proof of Claim~1 in the proof of Lemma~\ref{lem:lem45}, we can argue that from round $t$ on, agent $C$ cannot meet a group of at
least $f+1$ agents that claim being in state {\tt token} of phase $i$. Hence when agent $C$ verifies in round $r-P(n_i)-1$ if it is with a group of at
least $f+1$ agents that claim being in state {\tt token} of phase $i$ (just before entering state {\tt tower} in round $r-P(n_i)$), the condition cannot be fulfilled and thus agent $C$ cannot enter state {\tt tower} in round $r-P(n_i)$, which is a contradiction. Since we get a contradiction in all cases, the first situation cannot occur.

Let us now consider the second situation in which agent $B$ is in state {\tt wait-for-a-tower} of phase $i$ in round $r-P(n)-2|H|*T(EST(n))$, while $C$ enters state {\tt explorer} of phase $i$ or state {\tt token} of phase $i$ from state {\tt tower builder} of phase $i$ in the same round. By using similar arguments to those used in the latter two paragraphs of the proof of Lemma~\ref{lem:6}, we can distinguish the case in which agent $B$ enters state {\tt failure} of phase $i$ in some round of $\{r-P(n)-2|H|*T(EST(n))+1,\cdots,r\}$ from the case in which it does not, and argue that we obtain a contradiction in both cases. Hence agent $B$ does not exist and the lemma holds.

\end{proof}

\begin{lemma}
\label{lem:48}
There is at least one good agent that ends up declaring that gathering is over.
\end{lemma}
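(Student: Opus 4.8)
## Proof Plan for Lemma 4.8

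The plan is to mirror the structure of the proof of Lemma~\ref{lem:7} from the known-size case, adapting it to account for the two new states {\tt explorer} and {\tt token} and the fact that the correct graph size is only discovered in the course of execution. I would argue by contradiction, assuming no good agent ever declares gathering is over. Let $\phi_i$ be a \emph{good} configuration lying in the enumeration $\Omega$, meaning it correctly represents the initial configuration of all good agents in the actual $n$-node graph; such a configuration exists since $\mathcal{Q}$ is recursively enumerable and contains every configuration with at least $f+2$ distinctly-labelled agents, and we have at least $f+2$ good agents by hypothesis. Note that for a good configuration we have $n_i = n$ and hence $i \geq \alpha$, so Lemmas~\ref{lem:lem43} and~\ref{lem:lem44} apply from phase $i$ onward. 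Write $k \geq f+2$ for the number of labels in $\phi_i$.

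The first step is to establish the analogue of Claim~1 from the proof of Lemma~\ref{lem:7}: since $\phi_i$ is good, using Proposition~\ref{prop:prop42}, Lemma~\ref{lem:lem41}, and the fact that at least one good agent is woken by the adversary, every good agent eventually reaches state {\tt tower builder} of phase $i$ at the common node $v$ where the smallest label of $\phi_i$ initially resides, and moreover the starting-round spread among good agents in phase $i$ is bounded by $Q_\alpha + T(EXPLO(n_\alpha))$, which the {\tt yellow} waiting period $T(EXPLO(n_i))+n_i+Q$ is designed to absorb. I would then argue, exactly as in the known-size proof, that if some good agent left {\tt yellow} prematurely it would have seen $k$ {\tt orange} agents, forcing (via Lemma~\ref{lem:lem43} and $k \geq f+2$) a good agent to have entered {\tt tower builder} too early, contradicting minimality. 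Hence all $k$ good agents are simultaneously {\tt yellow} at $v$ testing $\phi_i$, then all become {\tt orange} together, then all become {\tt red} together, and they all remain {\tt red} in phase $i$ through the full {\tt red} waiting period.

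The second and genuinely new step is to trace the $k$ good agents through the {\tt explorer}/{\tt token} machinery. When the {\tt red} period ends, all $k$ good agents compute the same set $\mathcal{H}$ (which consists exactly of their own labels, since $\phi_i$ is good and no Byzantine agent can be {\tt red} testing $\phi_i$ without its label being in $\phi_i$ — but a Byzantine agent forging a good label could still contribute, so I must be careful: $\mathcal{H}$ contains all $k$ good labels and possibly more, but $|\mathcal{H}| \le n_i$ is needed). Here one should observe that since $\phi_i$ is good and there are at most $f$ Byzantine agents, and since $k \ge f+2 \ge f+1$, the group of good agents is large enough to sustain a token of size $\ge f+1$ throughout every simulation of $EST'$; hence each good agent designated as explorer correctly recognizes its token, never confuses it, never loses it, and therefore computes a BFS tree of exactly $n_i = n$ nodes and does \emph{not} transit to {\tt failure}. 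The staggered waiting periods in state {\tt token} (Cases~1 and~2, with $2j \cdot T(EST(n_i))$ and $2(|\mathcal{H}|-j-1)\cdot T(EST(n_i))$ rounds) are precisely calibrated so that all good agents complete their explorer simulations and reconvene at $v$ in the same round, after which the one with the largest label in $\mathcal{H}$ transits to {\tt tower} and the rest follow, forming a tower of $\ge f+1$ agents at index $0$.

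The final step is routine: the $k$ good agents, now all in state {\tt tower} at $v$ with index $0$, execute $EXPLO(n_i)$ together; at every round the group of $\ge f+1$ good agents satisfies the invariant required by state {\tt tower}, so none transits to {\tt failure}, and at index $P(n_i)$ the agent with the largest label in $\mathcal{H}$ declares gathering is over — contradicting the assumption. I expect the main obstacle to be the second step: carefully verifying that the token never drops below $f+1$ members and that the interleaving of the $|\mathcal{H}|$ explorer-rounds via the {\tt token} waiting counters leaves \emph{all} good agents simultaneously present at $v$ when the last explorer returns, so that the tower can actually form. This requires bookkeeping of exactly how many rounds each good agent spends as explorer versus token, and checking that the bound $|\mathcal{H}| \le n_i$ holds so that the {\tt red}-to-{\tt explorer}/{\tt token} transition is taken rather than {\tt failure}; the argument that $|\mathcal{H}|$ cannot exceed $n_i$ when $\phi_i$ is good follows because all $\ge f+1$ agents claiming to be {\tt red} testing $\phi_i$ at $v$ must have their labels in $\phi_i$ (by the rules of state {\tt setup}), and $\phi_i$ has only $n_i$ nodes hence at most $n_i$ labels.
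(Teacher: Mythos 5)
Your proposal follows essentially the same route as the paper's proof: argue by contradiction on the good configuration $\phi_i$ (so $n_i=n$ and $i\ge\alpha$), establish the analogue of Claim~1 so that all good agents are simultaneously {\tt yellow}, then {\tt orange}, then {\tt red} at the node $v$ of the smallest label, observe that the at least $k-1\ge f+1$ good agents in state {\tt token} keep every good explorer's token alive and unambiguous so that each good explorer computes a BFS tree of size exactly $n$ without transiting to {\tt failure}, do the bookkeeping over $\mathcal{H}$ (the paper does it by induction on the rank in $\mathcal{H}$, showing all good agents enter state {\tt tower} together in round $s+2|\mathcal{H}|T(EST(n))$), and conclude with the joint execution of $EXPLO(n)$ and the declaration. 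One detail the paper treats explicitly and you gloss over: a label of $\mathcal{H}$ (including the smallest or the largest) may be a forged Byzantine label, in which case no good agent occupies that explorer slot; the timing still works because each good agent's waits in state {\tt token} depend only on its own rank in $\mathcal{H}$, which is exactly the case analysis ("otherwise $A_1$ does not exist'') carried out in the paper.

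One justification you give is off, although the fact itself is correct and easily repaired: to get $|\mathcal{H}|\le n_i$ you argue that every agent claiming to be {\tt red} while testing $\phi_i$ must have its label in $\phi_i$ by the rules of state {\tt setup}. That rule only constrains good agents; a Byzantine agent may claim to be {\tt red} in phase $i$ while exhibiting an arbitrary label, including one outside $\phi_i$, so your premise fails. The paper's argument is simpler and immune to this: since the adversary places all agents at pairwise distinct nodes, there are at most $n$ agents in total, hence at most $n=n_i$ distinct labels can be exhibited at $v$ in a single round, which gives $|\mathcal{H}|\le n_i$ directly.
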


\begin{proof}
Assume by contradiction no agent ends up declaring gathering is over. Let $\phi_i\in\Omega$ a good configuration which corresponds to the initial configuration of all good agents in the graph. We then have $n_i=n$. Let $k$ be the number of labels in configuration $\phi_i$: by definition $k\geq f+2$. Since configuration $\phi_i$ is good and $i\geq\alpha$, in view of Proposition~\ref{prop:prop42} and Lemma~\ref{lem:lem41} and the fact that at least one good agent is woken up by the adversary, every good agent reaches state {\tt tower builder} of phase $i$ possibly in different rounds but at the same node $v$ corresponding to the node where the agent having the smallest label is initially located. Similarly as in proof of Lemma~\ref{lem:7} we can prove the following claim (by adjusting the waiting period accordingly and by using Propositions~\ref{prop:prop41} and~~\ref{prop:prop42} as well as Lemmas~\ref{lem:lem41} and~\ref{lem:lem43} instead of respectively Propositions~\ref{prop:prop2} and~\ref{prop:prop1} as well as Lemmas~\ref{lem:lem1} and~\ref{lem:lem3}). Round $r$ is the first round when a good agent becomes {\tt yellow} in phase $i$.

{\bf Claim~1.} All good agents are {\tt yellow} at node $v$ and test configuration $\phi_i$ in round $r+T(EXPLO(n))+n+Q_i-1$.

From Claim~1 and Algorithm Byz-Unknown-Size, the $k$ good agents become {\tt red} together before round $r+2T(EXPLO(n))+2n+2Q_i$ and decide in the same round to leave state {\tt tower builder} (to enter state {\tt tower} or {\tt explorer}) after having waited exactly $T(EXPLO(n))+n + T(EST(n)) + Q_i$ rounds as {\tt red} agents of phase $i$. During the last round of this waiting period, all good agents compute the same set $H$ which includes each of their labels (and possibly at most $f$ forged labels of Byzantine agents). Note that $H$ cannot be larger than $n=n_i$ as otherwise that would imply that there are more agents than nodes in the graph. Hence $H\leq n$ and each of the $k$ good agents enters state {\tt explorer} or {\tt token} in the same round, call it $s$, at node $v$. Since set $H$ is the same for all good agents and no two of them have the same labels, in round $s$ there is exactly at most one good agent $A_1$ that enters state {\tt explorer} of phase $i$, while the other good agents enter state {\tt token} of phase $i$. In fact if the smallest label in $H$ corresponds to a label of a good agent there is exactly one good agent $A_1$ that enters state {\tt explorer} of phase $i$, otherwise $A_1$ does not exist and all the good agents enters state {\tt token} of phase $i$. According to the rules of state {\tt token}, in each round of $\{s,s+1,\cdots,s+2T(EST(n))-1\}$ there are at least $f+1$ agents in state {\tt token} of phase $i$ at node $v$ as there are at least $k-1\geq f+1$ good agents in state {\tt token} of phase $i$. As a result, during the execution of procedure $EST'$ by agent $A_1$ as an {\tt explorer} (if it exists), its ``token never disappears'' and it cannot see another token (i.e., a group of at least $f+1$ agents in state {\tt token} of phase $i$) at a node $u\ne v$ because there are at most $f$ Byzantine agents and all the good agents different from explorer $A_1$ are at node $v$. Hence agent $A_1$ computes a BFS tree of size $n$ within $T(EST(n))$ rounds, and then enters state {\tt token} in round $s+2T(EST(n))$ after having backtracked to node $v$. 
Regardless of whether there was a good agent $A_1$ or not, in view of $H$ and the rules of states {\tt explorer} and {\tt token}, there is at most one good agent $A_2$ (which had not entered state {\tt explorer} of phase $i$ yet) that enters state {\tt explorer} of phase $i$ from state {\tt token} in round $s+2T(EST(n))$. More precisely, if the second smallest label in $H$ corresponds to a label of a good agent there is exactly one good agent $A_2$ that enters state {\tt explorer} of phase $i$ in round $s+2T(EST(n))$, otherwise $A_2$ does not exist and all good agents are in state {\tt token} of phase $i$ in round $s+2T(EST(n))$. Moreover, in each round of $\{s+2T(EST(n)),s+2T(EST(n))+1,\cdots,s+4T(EST(n))-1\}$ there are at least $f+1$ agents in state {\tt token} of phase $i$ at node $v$ as there are at least $k-1\geq f+1$ good agents in state {\tt token} of phase $i$ (recall that if there was a good agent $A_1$, it is now in state {\tt token}). Thus, similarly as before, we know that if agent $A_2$ exists, it computes a BFS tree of size $n$ within $T(EST(n))$ rounds, and is at node $v$ with the other good agents in state {\tt token} in round $s+4T(EST(n))-1$ at the latest. From this point, regardless of whether there was a good agent $A_2$ or not, if $H=2$ (by definition $H\geq 2$) then all the good agents enters state {\tt tower} of phase $i$ (including $A_2$ if any) in round $s+4T(EST(n))$ according to Algorithm Byz-Unknown-Size. Otherwise, in view of $H$ and the rules of states {\tt explorer} and {\tt token}, there is at most one good agent $A_3$ (which had not entered state {\tt explorer} of phase $i$ yet) that enters state {\tt explorer} of phase $i$ from state {\tt token} in round $s+4T(EST(n))$, while agent $A_2$ (if any) enters the state in which the other good agents are i.e., state {\tt token} of phase $i$. Again via similar arguments, we know that if $H=3$ then all the good agents enters state {\tt tower} (including $A_3$ if any) in round $s+6T(EST(n))$ according to Algorithm Byz-Unknown-Size. Otherwise, by induction on the number of labels in $H$ we can prove that all good agents enter state {\tt tower} at node $v$ in round $s+2|H|T(EST(n))$. Hence, we know the $k$ good agents will make together an entire execution of procedure $EXPLO(n)$: according to the rules of state {\tt tower}, agent $A$ declares gathering is over at the end of this execution. So, we get a contradiction, and the lemma holds.
\end{proof}

From Lemmas~\ref{lem:lem47} and~\ref{lem:48}, we know that Algorithm Byz-Unknown-Size and Theorem~\ref{theo:theo6} are valid.

\section{Conclusion}

We provided a deterministic $f$-Byzantine gathering algorithm for arbitrary connected graphs of known size (resp. unknown size) provided that the number of good agents is at least $f+1$ (resp. $f+2$). By providing these algorithms, we closed the open question of what minimum number of good agents $\mathcal{M}$ is required to solve the problem, as each of our algorithms perfectly matches the corresponding lower bound on $\mathcal{M}$ stated in \cite{DieudonnePP14}, which is of $f+1$ when the size of the network is known and of $f+2$ when it is unknown. Our work also highlighted the fact that the ability for the Byzantine agents to change their labels has no impact in terms of feasibility when the size of the network is initially unknown, since it was proven in \cite{DieudonnePP14} that $\mathcal{M}$ is also equal to $f+2$ when the Byzantine agents do not have this ability.

While we gave algorithms that are optimal in terms of required number of good agents, we did not try to optimize their time complexity. Actually, the time complexity of both our solutions depends on the enumerations of the initial configurations, which clearly makes them exponential in $n$ and the labels of the good agents in the worst case. Hence, the question of whether there is a way to obtain algorithms that are polynomial in $n$ and in the labels of the good agents (with the same bounds on $\mathcal{M}$) remains an open problem.

\bibliographystyle{plain}
\bibliography{byzantin}

\end{document}